\definecolor{goldenrod}{RGB}{218, 165, 32}
\DeclareMathOperator{\val}{val}
\DeclareMathOperator{\Inf}{Inf}
\DeclareMathOperator{\NBin}{NBin}
\DeclareMathOperator*{\Exp}{\mathbb{E}}
\newcommand{\vx}{\ensuremath{\mathbf{x}}}
\newcommand{\vy}{\ensuremath{\mathbf{y}}}
\newcommand{\valpha}{\ensuremath{{\alpha}}}
\newcommand{\ve}{\ensuremath{\mathbf{e}}}
\newcommand{\va}{\ensuremath{\mathbf{a}}}
\renewcommand{\S}{\ensuremath{\mathbb{S}}}
\newcommand{\N}{\ensuremath{\mathcal{N}}}
\renewcommand{\R}{\ensuremath{\mathbb{R}}}
\newcommand{\UGC}{{\sf UGC}}
\theoremstyle{plain}
\newtheorem{theorem}{Theorem}
\newtheorem{lemma}[theorem]{Lemma}
\newtheorem*{lemma*}{Lemma}
\newtheorem{proposition}[theorem]{Proposition}
\newtheorem*{proposition*}{Proposition}
\newtheorem{corollary}[theorem]{Corollary}
\newtheorem*{corollary*}{Corollary}
\newtheorem{conjecture}[theorem]{Conjecture}
\theoremstyle{definition}
\newtheorem{definition}[theorem]{Definition}
\pgfplotsset{compat=1.18}
\renewcommand{\epsilon}{\varepsilon}
\begin{document}

\author{Tamio-Vesa Nakajima\\
University of Oxford\\
\texttt{tamio-vesa.nakajima@cs.ox.ac.uk}
\and
Stanislav \v{Z}ivn\'y\\
University of Oxford\\
\texttt{standa.zivny@cs.ox.ac.uk}
}

\title{Maximum $k$- vs. $\ell$-colourings of graphs\thanks{This work was supported by UKRI EP/X024431/1 and by a Clarendon Fund Scholarship. For the purpose of Open Access, the authors have applied a CC BY public copyright licence to any Author Accepted Manuscript version arising from this submission. All data is provided in full in the results section of this paper.}}

\date{\today}
\maketitle

\begin{abstract}

We present polynomial-time SDP-based algorithms for the following problem: For fixed $k \leq \ell$, given a real number $\epsilon>0$ and a graph $G$ that admits a $k$-colouring with a $\rho$-fraction of the edges coloured properly, it returns an $\ell$-colouring of $G$ with an $(\alpha \rho - \epsilon)$-fraction of the edges coloured properly in polynomial time in $G$ and $1 / \epsilon$. Our algorithms are based on the algorithms of Frieze and Jerrum [Algorithmica'97] and of Karger, Motwani and Sudan [JACM'98].

When $k$ is fixed and $\ell$ grows large, our algorithm achieves an approximation ratio of $\alpha = 1 - o(1 / \ell)$.
When $k, \ell$ are both large, our algorithm achieves an approximation ratio of $\alpha = 1 - 1 / \ell + 2 \ln \ell / k \ell - o(\ln \ell / k \ell) - O(1 / k^2)$; if we fix $d = \ell - k$ and allow $k, \ell$ to grow large, this is $\alpha = 1 - 1 / \ell + 2 \ln \ell / k \ell - o(\ln \ell / k \ell)$.

By extending the results of Khot, Kindler, Mossel and O'Donnell [SICOMP'07] to the promise setting, we show that for large $k$ and $\ell$, assuming Khot's Unique Games Conjecture (\UGC), it is \NP-hard to achieve an approximation ratio $\alpha$ greater than $1 - 1 / \ell + 2 \ln \ell / k \ell + o(\ln \ell / k \ell)$, provided that $\ell$ is bounded by a function that is $o(\exp(\sqrt[3]{k}))$. For the case where $d = \ell - k$ is fixed, this bound matches the performance of our algorithm up to $o(\ln \ell / k \ell)$. Furthermore, by extending the results of Guruswami and Sinop [ToC'13] to the promise setting, we prove that it is 
\NP-hard to achieve an approximation ratio greater than $1 - 1 / \ell + 8 \ln \ell / k \ell + o(\ln \ell / k \ell)$, provided again that $\ell$ is bounded as before (but this time without assuming the \UGC).

\end{abstract}

\section{Introduction}

The three most studied objectives in approximation algorithms 
are to maximise the number of satisfied
constraints, to minimise the number of unsatisfied constraints,
and to find a solution that satisfies a
$(1-f(\epsilon))$-fraction of the constraints given an instance in which a
$(1-\epsilon)$-fraction of the constraints is satisfiable, where $f$ is some
function satisfying $f\to 0$ as $\epsilon\to 0$ and not depending on the
input size.\footnote{This notion of tractability, coined \emph{robust
solvability}, was introduced by Zwick~\cite{Zwick98}.}
All three objectives are examples of a
\emph{quantitative} approximation. Another approach to approximation is a
\emph{qualitative} approximation, which insists on satisfying \emph{all}
constraints but possibly in a weaker form. A canonical example of this is the
\emph{approximate graph colouring (AGC)} problem~\cite{GJ76}: Given a $k$-colourable
graph, find an $\ell$-colouring, where $k\leq\ell$.
In this work, we shall combine the two approaches. In particular, we are
interested in the following type of problems: Given a graph in which a large
fraction of edges can be properly $k$-coloured, can we find an $\ell$-colouring
of it with a good fraction of the edges properly coloured? Our main result is an
efficient algorithm for this problem and showing its optimality in many cases.

\medskip
Given a graph $G=(V,E)$ and $k\in\mathbb{N}$, a $k$-colouring of $G$ is an assignment
$c:V\to\{1,\ldots,k\}$ of colours to the vertices of $G$. The value $\rho_k(c)$ of a $k$-colouring  
$c$ is the fraction of properly coloured edges:
\[\rho_k(c)\ =\ \frac{|\{(u,v)\in E\mid c(u)\neq c(v)\}|}{|E|}\,.\]
A $k$-colouring $c$ is called \emph{proper} if $\rho_k(c)=1$; i.e., if no edge is monochromatic under $c$.
We denote by $\rho_k(G)$ 
the largest value of $\rho_k(c)$ over all $k$-colourings $c$ of $G$:
\[\rho_k(G)\ =\ \max_{c:V\to\{1,\ldots,k\}}\rho_k(c)\,.\]
Testing whether $\rho_k(G)=1$ is the same as determining whether $G$ admits a proper
$k$-colouring; this problem is \NP-hard for $k\geq 3$, as shown by
Karp~\cite{Karp1972}, and solvable in
polynomial time for $k=1,2$. Observe also that $\rho_k(G)$ is the (fractional) size of the largest $k$-cut of $G$ --- indeed in the context of maximisation, $k$-cut is the same as $k$-colouring.

Given a graph $G$, say with $\rho=\rho_2(G)<1$ (since the case $\rho_2(G)=1$ is
solvable exactly efficiently), the celebrated result of Goemans and
Williamson uses a semidefinite programming (SDP) relaxation, equivalent to an eigenvalue minimisation problem proposed earlier by Delorme and Poljak~\cite{Delorme93:ejc,Delorme93:mp}, to design a polynomial-time randomised algorithm that finds a $2$-colouring $c$ of $G$ with
$\rho_2(c)\geq \alpha_{\text{GW}}\rho$~\cite{GW95}, where
$\alpha_{\text{GW}}\approx 0.87856$. Their algorithm was later derandomised by Mahajan and Ramesh~\cite{Mahajan99:sicomp}.
On the hardness side, the work of H\r{a}stad~\cite{Hastad01} and Trevisan, Sorkin, Sudan, and
Williamson~\cite{Trevisan00:sicomp} showed that obtaining a $2$-colouring $c$ with
$\rho_2(c)\geq \alpha\rho$ is \NP-hard for any $\alpha\geq 16/17+\epsilon$ for
an arbitrarily small $\epsilon>0$. Note that $16/17\approx 0.94117$ and thus there is a gap between $\alpha_{\text{GW}}$ and $16/17$.
However, under Khot's influential \emph{Unique Games Conjecture} (\UGC)~\cite{Khot02stoc}, Khot, Kindler, Mossel and O'Donnell showed that finding
a $2$-colouring $c$ with $\rho_2(c)\geq \alpha\rho$ is \NP-hard for any
$\alpha\geq\alpha_{\text{GW}}+\epsilon$~\cite{KKMO07},\footnote{The results
in~\cite{KKMO07} was initially conditional on the ``majority is stablest'' conjecture, later proved by Mossel, O’Donnell, and Oleszkiewicz~\cite{Mossel10:ann}.} thus showing that the algorithm of Goemans and Williamson~\cite{GW95} is
optimal (up to an arbitrarily small additive constant).
In fact, the Goemans-Williamson algorithm finds, given a graph $G$ with
$\rho_2(G)=1-\epsilon$, a 2-colouring $c$ of $G$ with
$\rho_2(c)=1-O(\sqrt{\epsilon})$~\cite{GW95}. Moreover, the dependence on
$\epsilon$ is \UGC-optimal~\cite{KKMO07}.

What about colourings with more than two colours? Building on the work of
Goemans and Williamson~\cite{GW95}, Frieze and Jerrum~\cite{FJ97} provided
an SDP-based algorithm for approximating $\rho_k(G)$ for every $G$ and constant $k\geq 2$.
Asymptotic optimality of this algorithm for large $k$ (up to an arbitrarily small additive constant) was shown by Khot,
Kindler, Mossel, and O'Donnell under the \UGC~\cite{KKMO07}, as we will discuss
in more detail later.
All the results mentioned so far are
concerned with quantitative approximation.
We now turn to qualitative approximation. 

Let $G$ be a graph that can be properly $k$-coloured; i.e., $\rho_k(G)=1$. Is it possible to find efficiently a proper $\ell$-colouring of $G$ for some constant
$k\leq\ell$? Garey and Johnson conjectured that this problem is \NP-hard as long
as $k\geq 3$~\cite{GJ76}. For $k=3$, \NP-hardness is known for
$\ell=3$~\cite{Karp1972}, $\ell=4$~\cite{KLS00,GK04}, and
$\ell=5$~\cite{BBKO21}; the case of $\ell\geq 6$ is open. For $k\geq 4$, better bounds are known~\cite{KOWZ23}. However, \NP-hardness has
been shown for all constant $3\leq k\leq\ell$ under stronger assumptions.
Namely, under a non-standard variant of the Unique Games Conjecture
by Dinur, Mossel, and Regev~\cite{Dinur09:sicomp},
under the $d$-to-1 conjecture of Khot~\cite{Khot02stoc} (for any fixed $d$)
by Guruswami and Sandeep~\cite{GS20:icalp}, and
under the rich $2$-to-$1$ conjecture of Braverman, Khot, and Minzer~\cite{Braverman21:itcs}
by Braverman, Khot, Lifshitz, and Minzer in~\cite{Braverman21:focs}.

\medskip
We now combine the quantitative and qualitative approaches. 
Given a graph $G$ of value $\rho_k(G)$, what is the largest
$0<\alpha\leq 1$ so that $\rho_\ell(G)$ can be $\alpha$-approximated?
It is not hard to show 
that, for $3\leq k\leq\ell$, a $1$-approximation is at least as hard as approximate graph colouring, cf. Section~\ref{app:alpha1}.
For $k = 2, \ell = 3$, one can get an approximation ratio of 1, via a deterministic algorithm~\cite{nz24:arxiv-bipartite}.
For $\alpha < 1$, not much is known other than what follows immediately from the already mentioned previous work: The algorithm from~\cite{FJ97} gives an $\alpha$-approximation with $\alpha\geq 1-1/\ell+(1+\epsilon(\ell))(2\ln \ell/\ell^2)$, 
where $\epsilon(\ell)\to 0$ as $\ell\to\infty$,
and for $\ell=k\geq 2$ this algorithm is \UGC-optimal (up to an arbitrarily small additive constant)~\cite{KKMO07}.
However, the situation is unclear for general $k$ and $\ell$.

\paragraph{Contributions}
We initiate a  systematic investigation of promise maximum colouring, i.e., $k$- vs. $\ell$-colourings.
As our first result, we extend the algorithm of Frieze and Jerrum~\cite{FJ97} to
work for $k$- vs. $\ell$-colourings. We analyse the power of the algorithm 
for $k \leq \ell$ as $k, \ell \to \infty$.
\begin{restatable}{theorem}{main}\label{thm:main}
Fix $2 \leq k \leq \ell$. There is a randomised algorithm which, given a graph $G$
  that admits a $k$-colouring of value $\rho$ and a real number $\epsilon>0$, finds an $\ell$-colouring of expected
  value $\alpha_{k\ell}\rho - \epsilon$ in polynomial time in $G$ and $\log (1 / \epsilon)$. In particular,
  \begin{enumerate}
      \item\label{thm:main:bound1} We have that
      \[ \alpha_{k\ell} \geq 1 - \frac{1}{\ell} + \frac{2 \ln \ell}{k \ell} - o\left(\frac{\ln \ell}{k\ell}\right) - O\left(\frac{1}{k^2}\right).\]
      \item\label{thm:main:bound2} Moreover, we have that $\alpha_{k\ell} > 1 - 1 / \ell$,
      hence the algorithm is better than random guessing for $\rho$ near 1.\footnote{The $o(\cdot), O(\cdot)$ notation hides only terms and factors dependant on $k, \ell$, not on $\epsilon$.}
  \end{enumerate}
\end{restatable}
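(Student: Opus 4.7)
The plan is to run a standard SDP-and-rounding strategy inspired by the Frieze-Jerrum algorithm~\cite{FJ97}, but with a mismatch between the SDP (encoding $k$-colourability) and the rounding (producing an $\ell$-colouring).

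First, I would write down the vector relaxation that assigns unit vectors $\{x_v\}_{v\in V}\subseteq \mathbb{R}^n$ subject to the edge constraints $\langle x_u, x_v\rangle \geq -1/(k-1)$ for every $(u,v)\in E$. Any proper $k$-colouring yields a feasible solution via the regular $(k-1)$-simplex embedding, so the objective $\tfrac{k-1}{k|E|}\sum_{(u,v)\in E}(1-\langle x_u, x_v\rangle)$ attains value at least $\rho$. This SDP can be solved to additive accuracy $\epsilon$ in time polynomial in $|G|$ and $\log(1/\epsilon)$ by standard interior-point methods, accounting for the $-\epsilon$ slack in the statement.

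Second, I would round using $\ell$ (rather than $k$) Gaussian vectors following Frieze-Jerrum: sample $g_1,\ldots,g_\ell$ i.i.d.\ standard Gaussian in $\mathbb{R}^n$ and set $c(v) := \arg\max_i \langle g_i, x_v\rangle$. By rotational invariance, if $\langle x_u, x_v\rangle = \tau$ then $\Pr[c(u)\ne c(v)]$ depends only on $\tau$ and $\ell$; call this $P_\ell(\tau)$. The expected fraction of properly coloured edges equals $\tfrac{1}{|E|}\sum_{(u,v)\in E} P_\ell(\tau_{uv})$, giving expected approximation ratio at least
\[
\alpha_{k\ell} := \min_{\tau \in [-1/(k-1),\,1]} \frac{P_\ell(\tau)}{\tfrac{k-1}{k}(1-\tau)}.
\]
The crucial difference from the classical Frieze-Jerrum setting ($k = \ell$) is that the constraint $\tau \geq -1/(k-1)$ confines the minimisation to a strictly smaller interval when $k < \ell$, which is where the improved bound comes from.

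Finally, I would establish the two claimed bounds on $\alpha_{k\ell}$. For claim~\ref{thm:main:bound2}, I would use $P_\ell(0) = 1 - 1/\ell$ together with a monotonicity or concavity argument on $\tau \mapsto P_\ell(\tau)/(1-\tau)$ to conclude $\alpha_{k\ell} > 1 - 1/\ell$ strictly over the feasible interval (the factor $\tfrac{k-1}{k}$ in the denominator contributes the strict improvement over random guessing). For claim~\ref{thm:main:bound1}, I expect the minimiser to sit at the extreme point $\tau^* = -1/(k-1)$ for sufficiently large $k, \ell$; substituting and expanding $P_\ell(\tau^*)$ asymptotically as $\ell \to \infty$ should yield the leading term $1 - 1/\ell$ together with the $(2\ln\ell)/(k\ell)$ correction and the $O(1/k^2)$ error. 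The main obstacle will be this asymptotic analysis: since $P_\ell$ has no closed form, it requires careful Gaussian-integral manipulations and Mills-ratio-style tail bounds, generalising the Frieze-Jerrum expansion (which is carried out at $\tau = -1/(\ell-1)$) to the $k$-dependent point $-1/(k-1)$, together with separately verifying that this is indeed where the minimum is attained.
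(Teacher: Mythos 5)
Your algorithmic part is exactly the paper's: solve the Frieze--Jerrum SDP with the $k$-colouring constraint $\langle x_u,x_v\rangle\ge -1/(k-1)$, round with $\ell$ Gaussians via $\arg\max$, and define $\alpha_{k\ell}$ as the minimum over $a\in[-1/(k-1),1)$ of the ratio between the separation probability and $\tfrac{k-1}{k}(1-a)$. The gap is in how you propose to prove bound~1. You plan to show that the minimiser sits at the endpoint $a=-1/(k-1)$ and then asymptotically expand the separation probability there. First, the endpoint claim is false in general (for $k=\ell=2$ the minimiser is the interior Goemans--Williamson point $a\approx -0.69$), and even in the regime where both $k,\ell$ are large it is exactly the delicate step you defer; you would need control of the derivative of $a\mapsto (1-N_\ell(a))/(1-a)$ (where $N_\ell(a)=\ell P_\ell(a)$ in the paper's notation) over the whole interval, which is comparable in difficulty to the bound itself. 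The paper never locates the minimiser: it imports from Frieze--Jerrum the facts that the Taylor coefficients of $N_\ell$ are nonnegative, that $c_0=1/\ell$, $c_1\sim 2\ln\ell/(\ell-1)$, and that the even coefficients sum to $1/2$, which give the clean upper bound $N_\ell(a)\le 1/\ell+(1+\epsilon(\ell))\tfrac{2\ln\ell}{\ell}a+\tfrac{a^2}{2}$ for $a\in[-1/(k-1),0]$; the endpoint $a=-1/(k-1)$ is then substituted only to bound linear-in-$a$ expressions from below in a chain of sufficient conditions, which is where the $-O(1/k^2)$ term (from $a^2/2$ at the endpoint) appears. Your route could in principle be made to work for large $k,\ell$, but as written it rests on an unproved (and not universally true) structural claim, whereas the known Taylor-coefficient facts make the estimate essentially bookkeeping.

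For bound~2, your sketch ``monotonicity or concavity of $P_\ell(\tau)/(1-\tau)$ plus the factor $\tfrac{k-1}{k}$'' does not yet contain the needed ingredient. On $[0,1)$ the ratio is $\ge 1$ by concavity of $1-N_\ell$ together with the endpoint values $N_\ell(0)=1/\ell$, $N_\ell(1)=1/\ell\cdot\ell$ (the paper's Lemma~\ref{lem:nonengativeBound}); but on $[-1/(k-1),0)$ the denominator $\tfrac{k-1}{k}(1-a)$ equals $1$ at the endpoint, so strictness cannot come from that factor alone: you must prove $N_\ell(a)<1/\ell$ for all $a\in[-1,0)$, i.e.\ that with negative correlation the two $\arg\max$'s agree strictly less often than at $a=0$. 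This is the paper's separate technical lemma, proved by conditioning on $x_1>\cdots>x_\ell$ and $y_1$ and comparing the product $\prod_{i\ge 2}\Pr[y_i<y_1-A(x_1-x_i)]$ term by term in $A=-a/\sqrt{1-a^2}$. Without this (or an equivalent monotonicity statement, proved rather than assumed), the strict inequality $\alpha_{k\ell}>1-1/\ell$ does not follow.
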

Observe that point 1 above does not imply that $\alpha_{k\ell} \geq 1$. Indeed, suppose that $\eta$ is the constant hidden in $O(1 / k^2)$ and that $\ell$ is large enough so that $\sqrt{\eta / \ell} > \ln \ell / \ell$. Then by the AM-GM inequality,
\[ 
1 - \frac{1}{\ell} + \frac{2 \ln \ell}{k\ell} - \frac{\eta}{k^2}
\leq 1 - \frac{2\sqrt{\eta}}{k \sqrt \ell} + \frac{2\ln \ell}{k \ell} < 1.
\]
{
\renewcommand{\arraystretch}{1.3}
\begin{table}
    \centering
    \rowcolors{2}{gray!15}{white}
    {
    \footnotesize
    \begin{tabular}{|c|ccccccccccccc|}
    \hline
         \backslashbox{$k$}{$\ell$} & 3 & 4 & 5 & 6 & 7 & 8 & 9 & 10 & 11 & 12 & 13 & 14 & 15 \\
         \hline
         3  &  .836  &  .904  &  .938  &  .957  &  .969  &  .976  &  .982  &  .985  &  .988  &  .990  &  .992  &  .993  &  .994  \\
4  & &  .858  &  .899  &  .924  &  .940  &  .952  &  .960  &  .967  &  .972  &  .975  &  .979  &  .981  &  .983  \\
5  & & &  .877  &  .904  &  .923  &  .936  &  .946  &  .954  &  .960  &  .964  &  .968  &  .972  &  .974  \\
6  & & & &  .892  &  .911  &  .926  &  .936  &  .945  &  .952  &  .957  &  .961  &  .965  &  .968  \\
7  & & & & &  .903  &  .918  &  .930  &  .938  &  .945  &  .951  &  .956  &  .960  &  .963  \\
8  & & & & & &  .913  &  .924  &  .934  &  .941  &  .947  &  .952  &  .956  &  .960  \\
9  & & & & & & &  .920  &  .930  &  .937  &  .944  &  .949  &  .953  &  .957  \\
10  & & & & & & & &  .927  &  .935  &  .941  &  .946  &  .951  &  .954  \\
11  & & & & & & & & &  .932  &  .939  &  .944  &  .949  &  .953  \\
12  & & & & & & & & & &  .937  &  .942  &  .947  &  .951  \\
13  & & & & & & & & & & &  .941  &  .946  &  .950  \\
14  & & & & & & & & & & & &  .944  &  .949  \\
15  & & & & & & & & & & & & &  .948  \\\hline
    \end{tabular}
    }
    \caption{Approximate values of $\alpha_{k\ell}$. As all values are between 0 and 1, we omit the leading 0.}
    \label{tab:table}
\end{table}
}
For illustration, we tabulated numerical approximate values for $\alpha_{k\ell}$ in Table~\ref{tab:table}.\footnote{The exact definition of $\alpha_{k\ell}$ is given in Definition~\ref{def:alphakl}. The probabilities $P_\ell(a)$ that appear in that definition were computed using the methods and R library from~\cite{AG18}.}
Our algorithm solves the Frieze-Jerrum SDP for $k$-colourings, then rounds like Frieze and Jerrum do for $\ell$-colourings~\cite{FJ97}. 
We largely follow the analysis from~\cite{FJ97}.

We will also show how to derandomise our algorithm.

\begin{restatable}{theorem}{maindet}\label{thm:maindet}
    Fix $2 \leq k \leq \ell$ and let $\alpha_{k\ell}$ be as in Theorem~\ref{thm:main}.
    There is a deterministic algorithm which, given a graph $G$ that admits a
    $k$-colouring of value $\rho$
    and a real number $\epsilon
    >0$, finds an $\ell$-colouring of value $\alpha_{k\ell} \rho - \epsilon$ in polynomial time in $G$ and $1 / \epsilon$.
\end{restatable}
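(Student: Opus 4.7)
The plan is to derandomize the algorithm of Theorem~\ref{thm:main} via the method of conditional expectations, following the template of Mahajan and Ramesh's derandomization~\cite{Mahajan99:sicomp} of Goemans-Williamson. The randomness in our algorithm is confined entirely to the rounding step: having solved the Frieze-Jerrum SDP and obtained unit vectors $\mathbf{v}_u$ for each vertex $u$, we sample $\ell$ independent uniform random unit vectors $r_1,\ldots,r_\ell$ and colour each $u$ by $\arg\max_{i} \langle \mathbf{v}_u, r_i\rangle$; by the analysis of Theorem~\ref{thm:main} the expected fraction of properly coloured edges is at least $\alpha_{k\ell}\rho$. The SDP relaxation itself is already deterministic (via standard polynomial-time SDP solvers up to additive error), so the whole task reduces to derandomizing this rounding.

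First I would express the expected objective as a sum over edges $(u,v)$ of $\Pr[\text{rounding separates $u$ and $v$}]$, a quantity that depends smoothly on $\langle \mathbf{v}_u,\mathbf{v}_v\rangle$ and on the $r_i$. I would then fix the coordinates of $r_1,\ldots,r_\ell$ one at a time, at each step choosing the next coordinate from a suitably fine grid on $\interval{-1}{1}$ so as to maximize a computable pessimistic estimator of the conditional expected objective. Standard Lipschitz bounds on the separation probabilities (as functions of the $r_i$) ensure that quantizing the search space to grid spacing $\mathrm{poly}(\epsilon/n)$ introduces total additive error at most $\epsilon$, which is exactly what the statement allows. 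This also explains why the running time becomes polynomial in $1/\epsilon$ rather than $\log(1/\epsilon)$ as in the randomized version.

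The main obstacle is the pessimistic estimator. Unlike in the two-colour Goemans-Williamson case, where the edge-separation probability has a clean closed form, here the Frieze-Jerrum rounding involves an order statistic of $\ell$ inner products, so the per-edge probability is a non-trivial multivariate integral that we cannot evaluate exactly. I would handle this by approximating each such probability to additive accuracy $\epsilon/|E|$ using numerical integration (or using the explicit formulas of~\cite{AG18} referenced for the $P_\ell(a)$ values in Definition~\ref{def:alphakl}), and verifying that this approximation is still a valid pessimistic estimator after a small downward shift. Combined with the grid discretization, this yields a deterministic procedure whose output value is within $\epsilon$ of the true conditional expectation at every step, and hence a deterministic $\ell$-colouring of value at least $\alpha_{k\ell}\rho - \epsilon$.
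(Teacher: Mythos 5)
Your overall strategy (derandomise only the rounding step via conditional expectations) is the same one the paper uses, but the paper does not redo a Mahajan--Ramesh-style argument from scratch: it reduces to a packaged derandomisation theorem from~\cite{nz24:arxiv-bipartite} (Theorem~\ref{thm:derandGaussian}), which deterministically finds a point $\vx^*$ nearly achieving the expectation of any sum of indicators of intersections of \emph{constantly many half-spaces} under a standard Gaussian. The only real work in the paper's proof is the structural observation that the per-edge success event ``$\arg\max_c \vx_c\cdot\va_i \neq \arg\max_c \vx_c\cdot\va_j$'' is a disjoint union of $\ell(\ell-1)$ intersections of $2(\ell-1)$ half-spaces in the variable $\vx=(\vx_1,\ldots,\vx_\ell)$, so that the theorem applies with $d=2(\ell-1)$ constant. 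Your proposal is missing exactly this reduction, and instead promises to build a pessimistic estimator by numerically integrating the order-statistic probabilities and then ``verifying'' validity after a downward shift; that verification is the actual crux of the whole derandomisation (it is what makes Mahajan--Ramesh technically heavy, particularly in the late stages where little randomness remains and the conditional objective is no longer smooth in the coordinate being fixed), and it is not established by what you wrote.

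There is also a concrete technical slip: you round with $\ell$ independent \emph{uniform unit} vectors and discretise coordinates over a grid on $[-1,1]$. The analysis of Theorem~\ref{thm:main}, and the very definition of $\alpha_{k\ell}$ via $P_\ell(a)$, are for i.i.d.\ Gaussian vectors; with uniform unit vectors the separation probabilities are not the defined $P_\ell(a)$, so the claimed expected value $\alpha_{k\ell}\rho$ does not follow. Worse, the coordinates of a uniform unit vector are not independent, so fixing them one at a time breaks the product structure that coordinate-wise conditional expectation relies on. If you switch to Gaussian vectors the independence is restored, but then you need truncation/tail arguments in place of the bounded grid, plus the continuity/estimator issues above --- all of which is precisely what Theorem~\ref{thm:derandGaussian} packages away once the event is written as intersections of constantly many half-spaces.
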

The proof of Theorem~\ref{thm:maindet} builds on a general derandomisation theorem from~\cite{nz24:arxiv-bipartite}, which in turn 
uses the method of conditional expectation.

The algorithm from Theorem~\ref{thm:maindet} has good performance when both $k$ and $\ell$ grow large. What if $k$ is fixed and only $\ell$ grows large? We give an algorithm that has  good performance  in this case as well.
The idea is based on an algorithm of Karger, Motwani and Sudan for approximate graph colouring~\cite[Section 6]{Karger98:jacm}, but rather than cutting by $\Theta(\log(n))$ random hyperplanes we cut with $\lfloor \log_2(\ell) \rfloor$ hyperplanes.

\begin{restatable}{theorem}{thmbigk}\label{thm:bigk}
Let $k > 2$ be fixed and $\ell\geq k$ be large. There is a deterministic algorithm which,  given a graph $G$ that admits a $k$-colouring  of value $\rho$ and a real number $\epsilon > 0$,  finds an $\ell$-colouring of $G$ of value $\alpha_{k\ell}' \rho - \epsilon$ in polynomial time in $G$ and $1 / \epsilon$. In particular, for a fixed $k$ there exists a constant $u_k > 1$ such that
\[
\alpha_{k\ell}' \geq 1 - O\left(1/ \ell^{u_k}\right),
\]
which is 
$1 - o(1 / \ell)$
when $\ell$ grows large.\footnote{For example, for $k = 3$ we have $u_k \approx 1.58$ and the approximation ratio is approximately $1 - O(1 / \ell^{1.58})$. This is significantly better, for large $\ell$, than the random guessing algorithm which has performance $1 - 1 /\ell$.}
\end{restatable}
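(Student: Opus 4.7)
The plan is to combine the Frieze--Jerrum vector $k$-colouring SDP with Karger--Motwani--Sudan-style hyperplane partitioning~\cite{Karger98:jacm}, but tuned so that the number of resulting regions is at most $\ell$. First I would solve the Frieze--Jerrum SDP for $k$-colouring $G$ (up to an additive error negligible compared to $\epsilon$), obtaining unit vectors $\{v_i\}_{i \in V}$ with $\langle v_i, v_j \rangle \geq -1/(k-1)$ for every $i,j$ and with normalised objective $\frac{k-1}{k|E|}\sum_{(i,j) \in E}(1 - \langle v_i, v_j \rangle)$ at least $\rho$. Then I would draw $r = \lfloor \log_2 \ell \rfloor$ independent uniform hyperplanes through the origin and assign a distinct colour to each of the $2^r \leq \ell$ resulting regions.

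The core of the analysis is a per-edge calculation. For $(i,j)\in E$ with $\theta_{ij} = \arccos \langle v_i, v_j \rangle$, all $r$ hyperplanes fail to separate $v_i$ from $v_j$ with probability $(1 - \theta_{ij}/\pi)^r$, so the edge is properly coloured with probability $1 - (1 - \theta_{ij}/\pi)^r$. Comparing against the per-edge SDP contribution $\frac{k-1}{k}(1 - \cos\theta_{ij})$ yields an approximation ratio
\[
\alpha'_{k\ell} \;=\; \min_{\theta \in (0, \theta_k]} \frac{1 - (1 - \theta/\pi)^r}{\frac{k-1}{k}(1 - \cos\theta)},
\]
where $\theta_k = \arccos(-1/(k-1))$ is the maximum angle admitted by the SDP constraint. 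At the boundary $\theta = \theta_k$ the denominator equals $1$, so the ratio there is $1 - \beta_k^r$ with $\beta_k := 1 - \theta_k/\pi$. Since $\theta_k > \pi/2$ for every fixed $k \geq 3$, one has $\beta_k < 1/2$, and setting $u_k := \log_2(1/\beta_k) > 1$ gives $\beta_k^r \leq O(\ell^{-u_k})$, matching the advertised boundary bound (and, for $k=3$, recovering $u_3 = \log_2 3 \approx 1.585$).

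The remaining step, and in my view the main technical obstacle, is to propagate the boundary estimate uniformly across $(0, \theta_k]$. As $\theta \to 0$ the ratio blows up (numerator $\sim r\theta/\pi$, denominator $\sim \tfrac{k-1}{2k}\theta^2$), so intuitively the minimum should sit at, or essentially at, $\theta_k$; this needs to be confirmed either by checking monotonicity of the function $f(\theta) = (1 - (1-\theta/\pi)^r)/[\tfrac{k-1}{k}(1-\cos\theta)]$ through its derivative, or by a direct pointwise comparison using the concavity of $x \mapsto 1 - (1-x)^r$ on $[0,1]$, in a way that keeps the exponent $u_k > 1$ intact. Finally I would derandomise the rounding using the conditional-expectations framework of~\cite{nz24:arxiv-bipartite}: since only $r = O(\log \ell)$ hyperplanes are used, each normal can be picked greedily from a sufficiently fine discretisation of the unit sphere, which costs $\mathrm{poly}(n, 1/\epsilon)$ time in exchange for the additive $\epsilon$-slack and delivers the deterministic algorithm of the theorem.
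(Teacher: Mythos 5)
Your construction and analysis follow the same route as the paper: the Frieze--Jerrum SDP for $k$ colours, rounding by $t=\lfloor\log_2\ell\rfloor$ random hyperplanes as in~\cite{Karger98:jacm}, the per-edge separation probability $1-(1-\arccos(a)/\pi)^t$, and the boundary value $1-X_k^t$ with $X_k=1-\arccos(-1/(k-1))/\pi<1/2$ and $u_k=-\log_2 X_k>1$. However, the step you explicitly defer --- showing that the minimum of the ratio over $a\in[-1/(k-1),1)$ is controlled by the boundary $a=-1/(k-1)$ --- is the actual technical content of the theorem, and as written your argument has a genuine gap there: the blow-up of the ratio as $\theta\to0$ says nothing about the interior of the interval. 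The paper closes this in two steps. First, near $a=1$ it exhibits $a_k<1$ with $\arccos(a)/\pi\geq(k-1)(1-a)/k$ on $(a_k,1]$ (the derivative of the difference tends to $-\infty$ as $a\to1^-$ and the difference vanishes at $1$), and combines this with $1-(1-x)^t\geq x$ to conclude the ratio is at least $1$ there, so this region is irrelevant. Second, on $[-1/(k-1),a_k]$ it computes
\[
(a-1)^2 f'(a)=1-\frac{t}{\pi}\sqrt{\frac{1-a}{1+a}}\left(1-\frac{\arccos(a)}{\pi}\right)^{t-1}-\left(1-\frac{\arccos(a)}{\pi}\right)^t,
\]
which is positive once $t$ is large enough (depending on $a_k$, hence on $k$), because $1-\arccos(a)/\pi$ is bounded away from $1$ on that interval; hence $f$ is increasing and the minimum sits at $a=-1/(k-1)$, giving exactly your boundary formula. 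This ``$t$ large relative to $k$'' requirement is admissible precisely because the theorem is stated for fixed $k$ and large $\ell$. Your alternative concavity idea can in fact be completed without losing constants: since $x\mapsto1-(1-x)^t$ is concave and vanishes at $0$, the chord bound gives $1-(1-\theta/\pi)^t\geq(\theta/\theta_k)\bigl(1-X_k^t\bigr)$ for $\theta\in(0,\theta_k]$ with $\theta_k=\arccos(-1/(k-1))$, and one then needs $\theta/(1-\cos\theta)\geq\theta_k/(1-\cos\theta_k)$, i.e.\ that $\theta\mapsto\theta/(1-\cos\theta)$ is decreasing on $(0,\theta_k]$, which holds because $\tan(\theta/2)\leq\theta$ there (note $\theta_k\leq2\pi/3$); but this verification is exactly what is missing from your write-up.

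A secondary issue is the derandomisation. Picking each hyperplane normal ``from a sufficiently fine discretisation of the unit sphere'' is not polynomial in the ambient dimension $n$, so as stated this does not deliver the claimed running time. The paper instead realises each hyperplane by a pair of Gaussian vectors $\vx_j,\vy_j\sim\mathcal{N}(\mathbf{0},\mathbf{I}_n)$, notes that ``$\va_i$ lies on side $0$ of $H_j$'' is the half-space event $\va_i\cdot\vx_j\geq\va_i\cdot\vy_j$ in the joint Gaussian vector, writes ``edge $(i,j)$ properly cut'' as a disjoint union of at most $2^t\times2^t$ conjunctions of $2t$ such half-space events ($t$ is a constant for fixed $\ell$), and then invokes Theorem~\ref{thm:derandGaussian}, exactly as in the proof of Theorem~\ref{thm:maindet}. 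You should either adopt that reduction or give a concrete conditional-expectation procedure whose cost is genuinely $\mathrm{poly}(n,1/\epsilon)$.
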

%
Of course, by running the algorithms of Theorem~\ref{thm:maindet} and Theorem~\ref{thm:bigk} in parallel and then taking the better of the two results, we can get an algorithm that is at least as good as either of them.

We now turn to hardness results.
Using the framework of Khot, Kindler, Mossel, and O'Donnel~\cite{KKMO07}, we
  will show that,
  under the \UGC,
  it is \NP-hard to beat the approximation guarantee of our algorithm from Theorem~\ref{thm:maindet} by more than a constant that grows small for any large $k, \ell$ with $k\leq\ell$ and $\ell$ bounded by a function that is $o(e^{\sqrt[3]{k}})$. We combine this with the methods of Guruswami and Sinop~\cite{Guruswami13:toc}\footnote{The results of~\cite{Guruswami13:toc} were, at the time, conditional on the 2-to-1 conjecture of Khot~\cite{Khot02stoc}; however this has recently been proved by Khot, Minzer and Safra~\cite{Khot23:annals}.}
  to also find some weaker (non-tight) unconditional results.\footnote{We thank Venkat Guruswami for bringing~\cite{Guruswami13:toc} to our attention.} We will present a unified version of these proofs, using ideas from the work of Dinur, Mossel and Regev~\cite{Dinur06:STOC} (on which~\cite{Guruswami13:toc} also draws). The way the unification of these two proofs works out is also similar to the work of Guruswami and Sandeep~\cite{GS20:icalp}.
\begin{restatable}{theorem}{hardness}\label{thm:hardness}
    Fix some function $M(k) = o(e^{\sqrt[3]{k}})$.
    Let $2 \leq k \leq \ell$ be such that $\ell \leq M(k)$. For any small enough $\epsilon > 0$, consider the problem of deciding whether a given graph $G$ admits a $k$-colouring of value $1 - \epsilon$, or not even an $\ell$-colouring of value $\beta + \epsilon$. We have the following.
    \begin{itemize}
        \item Assuming the \UGC, the problem is \NP-hard for \[\beta = \beta_{k\ell} = 1 - \frac{1}{\ell} + \frac{2 \ln \ell}{k \ell} + o\left(\frac{\ln \ell}{k \ell}\right).\]
        \item Unconditionally, the problem is \NP-hard for \[\beta = \beta_{k\ell}'= 1 - \frac{1}{\ell} + \frac{8 \ln \ell}{k \ell} + o\left(\frac{\ln \ell}{k \ell}\right).\]
    \end{itemize}
    Both of these results only hold when $\beta_{k\ell}, \beta'_{k\ell} \in (0, 1)$.\footnote{The constants hidden in the
    expression defining $\beta_{k\ell}$ depend on $M(k), k, \ell$, but not on $\epsilon$.}
\end{restatable}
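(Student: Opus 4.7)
The plan is to reduce from Unique Games (for the \UGC-based bound) and from $2$-to-$1$ Games (for the unconditional bound, invoking the proof of the $2$-to-$1$ conjecture by Khot, Minzer and Safra~\cite{Khot23:annals}), following the long-code/dictator-test framework of Khot, Kindler, Mossel and O'Donnell~\cite{KKMO07} adapted to the promise setting in the spirit of~\cite{Dinur06:STOC,Guruswami13:toc,GS20:icalp}. Given a Unique Games instance $\Phi$ with label set $[R]$, I would construct a weighted graph $G$ on vertex set $V(\Phi) \times [k]^R$; edges between the long-code blocks of UG-adjacent vertices $u,v$ are weighted according to a correlated distribution on $[k]^R \times [k]^R$ twisted by the constraint $\pi: [R] \to [R]$. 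The single-coordinate marginal on $[k] \times [k]$ would be the standard KKMO noise, chosen so that dictator colourings $x \mapsto x_{\ell(u)}$ are proper $k$-colourings and the effective noise parameter driving the Gaussian analysis is $\rho \approx -1/(k-1)$.

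The completeness step is then routine: if $\Phi$ admits a $(1-\epsilon)$-good labelling, colouring each block by the corresponding dictator gives a $k$-colouring of value $1 - O(\epsilon)$. For soundness, I would assume $G$ admits an $\ell$-colouring of value greater than $\beta + \epsilon$, view it block by block as a function $[k]^R \to [\ell]$, and invoke the invariance principle together with the generalised Majority-is-Stablest theorem of Mossel, O'Donnell and Oleszkiewicz~\cite{Mossel10:ann}. This yields a dichotomy: either some block has a coordinate of non-negligible influence, in which case we can decode a Unique Games (respectively $2$-to-$1$ Games) labelling of non-negligible value, contradicting the hardness assumption; or no such coordinate exists, in which case the average noise stability at parameter $\rho$ of the blocks is bounded above by the Gaussian noise stability of an arbitrary partition of $\mathbb{R}^{\ell-1}$ into $\ell$ equi-measured regions. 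A direct Borell-type computation bounds this Gaussian stability by $1 - 1/\ell + 2\ln\ell/(k\ell) + o(\ln\ell/(k\ell))$, yielding the conditional bound; the unconditional bound follows by running the same argument from a $2$-to-$1$ game, where the $2$-to-$1$ folding forces a stronger noise regime that inflates the leading correction to $8\ln\ell/(k\ell)$, exactly as in~\cite{Guruswami13:toc,GS20:icalp}.

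The main obstacle will be tracking the error terms so that the claimed asymptotics hold uniformly for $\ell \leq M(k) = o(e^{\sqrt[3]{k}})$. The invariance-principle slack and the deviation of the discrete stability from its Gaussian counterpart both depend on $\ell$ and on $1/|\rho|$; with $\rho = \Theta(1/k)$ the accumulated error scales roughly like $\mathrm{poly}(\ell) \cdot e^{\Theta(k)}$, and the hypothesis $\ell = o(e^{\sqrt[3]{k}})$ is precisely what is needed to absorb this slack into the $o(\ln\ell/(k\ell))$ correction around the Gaussian leading term $2\ln\ell/(k\ell)$. Calibrating these error budgets and presenting the UG-based and $2$-to-$1$-based reductions in a unified form (as the theorem's symmetric structure suggests, and as Guruswami and Sandeep~\cite{GS20:icalp} do for approximate graph colouring) is the main technical work; once done, the three ingredients---reduction, invariance, and Gaussian stability computation---combine in a straightforward fashion, and the side condition $\beta_{k\ell}, \beta'_{k\ell} \in (0,1)$ is just the regime in which the stated bound is non-trivial.
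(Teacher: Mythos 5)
Your plan follows essentially the same route as the paper: a KKMO-style long-code/dictatorship test driven by the noise operator $T_{-1/(k-1)}$, composed with Unique Games for the conditional bound and with $2$-to-$1$ games (via Khot--Minzer--Safra) for the unconditional one, presented in a unified way much as in Guruswami--Sandeep; the completeness and influence-decoding steps you outline match the paper's reduction. However, two places where you wave your hands are exactly where the technical content of the proof lies. First, in the soundness analysis you cannot simply compare against ``an arbitrary partition into $\ell$ equi-measured regions'': after Majority-is-Stablest one is left with arbitrary colour-class measures $\mu_1,\dots,\mu_\ell$ summing to $1$, and the needed bound amounts to minimising $\sum_i \mu_i^2\bigl(1 - T\ln(1/\mu_i)\bigr)$ over the simplex with $T\approx 2c/(k-1)$. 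Showing that the (near-)balanced point is essentially extremal is a genuine lemma; the paper proves it separately and points out that the convexity claim implicitly used for this step in KKMO and Guruswami--Sinop is false, so it cannot just be cited. This is also where the hypothesis $\ell\le M(k)=o(e^{\sqrt[3]{k}})$ actually enters: one needs $\ell<e^{1/T}$ for that minimisation, together with the condition $\rho\le 1/\ln^3(1/\mu)$ in the KKMO estimate for $\Lambda_\rho(\mu)$ (whence the cube root), producing error terms like $\ell e^{-k/3c}$ and $\ell e^{-\sqrt[3]{k-1}}$; it is not an invariance-principle slack of order $\mathrm{poly}(\ell)\,e^{\Theta(k)}$ as you write (as stated that quantity would diverge and could absorb nothing).

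Second, for the unconditional bound the constant $8$ does not arise from a vaguely ``stronger noise regime forced by $2$-to-$1$ folding'': it comes from replacing $T_{-1/(k-1)}$ (spectral radius $1/(k-1)$, giving $2\ln\ell/(k\ell)$) by the specific Guruswami--Sinop symmetric Markov operator on $[k^2]$ whose spectral radius is at most $4/(k-1)$, giving $2c=8$. Moreover, for the PCP completeness one needs this operator to be \emph{colourful} (any transition uses disjoint colour sets on the two $[k]$-coordinates), which is strictly stronger than the zero-diagonal property stated in Guruswami--Sinop's lemma; the paper makes this explicit, and your sketch leaves both the operator and the origin of the constant unspecified. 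With these two ingredients supplied --- the simplex minimisation lemma and the colourful operator with spectral radius $4/(k-1)$ --- your outline coincides with the paper's proof.
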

The \NP-hardness bound in Theorem~\ref{thm:hardness} is limited due to the fact that, for a fixed $k$, we cannot have $\ell$ arbitrarily large. This is intrinsic to the expression above: for a large $\ell$ and a fixed $k$ we have $\beta_{k\ell} > 1$. Moreover, any \NP-hardness bound for a fixed $k$ and a large $\ell$ must take into account the algorithm with approximation ratio $1 - o(1 / \ell)$ we gave in Theorem~\ref{thm:bigk}.
The proof of Theorem~\ref{thm:hardness} can be found in Section~\ref{sec:hardness}.

Finally, we present a simple reduction from approximate graph colouring with
perfect completeness (i.e.~fix $k \leq \ell$; then given a graph $G$, output yes
if $G$ is $k$-colourable, and no if $G$ is not even $\ell$-colourable) to
1-approximating $k$- vs.~$\ell$-colouring. A simple proof of the following can be found in Section~\ref{app:alpha1}.

\begin{restatable}{proposition}{alphaone}\label{prop:alpha1}
    Fix $3\leq k\leq\ell$ and some rational $\rho \in \interval[open left]{0}{1}$. There is a
    log-space reduction from the problem of distinguishing $\rho_k(G)=1$ vs.
    $\rho_\ell(G)<1$ to the problem of distinguishing $\rho_k(G)\geq \rho$ vs.
    $\rho_\ell(G) < \rho$.
\end{restatable}

\paragraph{Evaluation of performance bounds} 
For $k = \ell$ we recover
the positive results of~\cite{FJ97} (and indeed our algorithm is the same as
that of~\cite{FJ97} for $k = \ell$) and the negative result of~\cite{KKMO07}.
In detail, we have that our algorithm from Theorem~\ref{thm:main} has performance $1 - 1 / \ell + 2 \ln \ell /
\ell^2 -o(\ln \ell/\ell^2)$ and that it is \NP-hard to do, under the \UGC, any
better than $1 - 1 / \ell + 2 \ln \ell / k\ell + o(\ln \ell /
\ell^2)$.\footnote{The negative result of~\cite{KKMO07} is slightly more
specific as their asymptotic error term is $O(\ln{\ln \ell  }/ \ell^2)$. A careful
inspection of our analysis shows that our error term is $O(1 / \ell^2 + \ln \ell \ln \ln k / \ell k \ln k)$, which for $\ell = k$ is precisely their $O( \ln \ln \ell / \ell^2)$.}
We also recover the unconditional (in light of~\cite{Khot23:annals}) result of~\cite{Guruswami13:toc}, i.e.~that it is \NP-hard to do any better than $1 - 1 / \ell + 8 \ln \ell / k\ell + o(\ln \ell / \ell^2)$.

For the fixed-gap case, i.e.~$\ell = k + d$ for some fixed $d \geq 0$, we get
the same type of result as for $k=\ell$: performance $1 - 1 / \ell + 2 \ln \ell /
k\ell - o( \ln \ell / k \ell)$ and \NP-hardness, under the \UGC, of $1 - 1 / \ell + 2 \ln \ell
/ k\ell + o(\ln \ell / k \ell)$, since in this case the $1 / k^2$ term is
strictly dominated by $\ln \ell / k \ell$.\footnote{As an example, this implies
that for large $k$ we can do $k$- vs. $(\ell = k+10)$-colourings with approximation ratio $1 - 1 / \ell + 1.999 \ln \ell / k\ell$, but it is \NP-hard under the \UGC~to do it with approximation ratio $1 - 1 / \ell + 2.001 \ln \ell / k\ell$.} Furthermore, unconditionally we find that an approximation ratio of $1 - 1 / \ell + 8 \ln \ell / k \ell + o(\ln \ell / k \ell)$ is \NP-hard to achieve. This unconditional result is not yet tight, since already the second order term is different.

For fixed $k$ and large $\ell$, the algorithm from Theorem~\ref{thm:bigk} has performance $1 - o(1 / \ell)$. This algorithm cannot be improved by more than $o(1 / \ell)$, since no algorithm can have approximation ratio greater than 1. We believe that the algorithm from Theorem~\ref{thm:maindet} is at least as strong as the algorithm from Theorem~\ref{thm:bigk} even for fixed $k$ and large $\ell$.

We note that the fact that our \NP-hardness bound in Theorem~\ref{thm:hardness} only works for $\ell$ bounded by some function of $k$ mirrors the current state-of-the-art for approximate graph colouring:  distinguishing proper $k$- vs. $\ell$-colourings is only known to be \NP-hard whenever $\ell \leq \binom{k}{\lfloor k / 2\rfloor} - 1$~\cite{KOWZ23}.

\paragraph{Related work}
Graph colouring is a canonical example a \emph{Constraint Satisfaction Problem}
(CSP)~\cite{Feder98:monotone,KSTW00}.
Robust solvability of CSPs was studied, among others, by Charikar, Makarychev,
and Makarychev~\cite{Charikar09:talg}, Guruswami and Zhou~\cite{GZ12:toc}, 
and Barto and Kozik~\cite{BK16:sicomp}. 
Raghavandra showed \UGC-optimality of the basic SDP 
programming relaxation for all CSPs~\cite{Raghavendra08:everycsp}.
The notion of an almost $k$-colouring (a large fraction of the graph being properly $k$-coloured) was recently studied by Hecht, Minzer, and Safra~\cite{Hecht23:approx}, who showed that finding an almost $k$-colouring of a graph that admits an almost $3$-colouring is \NP-hard for every constant $k$.
Austrin, O'Donnell, Tan and Wright showed \NP-hardness of distinguishing whether $\rho_3(G)=1$ or $\rho_3(G)<\frac{16}{17}+\epsilon$~\cite{Austrin14:toct}.

Approximate graph colouring is an example of  a \emph{Promise Constraint Satisfaction Problem} (PCSP)~\cite{AGH17,BG21:sicomp,BBKO21}.
Robust solvability of PCSPs has recently been investigated by
Brakensiek, Guruswami, and Sandeep~\cite{BGS23:stoc}.
Bhangale, Khot, and Minzer have recently studied approximability of certain Boolean PCSPs~\cite{Bhangale22:stoc,Bhangale23:stoc2,Bhangale23:stoc3}.

\section{Preliminaries}

For any positive integer $n$ let $[n] = \{ 1, \ldots, n \}$.
For any predicate $\phi$, we let $[\phi] = 1$ if $\phi$ is true, and $0$ otherwise.
We shall use semidefinite programming and refer the reader to~\cite{Gartner2012approximation} for a reference.

For an event $\phi$ we let $\Pr [\phi]$ be the probability that $\phi$ is true. For a random variable $X$, we let $\Exp[X]$ denote its expected value. Note that $\Exp[ [\phi]] = \Pr[\phi]$.

For any two distributions $\mathcal{D}, \mathcal{D}'$ with domains $A, A'$, we let $\mathcal{D} \times \mathcal{D}'$ denote the product distribution, whose domain is $A \times A'$.
For any distribution $\mathcal{D}$ over $\R$ and $a, b \in \R$, the distribution $a\mathcal{D} + b$ is the distribution of $aX + b$ when $X \sim \mathcal{D}$.
We use the standard probability theory abbreviations i.i.d.~(independent and identically distributed) and p.m.f.~(probability mass function).

We introduce a few classic distributions we will need. The uniform distribution $\mathcal{U}(D)$ over a finite set $D$ is the distribution with p.m.f.~$f : D \to [0, 1]$ given by $f(x) = 1 / |D|$. Note that $\mathcal{U}(D^n)$ is the same as ${\mathcal{U}(D)}^n$, a fact which we will use implicitly. We let $\NBin(n)$ denote a normalised binomial distribution: it is the distribution of $X_1 + \cdots + X_n$, where $X_i \sim \mathcal{U}(\{-1/\sqrt{n}, 1/\sqrt{n}\})$. The domain of this distribution is $\{(-n + 2k) / \sqrt{n} \mid 0 \leq k \leq n\}$, the probability mass function is $(-n + 2k)/\sqrt{n} \mapsto \binom{n}{k} / 2^n$, the expectation is 0, and the variance is 1.
If $\mu, \sigma \in \R$, then we let $\mathcal{N}(\mu, \sigma^2)$ denote the normal distribution with mean $\mu$ and variance $\sigma^2$. Fixing $d$, if $\mathbf{\mu} \in \R^d, \mathbf{\Sigma} \in \R^{d\times d}$, then we let $\mathcal{N}(\mathbf{\mu}, \mathbf{\Sigma})$ denote the multivariate normal distribution with mean $\mathbf{\mu}$ and covariance matrix $\mathbf{\Sigma}$. We let $\mathbf{I}_d$ denote the $d \times d$ identity matrix. Observe that if $\mathbf{x} \sim \mathcal{N}( \mathbf{\mu}, \mathbf{\Sigma})$, where $\mathbf{x} \in \R^d$, then for any matrix $\mathbf{A} \in \R^{d' \times d}$ we have that $\mathbf{A} \mathbf{x} \sim \mathcal{N}(\mathbf{A} \mathbf{\mu}, \mathbf{A} \mathbf{\Sigma} \mathbf{A}^T)$. Furthermore if $\mathbf{x} \sim \mathcal{N}(\mathbf{\mu}, \mathbf{\Sigma})$ with $\mathbf{\Sigma}$ positive semidefinite, then by finding the Cholesky decomposition $\mathbf{\Sigma} = \mathbf{A} \mathbf{A}^T$, where $\mathbf{A} \in \R^{d \times d}$, we find that $\vx$ is identically distributed to $\mathbf{A} \vx' + \mathbf{\mu}$, where $\vx' \sim \mathcal{N}(\mathbf{0}, \mathbf{I}_d)$.

\section{Main result}\label{sec:main}

In this section, we will prove our main result, restated here.

\main*

\noindent
In order to prove Theorem~\ref{thm:main}, we first introduce an auxiliary notion, which already appears in~\cite{FJ97}.

\begin{definition}
    Fix $a, b \in \mathbb{R}$ such that $a^2 + b^2 = 1$, $b \geq 0$, and $\ell \in
    \mathbb{N}$. Suppose that $x_1, \ldots, x_\ell, y_1, \ldots, y_\ell \sim \mathcal{N}(0, 1)$; i.e., they are i.i.d.~standard normal variables. We let $P_\ell(a)$ denote the probability that 
    \[
    \begin{matrix}
    & x_1 \geq x_2 & \wedge & \cdots & \wedge & x_1 \geq x_\ell \\
    \wedge & a x_1 + b y_1 \geq  a x_2 + b y_2 & \wedge & \cdots & \wedge & a x_1 + b y_1 \geq  a x_\ell + b y_\ell.
    \end{matrix}
    \]
    We then write $N_\ell(a) = \ell P_\ell(a)$. This is just the probability that
    \[
    \arg \max_c x_c = \arg \max_c (ax_c + b y_c).
    \]
\end{definition}
\noindent
The following quantity is similar to $\alpha_k$ from~\cite{FJ97}.

\begin{definition}\label{def:alphakl}
    Let
    \[
    \alpha_{k\ell} = \min_{-1 / (k-1) \leq a < 1} \frac{k(1 - \ell P_\ell(a))}{(k-1)(1 - a)}.
    \]
    Observe that for $a = 1$ the ratio would be $0 / 0$, hence for $-1/(k-1) \leq a \leq 1$ it holds that
    \begin{equation}\label{ineq:akl}
    \alpha_{k\ell} \frac{k-1}{k}(1-a) \leq 1 - \ell P_\ell(a).
    \end{equation}
\end{definition}
\noindent
The proof of Theorem~\ref{thm:main} is split into the following three propositions.

\begin{restatable}{proposition}{propalgo}\label{prop:algo}
There is a randomised algorithm which, given a graph $G$
  that admits a $k$-colouring of value $\rho$, finds an $\ell$-colouring of expected
  value $\alpha_{k\ell}\rho - \epsilon$ in polynomial time in $G$ and $\log (1 / \epsilon)$ for an arbitrarily small $\epsilon>0$.
\end{restatable}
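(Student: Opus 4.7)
The plan is to extend the Frieze-Jerrum framework~\cite{FJ97} to the promise setting: solve the vector relaxation tailored to $k$-colouring but round with $\ell$ random hyperplanes. Concretely, I introduce a unit vector $\vx_v \in \R^{|V(G)|}$ for every $v \in V(G)$, subject to $\vx_u \cdot \vx_v \geq -1/(k-1)$ for every $\{u,v\} \in E(G)$, and maximise
\[
\sum_{\{u,v\} \in E(G)} \tfrac{k-1}{k}\bigl(1 - \vx_u \cdot \vx_v\bigr).
\]
A $k$-colouring of value $\rho$ lifts to a feasible SDP solution of value $\rho\,|E(G)|$ by mapping the $k$ colour classes to the $k$ unit simplex vectors $\ve_1, \ldots, \ve_k \in \R^{k-1}$ of pairwise inner product $-1/(k-1)$; hence the SDP optimum is at least $\rho\,|E(G)|$. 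Using the ellipsoid method, the SDP can be solved to additive accuracy $\epsilon\,|E(G)|/2$ in time polynomial in $G$ and $\log(1/\epsilon)$.

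Next, I sample $\ell$ i.i.d.\ standard Gaussian vectors $\mathbf{g}_1, \ldots, \mathbf{g}_\ell$ in the ambient space and assign to each vertex $v$ the colour $c(v) = \arg\max_{c \in [\ell]} \mathbf{g}_c \cdot \vx_v$. For the analysis, fix an edge $\{u,v\}$ and set $a = \vx_u \cdot \vx_v$; the SDP constraints guarantee $a \in [-1/(k-1), 1]$. Rotational invariance of the Gaussian distribution implies that the family $(\mathbf{g}_c \cdot \vx_u, \mathbf{g}_c \cdot \vx_v)_{c \in [\ell]}$ has the same joint law as $(x_c, a x_c + b y_c)_{c \in [\ell]}$ for $b = \sqrt{1-a^2}$ and i.i.d.\ standard normals $x_c, y_c$. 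By symmetry over the $\ell$ colours, the probability that $u$ and $v$ receive the same colour is precisely $\ell\,P_\ell(a)$, so the edge is properly coloured with probability $1 - \ell\,P_\ell(a) \geq \alpha_{k\ell}\cdot\tfrac{k-1}{k}(1-a)$ by~(\ref{ineq:akl}). Summing over edges and using linearity of expectation gives
\[
\Exp\bigl[\rho_\ell(c)\bigr]\,|E(G)| \;\geq\; \alpha_{k\ell}\sum_{\{u,v\}\in E(G)} \tfrac{k-1}{k}\bigl(1 - \vx_u \cdot \vx_v\bigr) \;\geq\; \alpha_{k\ell}\bigl(\rho - \epsilon/2\bigr)|E(G)|,
\]
which, using $\alpha_{k\ell} \leq 1$, yields $\Exp[\rho_\ell(c)] \geq \alpha_{k\ell}\rho - \epsilon$ as required.

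The step I expect to be most critical to get right is ensuring that~(\ref{ineq:akl}) is invoked only over inner products $a$ lying in $[-1/(k-1), 1]$---which is exactly the range over which $\alpha_{k\ell}$ is defined as a minimum, and which holds precisely because we use the $k$-colouring (rather than the weaker $\ell$-colouring) SDP. Without the lower bound $-1/(k-1)$ the minimum would be taken over a larger range of $a$, yielding a smaller and possibly useless constant; this is the only place where the parameter $k$ enters the rounding analysis, while $\ell$ enters only through the number of rounding hyperplanes. The remainder is a routine transposition of~\cite{FJ97}.
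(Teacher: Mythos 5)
Your proposal is correct and takes essentially the same route as the paper: solve the Frieze--Jerrum $k$-colouring SDP, round with $\ell$ i.i.d.\ Gaussian vectors via $\arg\max$, and apply inequality~(\ref{ineq:akl}) edge by edge before using linearity of expectation. The only cosmetic differences are that the paper imposes the $-1/(k-1)$ constraint on all pairs rather than just edges, handles the degenerate case $n \leq k$ by brute force (so the simplex-vector solution genuinely fits in $\R^n$, a point your ``lift to $\R^{k-1}$'' glosses over), and absorbs the SDP accuracy as $\epsilon/\alpha_{k\ell}$ instead of via $\alpha_{k\ell} \leq 1$.
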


\begin{restatable}{proposition}{propboundskl}\label{prop:boundskl}
  $\displaystyle \alpha_{k\ell} \geq 1 - \frac{1}{\ell} + \frac{2 \ln \ell}{k \ell} - o\left(\frac{\ln \ell}{k\ell} \right) - O\left(\frac{1}{k^2}\right)$.
\end{restatable}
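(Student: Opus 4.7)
The plan is to lower bound $f(a) := k(1-N_\ell(a))/[(k-1)(1-a)]$ uniformly for $a \in [-1/(k-1), 1)$, where $N_\ell(a) = \ell P_\ell(a)$. I would adapt the Frieze--Jerrum argument for $\alpha_\ell$~\cite{FJ97}, splitting the domain into the FJ range $[-1/(\ell-1), 1)$ and the ``extra'' interval $[-1/(k-1), -1/(\ell-1))$, which is empty when $k = \ell$.

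On the FJ range, the bound $\alpha_\ell \geq 1 - 1/\ell + 2\ln\ell/\ell^2 - o(\ln\ell/\ell^2)$ from~\cite{FJ97} rearranges to
\[
\frac{1 - N_\ell(a)}{1-a} \;\geq\; \frac{\ell-1}{\ell}\,\alpha_\ell \;=\; 1 - \frac{2}{\ell} + \frac{2\ln\ell}{\ell^2} - o\!\left(\frac{\ln \ell}{\ell^2}\right).
\]
Multiplying by $k/(k-1) = 1 + 1/(k-1)$ and regrouping, using $1/(k-1) \geq 1/\ell$, yields
\[
f(a) \;\geq\; 1 - \frac{1}{\ell} + \frac{2\ln\ell}{k\ell} - o\!\left(\frac{\ln\ell}{k\ell}\right) - O\!\left(\frac{1}{k^2}\right),
\]
where the $O(1/k^2)$ slack cleanly absorbs the difference between $2\ln\ell/\ell^2$ and $2\ln\ell/(k\ell)$ whenever $k \leq \ell$.

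On the extra interval, I would use the fact that $N_\ell$ is smooth with $N_\ell(0) = 1/\ell$ exactly (by independence of the two Gaussian families at $a = 0$), so Taylor-expanding around $0$ gives $N_\ell(a) = 1/\ell + N_\ell'(0)\, a + O(a^2)$ for $|a|$ small. A careful computation, differentiating the Gaussian double integral defining $N_\ell$ and applying the standard tail estimate $1 - \Phi(x) \sim \phi(x)/x$ to the order statistics of $\ell$ i.i.d.~standard normals (whose maximum concentrates near $\sqrt{2\ln\ell}$), yields a lower bound on $N_\ell'(0)$ of order $\ln\ell/\ell$ with the appropriate leading constant. Plugging this into the Taylor expansion at $a = -1/(k-1)$ produces $f(a) \geq 1 - N_\ell(a) \geq 1 - 1/\ell + 2\ln\ell/(k\ell) - O(1/k^2)$ uniformly on the extra interval, the $O(1/k^2)$ absorbing the Taylor remainder and the discrepancy between $1/(k-1)$ and $1/k$.

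The main technical obstacle will be establishing the sharp estimate on $N_\ell'(0)$, with the correct leading constant needed to produce the $2\ln\ell/(k\ell)$ correction after multiplication by $1/(k-1) \approx 1/k$. This requires an asymptotic analysis of the Gaussian integral at $a = 0$ analogous to (but independent of) the FJ computation near $a = 1$ that underlies the $2\ln\ell/\ell^2$ term. The two regime bounds then combine by minimisation to give the claimed lower bound on $\alpha_{k\ell}$.
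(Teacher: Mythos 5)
Your split into the Frieze--Jerrum range $[-1/(\ell-1),1)$ and the extra interval $[-1/(k-1),-1/(\ell-1))$ is a genuinely different organisation from the paper (which handles all of $[-1/(k-1),0]$ in one computation using the Taylor-coefficient facts $c_0=1/\ell$, $c_1\sim 2\ln\ell/(\ell-1)$, $c_i\ge 0$, $\sum_i c_{2i}=1/2$ from Frieze--Jerrum), and the first half of your argument does go through: since $2\ln\ell/\ell<1$ and $k\le\ell$, the bonus $\frac{1}{k-1}-\frac{1}{\ell}$ coming from the factor $\frac{k}{k-1}\cdot\frac{\ell-1}{\ell}$ indeed covers the shortfall $\frac{2\ln\ell}{k\ell}-\frac{2\ln\ell}{\ell^2}$, up to $O(1/k^2)$ terms. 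However, the extra-interval step contains a genuine error: the claimed uniform bound $1-N_\ell(a)\ge 1-\frac{1}{\ell}+\frac{2\ln\ell}{k\ell}-O(1/k^2)$ is false near the right end of that interval whenever $\ell/k$ is bounded away from $1$. Concretely, take $\ell=2k$ and $a$ slightly below $-1/(\ell-1)$: there $N_\ell(a)=\frac{1}{\ell}-c_1|a|+O(1/\ell^2)\approx \frac{1}{\ell}-\frac{2\ln\ell}{\ell^2}=\frac{1}{\ell}-\frac{\ln\ell}{k\ell}$, so $1-N_\ell(a)$ falls short of your target by $\approx\frac{\ln\ell}{k\ell}$, which is of exactly the order of the term you are trying to establish and is not absorbed by $O(1/k^2)$ (since $1/k^2=o(\ln\ell/(k\ell))$ in this regime). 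The minimum of $1-N_\ell(a)$ over the extra interval sits at $|a|\approx 1/(\ell-1)$, not at $a=-1/(k-1)$, so evaluating the Taylor expansion at the left endpoint does not give a uniform statement.

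The root cause is that the first inequality in your chain, $f(a)\ge 1-N_\ell(a)$, discards the prefactor $\frac{k}{(k-1)(1-a)}>1$, and it is precisely the excess of this prefactor over $1$ (about $\frac{1}{k-1}+a\ge \frac{1}{k-1}-|a|$) that compensates the deficit $\frac{2\ln\ell}{k\ell}-\frac{2\ln\ell}{\ell}|a|$ in the linear term when $|a|<1/k$, again because $2\ln\ell/\ell<1$. Retaining the prefactor and rerunning the estimate over all of $[-1/(k-1),0]$ repairs the argument, but at that point you are essentially reproducing the paper's computation, which bounds $N_\ell(a)\le \frac{1}{\ell}+(1+\epsilon(\ell))\frac{2\ln\ell}{\ell}a+\frac{a^2}{2}$ and then optimises the resulting linear-in-$a$ condition on $R$. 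Two smaller points: your $O(a^2)$ remainder must have a constant uniform in $\ell$ (this follows from nonnegativity and summability of the Taylor coefficients, i.e., exactly the Frieze--Jerrum facts), and the sharp lower bound $N_\ell'(0)\ge(2-o(1))\ln\ell/\ell$ that you propose to rederive via Gaussian order statistics is just their asymptotics for $c_1$; citing these, as the paper does, is both simpler and already required to make your FJ-range step rigorous.
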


\begin{restatable}{proposition}{propnontrivial}\label{prop:nontrivial}
$\alpha_{k\ell} > 1 - 1 / \ell$.
\end{restatable}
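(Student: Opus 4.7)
The plan is to reformulate $\alpha_{k\ell} > 1 - 1/\ell$ as the pointwise strict inequality
\[
1 - \ell P_\ell(a) > \frac{(k-1)(\ell-1)}{k\ell}(1 - a) \quad \text{for all } a \in [-1/(k-1), 1),
\]
which by Definition~\ref{def:alphakl} is equivalent to the desired bound on the minimum defining $\alpha_{k\ell}$. I would then prove this by casework on the sign of $a$. The central tool is the Hermite expansion
\[ P_\ell(a) = \sum_{n \geq 0} c_n a^n, \qquad c_n \geq 0, \quad c_0 = 1/\ell^2, \quad \sum_n c_n = 1/\ell, \]
obtained from $P_\ell(a) = \langle f, U_a f\rangle$ where $f : \R^\ell \to \{0,1\}$ is the indicator of $\{\arg\max_c x_c = 1\}$ and $U_a$ is the Ornstein--Uhlenbeck noise operator, which diagonalises on the Hermite polynomial basis; the values of $c_0$ and $\sum_n c_n$ come from $\Exp f = 1/\ell$ and $P_\ell(1) = 1/\ell$.

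For $a \in [0, 1)$, the non-negativity of the $c_n$ makes $P_\ell$ convex on $[0, 1]$. Interpolation between the endpoint values yields $P_\ell(a) \leq (1-a)/\ell^2 + a/\ell$, whence $1 - \ell P_\ell(a) \geq (\ell-1)(1-a)/\ell$. This is strictly larger than the target $(k-1)(\ell-1)(1-a)/(k\ell)$ for $a < 1$, since $k/(k-1) > 1$; notably, no strict convexity is needed, as the factor $k/(k-1)$ supplies the strictness on its own.

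For $a \in [-1/(k-1), 0]$, I would combine the elementary inequality $(k-1)(1-a)/k \leq 1$ (with equality only at $a = -1/(k-1)$) with the auxiliary bound $P_\ell(a) \leq 1/\ell^2$, strict for $a < 0$. Together these give the target inequality, with the strictness coming from either the numerator (when $a < 0$) or the denominator (when $a > -1/(k-1)$). At the exceptional point $a = -1/(k-1)$ one needs the strict bound $P_\ell(-1/(k-1)) < 1/\ell^2$; this is immediate for $k = 2$ since $P_\ell(-1) = 0$, and for $k \geq 3$ follows from continuity together with the monotonicity of $P_\ell$ on $[-1, 0]$.

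The main obstacle is the auxiliary bound $P_\ell(a) \leq 1/\ell^2$ for $a \in [-1, 0]$, which encodes the intuition that negatively correlated standard Gaussian vectors agree on their $\arg\max$ strictly less often than independent ones. It does not follow directly from the Hermite expansion (the coefficients of odd and even powers compete), so making it rigorous likely requires a Gaussian correlation argument---for example a coupling exploiting the rotational invariance of the standard Gaussian measure, or the observation that $f$ is an intersection of the half-space events $\{x_1 > x_c\}$ whose pairwise $\Exp[[X_1 > X_c][Y_1 > Y_c]] = 1/4 + \arcsin(a)/(2\pi)$ is at most $1/4$ for $a \leq 0$. This will be the most delicate step of the proof.
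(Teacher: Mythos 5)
Your overall skeleton is the same as the paper's: reduce to the pointwise strict inequality, handle $a\in[0,1)$ by convexity of $P_\ell$ (nonnegative expansion coefficients) and interpolation between $P_\ell(0)=1/\ell^2$ and $P_\ell(1)=1/\ell$, and handle $a\in[-1/(k-1),0]$ by combining $\frac{k-1}{k}(1-a)\leq 1$ with the bound $P_\ell(a)\leq 1/\ell^2$ for $a\leq 0$. That part is fine. The problem is that the step you yourself flag as the ``main obstacle'' --- $\ell P_\ell(a)=N_\ell(a)\leq 1/\ell$ on $[-1,0]$, with strictness for $a<0$ --- is exactly the content of the paper's technical lemma, and you leave it unproven. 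Neither of your two suggested routes closes it as stated. The pairwise observation $\Pr[X_1>X_c,\,Y_1>Y_c]=\tfrac14+\tfrac{\arcsin a}{2\pi}\leq\tfrac14$ for $a\leq 0$ does not imply the needed inequality $\Pr[\bigcap_c\{X_1>X_c\}\cap\bigcap_c\{Y_1>Y_c\}]\leq\Pr[\bigcap_c\{X_1>X_c\}]\,\Pr[\bigcap_c\{Y_1>Y_c\}]$: pairwise negative correlation of events says nothing about their intersections, and indeed for a general set of measure $1/\ell$ the noise stability at negative correlation can exceed $1/\ell^2$ (even-degree Hermite mass pushes it up), so any correct proof must use the specific structure of the arg-max cell. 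Likewise, the ``monotonicity of $P_\ell$ on $[-1,0]$'' you invoke at $a=-1/(k-1)$ is not established anywhere in your argument and is essentially equivalent in difficulty to the bound itself (and is redundant anyway: once $P_\ell(a)<1/\ell^2$ is known strictly for all $a<0$, the point $a=-1/(k-1)<0$ needs no separate treatment).

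The paper closes this gap with a short conditioning argument that you could adopt: condition on the values $x_1>\cdots>x_\ell$ (sorted without loss of generality) and on $y_1$; writing $A=-a/\sqrt{1-a^2}\geq 0$, the event that both arg-maxima equal $1$ becomes $\bigwedge_{i\geq 2}\{y_i<y_1-A(x_1-x_i)\}$, whose conditional probability factorises by independence into $\prod_{i\geq 2}\Pr[y_i<y_1-A(x_1-x_i)]$. Since $x_1-x_i>0$, each factor is maximised precisely at $A=0$, i.e.\ $a=0$, and strictly decreases as $A$ grows; integrating over $y_1$ and the $x_i$ gives $N_\ell(a)\leq N_\ell(0)=1/\ell$ with equality only at $a=0$ (and, incidentally, the monotonicity you wanted). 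With that lemma in hand, the rest of your casework goes through.
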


\subsection{Proof of Proposition~\ref{prop:algo}}

Our algorithm solves the SDP of~\cite{FJ97} for $k$-colourings, then rounds the solution of the SDP like~\cite{FJ97} but for $\ell$-colourings. Henceforth fix $2 \leq k \leq \ell$, and $\epsilon > 0$. The following lemma also appears, essentially, as~\cite[Lemma 3]{FJ97} and the preceding definitions; we include it for completeness.

\begin{lemma}
    For any $n \geq k$, there exist vectors $\ve_1, \ldots, \ve_k \in \mathbb{R}^n$ such that $\ve_i \cdot \ve_i = 1$ and $\ve_i \cdot \ve_j = -1 / (k - 1)$ for $i \neq j$.
\end{lemma}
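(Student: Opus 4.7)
The plan is to give an explicit simplex construction and then embed it into $\mathbb{R}^n$. Concretely, let $\mathbf{f}_1, \ldots, \mathbf{f}_k \in \mathbb{R}^k$ be the standard basis vectors and let $\mathbf{s} = \frac{1}{k} \sum_{i=1}^k \mathbf{f}_i$ be their centroid. Define the centred vectors $\mathbf{v}_i = \mathbf{f}_i - \mathbf{s}$. A direct computation gives $\mathbf{v}_i \cdot \mathbf{v}_i = (k-1)/k$ and, for $i \neq j$, $\mathbf{v}_i \cdot \mathbf{v}_j = -1/k$; these follow from expanding the inner products using $\mathbf{f}_i \cdot \mathbf{f}_j = [i = j]$.

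Next I would renormalise: set $\ve_i = \sqrt{k/(k-1)} \cdot \mathbf{v}_i$. Then $\ve_i \cdot \ve_i = 1$ and $\ve_i \cdot \ve_j = -\tfrac{1}{k} \cdot \tfrac{k}{k-1} = -\tfrac{1}{k-1}$ for $i \neq j$, as required. These $k$ vectors live in $\mathbb{R}^k$ (in fact in the hyperplane $\{\mathbf{x} : \sum_i x_i = 0\}$, which has dimension $k-1$). Finally, since $n \geq k$, we may embed $\mathbb{R}^k$ isometrically into $\mathbb{R}^n$ by padding with zeros, producing the desired vectors in $\mathbb{R}^n$.

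An alternative, more conceptual route would be to define the Gram matrix $G \in \mathbb{R}^{k \times k}$ with $G_{ii} = 1$ and $G_{ij} = -1/(k-1)$, write it as $G = \tfrac{k}{k-1} I - \tfrac{1}{k-1} J$ where $J$ is the all-ones matrix, observe that its eigenvalues are $0$ (once, from the all-ones eigenvector of $J$) and $k/(k-1)$ (with multiplicity $k-1$), conclude that $G$ is positive semidefinite of rank $k-1$, and extract the vectors via a Cholesky-style factorisation. This second approach avoids any arithmetic with explicit vectors, but requires appealing to the standard fact that every PSD matrix of rank $r$ is the Gram matrix of some vectors in $\mathbb{R}^r$.

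There is no real obstacle here: the statement is a classical fact about the regular simplex, and both approaches are essentially bookkeeping. The only thing worth double-checking is the renormalisation constant and the bound $n \geq k$ (where in fact $n \geq k - 1$ would suffice, but the stated hypothesis is already comfortable).
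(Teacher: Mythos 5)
Your proof is correct and is essentially the paper's own construction: after rescaling, your centred standard-basis vectors $\sqrt{k/(k-1)}\,(\mathbf{f}_i - \mathbf{s})$ are (up to an irrelevant global sign) exactly the explicit vectors $\frac{1}{\sqrt{k(k-1)}}(1,\ldots,1,1-k,1,\ldots,1,0,\ldots,0)^T$ the paper writes down, padded with zeros using $n \geq k$. The Gram-matrix remark and the observation that $n \geq k-1$ would suffice are fine but not needed.
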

\begin{proof}
    We take
    \[
    \ve_i = \frac{1}{\sqrt{k(k-1)}}{(1, \ldots, 1, 1-k, 1, \ldots, 1, 0, \ldots, 0)}^T,
    \]
    where there are $k$ nonzero values, and where the value $1-k$ appears at the
    $i$-th position. These vectors satisfy the required conditions.
\end{proof}

\begin{proof}[Proof of Proposition~\ref{prop:algo}]
Suppose we are given a graph $G = (V, E)$, which we are promised has a
  $k$-colouring of value $\rho$. Suppose $V = [n]$ and $|E| = m$. If $n \leq k$, then assigning each vertex a different colour satisfies all edges, so assume $n \geq k$.
  
By relabelling the promised colouring to $\ve_1, \ldots, \ve_k \in
  \mathbb{R}^n$, we find that there exist variables $\va_i^* \in \{\ve_1,
  \ldots, \ve_k\} \subseteq \mathbb{R}^n$ for $i \in [n]$ such that
  \[
  \frac{1}{m}
  \sum_{(i, j) \in E} [\va_i^* \neq \va_j^*] \geq \rho.
  \]
  We find that $[ \va_i^* \neq \va_j^* ] =
  \frac{k-1}{k}(1 - \va_i^* \cdot \va_j^*)$, when $\va_i^*, \va_j^* \in \{\ve_1,
  \ldots, \ve_k\}$. We now relax as Frieze and Jerrum~\cite{FJ97}, and Goemans
  and Williamson before them~\cite{GW95}, to a semidefinite program; namely, we solve the following program:
\begin{maxi}|s|
{}{\frac{1}{m} \sum_{(i, j) \in E} \frac{k - 1}{k}(1 - \va_i \cdot \va_j)} 
{}{}\label{sdp:gw}
\addConstraint{\va_i \cdot \va_i = 1}{}
\addConstraint{\va_i \cdot \va_j \geq -\frac{1}{k-1}, i \neq j} 
\addConstraint{\va_i \in \mathbb{R}^n}. 
\end{maxi}
The semidefinite program~(\ref{sdp:gw}) can be solved with an additive error of at most $\epsilon / \alpha_{k\ell}$ in time polynomial with respect to $n, m$ and $\log(\alpha_{k\ell}/\epsilon) = \log(1 / \epsilon) + O(1)$.
By the discussion in the previous paragraph, we see that the SDP must have value at least $\rho$, due to the potential solution $\va_1^*, \ldots, \va_n^*$.
Thus, by solving the program we now have a collection of $n$ unit vectors $\va_1, \ldots, \va_n \in \mathbb{R}^n$ with pairwise inner product at least $-1 / (k - 1)$ such that
\[
\frac{1}{m} \sum_{(i, j) \in E}\frac{k - 1}{k}(1 - \va_i \cdot \va_j) \geq \rho - \epsilon / \alpha_{k\ell}.
\]

Our algorithm now \emph{randomly rounds} as Frieze and Jerrum does~\cite{FJ97}, for $\ell$-colourings. Namely, we take $\ell$ standard normal variables $\vx_1,\ldots , \vx_\ell \in \mathbb{R}^n$; for each vertex $i \in V$ we compute $c = \arg \max_j \va_i \cdot \vx_j$, and then assign vertex $i$ colour $c$ (breaking possible ties arbitrarily).

Now, let us compute the expected value of the resulting rounding. Consider an edge $(i, j) \in E$; in terms of $\frac{k - 1}{k}(1 - \va_i \cdot \va_j)$, what is the probability that $(i, j)$ is properly coloured? This is the same as the probability that $\arg \max_c \va_i \cdot \vx_c \neq \arg \max_c \va_j \cdot \vx_c$, which, by symmetry, is equal to
\begin{equation}\label{eq:prob}
1 - \ell \Pr_{\mathbf{x}_1, \ldots, \mathbf{x}_\ell}\left[
\bigwedge_{c =2}^\ell \va_i \cdot \vx_1 \geq \va_i \cdot \vx_c,
\bigwedge_{c =2}^\ell \va_j \cdot \vx_1 \geq \va_j \cdot \vx_c \right].
\end{equation}
Since $\vx_1, \ldots, \vx_\ell$ are drawn from a rotationally symmetric distribution, we can rotate everything to be in a 2-dimensional plane without affecting the probability in~(\ref{eq:prob}). Furthermore, rotate so that $\va_i$ is moved to $(1, 0)$, and $\va_j$ is at $(a, b)$, where $a = \va_i \cdot \va_j$ and $b = \sqrt{1 - a^2}$ (note that this rotation is possible since it preserves the angle between $\va_i$ and $\va_j$, and their lengths). Since the vectors $\vx_1, \ldots, \vx_\ell$ are (after the rotation) bivariate standard normal variables, we can see them as pairs $(x_1, y_1), \ldots, (x_\ell, y_\ell)$, where $x_1, \ldots, x_\ell, y_1, \ldots, y_\ell \sim \mathcal{N}(0, 1)$ are i.i.d.~standard normal variables. Then, we can rewrite~\eqref{eq:prob} as
\begin{multline}\label{ineq:final}
1 - \ell \Pr_{\substack{x_1, \ldots, x_\ell \\ y_1, \ldots, y_\ell}}\left[\bigwedge_{c = 2}^\ell x_1 \geq x_c, \bigwedge_{c = 2}^\ell a x_1 + b y_1 \geq a x_c + b y_c \right] \\
= 1 - \ell P_\ell(a) = 1 - \ell P_\ell(\va_i \cdot \va_j).
\end{multline}
Since $-1/(k-1) \leq \va_i \cdot \va_j \leq 1$, by (\ref{ineq:akl}), we have that
\[
\alpha_{k\ell} \frac{k - 1}{k}(1 - \va_i \cdot \va_j) \leq 1 - \ell P_\ell(\va_i \cdot \va_j).
\]
Hence, by linearity of expectation the expected value of the $\ell$-colouring we return is, as required,  at least
\[
\alpha_{k\ell} \frac{1}{m} \sum_{(i, j) \in E} \frac{k - 1}{k}(1 - \va_i \cdot \va_j) \geq \alpha_{k\ell} \rho - \epsilon.\qedhere
\]
\end{proof}

\subsection{Proof of Proposition~\ref{prop:boundskl}}

Our proof of Proposition~\ref{prop:boundskl}, restated below, very closely follows~\cite[Corollary 6, Corollary 7]{FJ97}.

\propboundskl*
\noindent
The following result follows from the analysis in~\cite[Lemma 5, Corollary 6, Corollary 7]{FJ97}.

\begin{theorem}
    The Taylor series for $N_\ell(x)$, given by
    \[
    N_\ell(x) = \sum_{i = 0}^\infty c_i x^i
    \]
    converges for $-1 \leq x \leq 1$. Every $c_i \geq 0$. Furthermore $c_0 = 1 / \ell$, $c_1 \sim 2 \ln \ell / (\ell - 1)$, and $\sum_{i= 0}^\infty c_{2i} = 1/2$.\footnote{Some of these facts do not appear in the statements of~\cite[Lemma 5, Corollary 6, Corollary 7]{FJ97}, only in the proofs.}
\end{theorem}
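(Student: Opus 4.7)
The plan is to interpret $N_\ell(a)$ as the Gaussian noise stability of the indicator functions of the argmax partition, and then to extract all four claims from the associated Hermite (Mehler) expansion. Write $f_i(\mathbf{x}) = [\arg\max_c x_c = i]$ for $\mathbf{x} \in \R^\ell$; ties have probability zero, so $\sum_i f_i = 1$ almost surely. Let $(\mathbf{X}, \mathbf{Z})$ be the coordinate-wise correlated Gaussian pair of the definition, i.e.~$\mathbf{Z} = a\mathbf{X} + b\mathbf{Y}$ with $\mathbf{Y} \sim \N(\mathbf{0}, \mathbf{I}_\ell)$ independent of $\mathbf{X} \sim \N(\mathbf{0}, \mathbf{I}_\ell)$. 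By symmetry,
\[
N_\ell(a) \;=\; \ell P_\ell(a) \;=\; \sum_{i=1}^\ell \Exp[f_i(\mathbf{X})\, f_i(\mathbf{Z})].
\]
Since $(X_c, Z_c)$ has correlation $a$ independently across $c$, Mehler's formula in product form gives $\Exp[f_i(\mathbf{X}) f_i(\mathbf{Z})] = \sum_{\alpha \in \mathbb{N}^\ell} a^{|\alpha|} \widehat{f_i}(\alpha)^2$, with $\widehat{f_i}(\alpha)$ the coefficient in the orthonormal product Hermite basis. Grouping by total degree yields the claimed power series with
\[
c_k \;=\; \sum_{i=1}^\ell \sum_{|\alpha|=k} \widehat{f_i}(\alpha)^2 \;\geq\; 0.
\]

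Next I would establish convergence on $[-1, 1]$ and the evenness identity by evaluating at the endpoints. At $a = 1$, $\mathbf{Z} = \mathbf{X}$ so the arg maxes coincide and $N_\ell(1) = 1$; at $a = -1$, $\mathbf{Z} = -\mathbf{X}$, whose arg max is the arg min of $\mathbf{X}$, so $N_\ell(-1) = 0$. Because $c_k \geq 0$, the partial sums at $a = 1$ are bounded by $N_\ell(1) = 1$, giving absolute (hence uniform) convergence on $[-1, 1]$. Combining $\sum_k c_k = 1$ with $\sum_k (-1)^k c_k = 0$ gives $\sum_i c_{2i} = 1/2$.

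The constant term is immediate from $\widehat{f_i}(\emptyset) = \Exp f_i = 1/\ell$: $c_0 = \ell \cdot (1/\ell)^2 = 1/\ell$. For $c_1$, the degree-1 coefficients are $\widehat{f_i}(e_j) = \Exp[f_i(\mathbf{X})\, X_j]$. Symmetry gives $\widehat{f_i}(e_i) = \Exp[X_i [X_i = \max]] = \Exp[\max_c X_c]/\ell$. For $j \neq i$, I use that for a standard Gaussian $\mathbf{X}$, the sum $\sum_c X_c$ is independent of the ``shape'' $\mathbf{X} - \bar X \mathbf{1}$ (they are Gaussian and uncorrelated), and the event $\{X_i = \max\}$ is a function of the shape alone. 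Hence $\sum_c \Exp[X_c [X_i = \max]] = \Exp[\sum_c X_c] \cdot \tfrac{1}{\ell} = 0$, and by symmetry $\widehat{f_i}(e_j) = -\Exp[\max_c X_c]/(\ell(\ell - 1))$ for $j \neq i$. Summing,
\[
c_1 \;=\; \sum_{i,j} \widehat{f_i}(e_j)^2 \;=\; \ell\!\left(\frac{\Exp[\max_c X_c]^2}{\ell^2} + (\ell - 1)\frac{\Exp[\max_c X_c]^2}{\ell^2 (\ell - 1)^2}\right) \;=\; \frac{\Exp[\max_c X_c]^2}{\ell - 1},
\]
and the classical extreme-value asymptotic $\Exp[\max_c X_c] \sim \sqrt{2\ln \ell}$ for $\ell$ i.i.d.~standard normals yields $c_1 \sim 2 \ln \ell/(\ell - 1)$.

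The main technical obstacle I anticipate is justifying Mehler's formula for the non-smooth indicators $f_i$; this is standard Gaussian $L^2$ theory once one checks $f_i \in L^2$ (trivially true as $0 \leq f_i \leq 1$) and invokes that the Hermite product basis is an orthonormal basis of $L^2$ of the product Gaussian and that the noise operator with parameter $a$ is diagonal in it, acting on the degree-$k$ subspace by $a^k$. The symmetry arguments for the degree-1 coefficients and the extreme-value asymptotic are classical, so once Mehler is in hand, all four statements follow by direct calculation.
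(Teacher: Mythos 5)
Your proposal is correct, and it is worth noting that the paper does not prove this statement at all: it imports it from Frieze and Jerrum~\cite{FJ97} (Lemma~5 and Corollaries~6--7 there), with a footnote that some of the claimed facts appear only inside their proofs. Your derivation is essentially the same mathematics as in~\cite{FJ97}, repackaged in the modern language of Gaussian noise stability: Frieze and Jerrum expand the bivariate normal density via Mehler's formula and integrate over the ``coordinate $1$ is the maximum'' region, which makes each $c_i$ a sum of squares of Hermite integrals, whereas you obtain the same coefficients as $c_i=\sum_j\sum_{|\alpha|=i}\widehat{f_j}(\alpha)^2$ for the argmax indicators $f_j$. What your packaging buys is uniformity: nonnegativity of the $c_i$, the value of $c_0$, the exact formula $c_1=\Exp[\max_c X_c]^2/(\ell-1)$ (whence the $2\ln\ell/(\ell-1)$ asymptotic via the classical extreme-value estimate), and the parity identity all drop out of one Fourier--Hermite expansion, and your degree-one computation (independence of $\sum_c X_c$ from the centred vector, so the off-diagonal coefficients are $-\Exp[\max]/(\ell(\ell-1))$) is a clean substitute for the corresponding integral manipulations in~\cite{FJ97}.

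One small point to tighten: for $\sum_i c_{2i}=1/2$ you use $\sum_k c_k=N_\ell(1)=1$ and $\sum_k(-1)^kc_k=N_\ell(-1)=0$, but your argument only establishes the series representation for $|a|<1$, so the endpoint identities need a word (Abel summation plus continuity of $N_\ell$ at $\pm1$, say). Cleaner still: $\sum_k c_k=\sum_j\|f_j\|_2^2=\sum_j\Exp[f_j]=1$ is just Parseval, and $\sum_k(-1)^kc_k=\sum_j\Exp[f_j(\mathbf{X})f_j(-\mathbf{X})]=0$ follows from the parity of the Hermite basis together with the fact that the argmax and argmin of $\mathbf{X}$ almost surely differ; neither requires evaluating the power series at the endpoints. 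With that remark added, all four claims are fully justified.
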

\noindent
The following fact was observed in~\cite{FJ97}; we include a proof for completeness.
\begin{lemma}\label{lem:nonengativeBound}
For $0 \leq a \leq 1$, we have $\displaystyle \frac{k-1}{k} (1-a) \leq 1 - N_\ell(a)$.
\end{lemma}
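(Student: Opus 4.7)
The plan is to exploit the Taylor series representation of $N_\ell$ to show that $1 - N_\ell$ is concave on $[0,1]$, and then compare it to the affine function $\frac{k-1}{k}(1-a)$ via the chord between the endpoint values.

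First I would observe that since $N_\ell(x) = \sum_{i\geq 0} c_i x^i$ with all $c_i \geq 0$, the second derivative $N_\ell''(a) = \sum_{i \geq 2} i(i-1) c_i a^{i-2}$ is nonnegative for $a \in [0,1]$, so $N_\ell$ is convex on $[0,1]$, and hence $1 - N_\ell$ is concave there. Next I would pin down the endpoint values. At $a = 0$, the vectors $(x_1, \ldots, x_\ell)$ and $(ax_1 + by_1, \ldots, ax_\ell + by_\ell) = (y_1, \ldots, y_\ell)$ are independent, so their argmaxes coincide with probability $\ell \cdot 1/\ell^2 = 1/\ell$, giving $N_\ell(0) = 1/\ell$ (this also matches $c_0 = 1/\ell$). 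At $a = 1$ we have $b = 0$, so the two argmaxes are identical by definition, giving $P_\ell(1) = 1/\ell$ and $N_\ell(1) = 1$.

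Concavity of $1 - N_\ell$ on $[0,1]$ then implies
\[
1 - N_\ell(a) \;\geq\; (1-a)\bigl(1 - N_\ell(0)\bigr) + a\bigl(1 - N_\ell(1)\bigr) \;=\; \frac{\ell - 1}{\ell}(1 - a)
\]
for every $a \in [0,1]$. Since we assume $k \leq \ell$, the function $k \mapsto (k-1)/k$ is monotone, so $\tfrac{k-1}{k} \leq \tfrac{\ell - 1}{\ell}$, and combining these two inequalities with the fact that $1 - a \geq 0$ on $[0,1]$ yields the desired bound
\[
\frac{k-1}{k}(1 - a) \;\leq\; \frac{\ell - 1}{\ell}(1 - a) \;\leq\; 1 - N_\ell(a).
\]

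No step here is really an obstacle; the only thing one has to be careful about is justifying convexity from the Taylor series (using that the series converges on $[-1,1]$ so termwise differentiation is valid on the open interval, with continuity extending the inequality to the endpoints). The argument does not use anything about $c_1$ or $\sum c_{2i}$, only nonnegativity of the coefficients and the two endpoint evaluations.
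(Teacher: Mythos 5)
Your proposal is correct and follows essentially the same route as the paper: convexity of $N_\ell$ on $[0,1]$ from the nonnegative Taylor coefficients, the endpoint evaluations $N_\ell(0)=1/\ell$ and $N_\ell(1)=1$, and a chord/concavity comparison against the affine function. The only cosmetic difference is that you pass through $\frac{\ell-1}{\ell}(1-a)$ and then use $k\leq\ell$, whereas the paper checks the two endpoint inequalities for $\frac{k-1}{k}(1-a)$ directly; the content is identical.
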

\begin{proof}
    We first wish to find $N_\ell(0), N_\ell(1)$.
    $P_\ell(0)$ is just the probability that $x_1 \geq x_i$ and $y_1 \geq y_i$ for $x_1, \ldots, x_\ell, y_1, \ldots, y_\ell \sim \mathcal{N}(0, 1)$. By symmetry these events occur with probability $1 / \ell$ each, and thus overall they occur with probability $1 / \ell^2$. On the other hand, $P_\ell(1)$ is just the probability that $x_1 \geq x_i$ for $x_1 ,\ldots, x_\ell \sim \mathcal{N}(0, 1)$. By symmetry this is $1 / \ell$.

    Observe that since every term in the Taylor series of $N_\ell$ is
    nonnegative, the function is convex on $[0, 1]$, hence $1 - N_\ell(a)$ is
    concave. Since furthermore $(k - 1)(1 - 0) / k  = 1 - 1/k \leq 1 - 1 / \ell
    = 1 - N_\ell(0)$ and $(k - 1)(1 - 1) / k = 0 \leq 0 = 1 - N_\ell(1)$, by
    Jensen's inequality we have, for $0 < a < 1$, that
    \[
    \frac{k - 1}{k}(1 - a) < 1 - \ell N_\ell(a).\qedhere
    \]
\end{proof}

\begin{proof}[Proof of Proposition~\ref{prop:boundskl}]
First observe that we only need to prove this result for large enough $k$; for all small $k$ we can just force the bound to hold by increasing the $o(\cdot)$ term arbitrarily. Thus we will prove that the bound holds only for large enough $k$.
    We will try to find some $R \in [1/2, 1]$ such that
    \begin{equation}\label{requirement}
    R \frac{k-1}{k}(1-a) \leq 1 - \ell P_\ell(a) = 1 - N_\ell(a),
    \end{equation}
    for $-1/(k-1) \leq a \leq 1$. We will then conclude $\alpha_{k\ell} \geq R$. (The stipulation that $R \geq 1 / 2$ will be necessary later; it is justified by the fact that at the end we will find such an $R$.)

    By Lemma~\ref{lem:nonengativeBound},~\eqref{requirement} is true for any $0 \leq R \leq 1$ and $0 \leq a \leq 1$. In other words, we need only to care about $-1/(k-1) \leq a \leq 0$; thus assume that this is the case.

    Now, for $-1/(k-1) \leq a \leq 0$, we have $a^{2i} \leq a^2$ and $a^{2i +
    1} \leq 0$; since we know the first two coefficients of the Taylor series of
    $N_\ell$, and the sum of the even coefficients, by ignoring the higher-order
    odd terms and summing together the even terms we can deduce therefore
    that
    \begin{equation}\label{eq1}
    N_\ell(a) \leq \frac{1}{\ell} + (1 + \epsilon(\ell)) \frac{2 \ln \ell}{\ell} a + \frac{a^2}{2},
    \end{equation}
    where $\lim_{\ell \to \infty} \epsilon(\ell)  = 0$. (This is because we know that the first-order coefficient is of order $2 \ln \ell / (\ell - 1) \sim 2 \ln \ell / \ell$.)
    We suppress the $\ell$ in $\epsilon(\ell)$ henceforth.
    
By substituting~\eqref{eq1} into~\eqref{requirement} and factoring out $(-a)$, we get the following sufficient condition on $R$
\[
1 - \frac{1}{\ell} + (-a) \underbrace{\left((1 + \epsilon) \frac{2\ln \ell}{\ell} + \frac{a}{2}\right)}_{A} \geq \frac{k-1}{k}(1 - a) R.
\]
Observe that the left-hand side is a linear function of $A$ with nonnegative slope;
  thus by substituting $A$ with its minimum value we get another sufficient
  condition on $R$. Observe that the value of $a$ that minimises $A$ is $a = -1
  / (k-1)$, i.e.~the minimum value. Hence the following holding for all $-1/(k-1) \leq a \leq 0$ is a sufficient condition for $R$:
\[
1 - \frac{1}{\ell} + (-a) \left((1 + \epsilon) \frac{2\ln \ell}{\ell} - \frac{1}{2(k-1)}\right) \geq \frac{k-1}{k}(1 - a) R.
\]
Now subtract $-a (k-1) R / k$ to get that the following must hold
\begin{equation}\label{eq1n}
1 - \frac{1}{\ell} + (-a) \underbrace{\left((1 + \epsilon) \frac{2 \ln \ell}{\ell} - \frac{1}{2 (k-1)} - \frac{k - 1}{k} R \right)}_{B} \geq \frac{k-1}{k} R.
\end{equation}
$B$ is negative for large enough $k$, as $\ell \geq k$ and $R \geq 1/2$. Hence
  to minimise the left-hand side of~\eqref{eq1n} we must take $a = - 1 / (k - 1)$ again. Thus it is a sufficient condition on $R$ that
\[
1 - \frac{1}{\ell} + (1 + \epsilon) \frac{2 \ln \ell}{\ell(k-1)} - \frac{1}{2 {(k-1)}^2} - \frac{R}{k}
\geq \frac{k-1}{k} R.
\]
Add $R / k$ and reverse the bound to find the sufficient condition
\[
R \leq 1 - \frac{1}{\ell} +  (1 + \epsilon) \frac{2 \ln \ell}{\ell (k-1)} - \frac{ 1}{2{(k - 1)}^2}.
\]
Rearrange again to find the sufficient condition
\begin{multline*}
R \leq 
1 - \frac{1}{\ell} + \frac{2\ln \ell}{\ell (k-1)}
+ \frac{2 \epsilon(\ell) \ln \ell}{\ell(k-1)} - \frac{1}{2{(k-1)}^2}\\
=
1 - \frac{1}{\ell} + \frac{2\ln\ell}{\ell k} +
\underbrace{\left(
\frac{2 \ln \ell}{\ell k (k-1)}
+ \frac{2 \epsilon(\ell) \ln \ell}{\ell(k-1)} - \frac{1}{2{(k-1)}^2}\right)}_{C}
\end{multline*}

Observe that $C \geq O(\ln \ell / k^2 \ell) - o(\ln \ell / k \ell) - O(1 / k^2) = -o(\ln \ell / \ell k) - O( 1 / k^2)$. (The $o(\ln \ell / \ell k)$ term has a negative constant since it is possible that $\epsilon(\ell)$ is negative.) Hence, since this condition is a sufficient condition on $R$, for large enough $k$,
\[
\alpha_{k\ell} \geq 1 - \frac{1}{\ell} + \frac{2\ln \ell}{k\ell} - o\left( \frac{\ln \ell}{k\ell}\right) - O\left(\frac{1}{k^2} \right). \qedhere
\]
\end{proof}

\subsection{Proof of Proposition~\ref{prop:nontrivial}}

Now, we prove Proposition~\ref{prop:nontrivial}.

\propnontrivial*

This proposition serves a role analogous to~\cite[Corollary 6]{FJ97} (which is equivalent to the case $k = \ell$). We believe that our proof of this fact is simpler; also the direct generalisation of the proof in~\cite{FJ97} does not seem to work for $k$ much smaller than $\ell$.
We will first need a technical lemma.

\begin{lemma}
    For $-1 \leq a \leq 0$, we have $N_\ell(a) \leq 1 / \ell$, with equality only at $a = 0$.
\end{lemma}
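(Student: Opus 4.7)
The plan is to interpret $N_\ell(a)=\ell P_\ell(a)$ probabilistically and use the Gaussian FKG (correlation) inequality to reduce the negative-correlation case to two appeals of independence. Writing $Z_c=a x_c+b y_c$, the pairs $(x_c,Z_c)$ are i.i.d.\ bivariate standard normals with correlation $a$, and $N_\ell(a)$ is exactly the probability that $\arg\max_c x_c=\arg\max_c Z_c$.

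First I would reduce to the positive-correlation setting by substituting $W_c=-Z_c$: the pairs $(x_c,W_c)$ now have correlation $-a\ge 0$, and $\arg\max_c Z_c=\arg\min_c W_c$. By the symmetry over the index $c$,
\[
N_\ell(a)=\ell\,\Pr[x_1=\max_c x_c,\ W_1=\min_c W_c],
\]
so it suffices to show that this joint probability is at most $1/\ell^2$, with equality only at $a=0$. Conditioning on $(x_1,W_1)$, the remaining $\ell-1$ pairs are i.i.d.\ copies of some $(U,V)\sim\phi_{-a}$, so
\[
\Pr[x_1=\max x,\,W_1=\min W]=\mathbb{E}\bigl[(\Pr[U\le x_1,\,V\ge W_1\mid x_1,W_1])^{\ell-1}\bigr].
\]

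Second, because $-a\ge 0$ the Gaussian FKG inequality (applied to the decreasing indicator $\mathbf{1}[U\le x_1]$ and the increasing indicator $\mathbf{1}[V\ge W_1]$) gives $\Pr[U\le x_1,\,V\ge W_1]\le\Phi(x_1)(1-\Phi(W_1))$. Substituting and applying Gaussian FKG once more to the outer expectation, this time to the increasing function $\Phi(x_1)^{\ell-1}$ and the decreasing function $(1-\Phi(W_1))^{\ell-1}$ of the positively correlated Gaussian pair $(x_1,W_1)$, yields
\[
\mathbb{E}\bigl[\Phi(x_1)^{\ell-1}(1-\Phi(W_1))^{\ell-1}\bigr]\le\mathbb{E}[\Phi(x_1)^{\ell-1}]\cdot\mathbb{E}[(1-\Phi(W_1))^{\ell-1}]=\tfrac{1}{\ell}\cdot\tfrac{1}{\ell},
\]
using that $\Phi(x_1)$ and $1-\Phi(W_1)$ are uniform on $[0,1]$ and $\int_0^1 u^{\ell-1}\,du=1/\ell$. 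Chaining the inequalities gives $N_\ell(a)\le 1/\ell$.

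The main obstacle is the strictness claim for $a\in(-1,0)$. Both FKG steps are strict as soon as the correlation is strictly positive and the monotone functions involved are non-constant on a set of positive Gaussian measure, which is clearly the case here; at $a=0$ everything is independent and the chain collapses to exact equality $1/\ell$. If one prefers to avoid quoting Gaussian FKG, the same inequalities can be established directly from Mehler's expansion of $\phi_{-a}$ in normalised Hermite polynomials (noting that $\Phi$ is monotone and its Hermite coefficients have well-defined signs), but the FKG route seems cleaner and suffices for our purposes.
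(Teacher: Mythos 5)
Your proof is correct, but it takes a genuinely different route from the paper. The paper conditions on the values of the $x_c$'s (with $x_1 > \cdots > x_\ell$ by symmetry), rewrites the event as $\arg\max_i (y_i - A x_i) = 1$ with $A = -a/\sqrt{1-a^2} \geq 0$, and then, fixing $y_1$, factors the conditional probability into a product $\prod_{i\geq 2}\Pr[y_i < y_1 - A(x_1 - x_i)]$ by independence of the $y_i$'s; since $x_1 - x_i > 0$, each factor is maximised precisely at $A = 0$, giving both the bound and the strictness claim in one elementary stroke. You instead bound the joint probability $\Pr[x_1 = \max_c x_c,\, W_1 = \min_c W_c]$ by $1/\ell^2$ using Gaussian positive association twice: once inside the conditional probability (increasing vs.\ decreasing indicators of a pair with correlation $-a \geq 0$) and once on the outer expectation over $(x_1, W_1)$. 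Both applications are in the right direction, the uniformity of $\Phi(x_1)$ and $1-\Phi(W_1)$ gives the $1/\ell$ factors, and the degenerate endpoint $a=-1$ is harmless since $N_\ell(-1)=0$. What your route costs is the appeal to a heavier (though classical) tool, and in particular the equality case requires a \emph{strict} form of the Gaussian correlation inequality; as you note this can be supplied (e.g.\ via Plackett's identity $\partial_\rho \Pr[U\le s, V\le t] = \phi_\rho(s,t) > 0$, which makes your first step strict pointwise for $-1<a<0$, and only one strict step is needed), but it should be cited or proved rather than asserted. What it buys is a clean product-structure bound $\Pr[\text{joint event}]\le 1/\ell^2$ and no need to condition on the ordering of the $x_c$'s; the paper's argument, by contrast, is fully self-contained. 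Note also that both arguments implicitly assume $\ell \geq 2$ (for $\ell=1$ one has $N_1(a)=1$ identically), which is harmless in context since $\ell \geq k \geq 2$.
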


\begin{proof}
    First, recall that $N_\ell(0) = 1 / \ell$.
    Note that $N_\ell(a)$ is the probability that $\arg \max_i x_i = \arg \max_i ax_i +  \sqrt{1 - a^2} y_i$, where $x_1, \ldots, x_\ell, y_1, \ldots, y_\ell \sim \mathcal{N}(0, 1)$ are i.i.d.~variables. Now, suppose without loss of generality that the values of ${(x_i)}_{i \in [\ell]}$ are fixed, and in particular $x_1 > \cdots > x_\ell$ (the inequalities are strict with probability 1). Letting $A = - a / \sqrt{1 - a^2} \geq 0$, we have that $N_\ell(a)$ is just $\Pr [\arg \max_i y_i - A x_i = 1]$. Now, fix $y_1$, and note that conditional on this, the probability above becomes
    \[
    \prod_{i = 2}^\ell \Pr[ y_1 - Ax_1 > y_i - Ax_i]
    =
    \prod_{i = 2}^\ell \Pr[ y_i < y_1 - A(x_1 - x_i)]
    \]
    by independence. As $x_1 - x_i > 0$, term-by-term this probability is maximised at $A = a = 0$ (and only there). Since all the probabilities are nonzero, we get that the only $a$ that minimises this expression is $a = 0$. Hence, after integrating over all possible choices of $y_1$, we get that $N_\ell(a) \leq N_\ell(0) = 1 / \ell$, with equality only at $a = 0$.
\end{proof}

\begin{proof}[Proof of Proposition~\ref{prop:nontrivial}]
We wish to prove that, for $-1 /(k-1) \leq a < 1$,
\[
\frac{k - 1}{k} \frac{\ell - 1}{\ell}(1 - a) < 1 - N_\ell(a).
\]
For $a \in \interval[open right]{0}{1}$, this follows immediately by Lemma~\ref{lem:nonengativeBound}, so assume $a < 0$. For such $a$, we know that $N_\ell(a) < 1 / \ell$. Furthermore, note that $0 \leq (k-1)(1-a) / k \leq 1$ for our choice of $a$, so
\[
\frac{k - 1}{k} \frac{\ell - 1}{\ell} (1 - a) \leq \frac{\ell - 1}{\ell} = 1 - \frac{1}{\ell} < 1 - N_\ell(a).\qedhere
\]
\end{proof}

\section{Derandomisation}\label{sec:derandom}

In this section, we will show how to derandomise our algorithm from Theorem~\ref{thm:main} and thus establish Theorem~\ref{thm:maindet}.\footnote{Throughout we will ignore issues of real precision.}
We will use the following result established in~\cite{nz24:arxiv-bipartite}.

\begin{theorem}\label{thm:derandGaussian}
    Fix a constant $d$. There exists an algorithm that does the following. Suppose we are given $n, m \in \mathbb{N}$, $\va_{ij} \in \mathbb{R}^n$ and $b_{ij}, \epsilon \in \mathbb{R}$ for all $i \in [m], j \in [d]$. Suppose $\vx = (x_1, \ldots, x_d) \sim \mathcal{N}(\mathbf{0}, \mathbf{I}_d)$ and that
    \[
    \sum_{i =1 }^m \Pr_{\mathbf{x}}\left[ \bigwedge_{j = 1}^d \va_{ij} \cdot \vx > b_{ij} \right] \geq \alpha
    \]
    for some $\alpha \in \mathbb{R}$. Then the algorithm computes some particular $\vx^* = (x_1^*, \ldots, x_d^*) \in \mathbb{R}^d$ such that
    \[
    \sum_{i =1 }^m \left[ \bigwedge_{j = 1}^d \va_{ij} \cdot \vx^* > b_{ij} \right] \geq \alpha - \epsilon,
    \]
    in polynomial time with respect to $n, m, 1/\epsilon$.
\end{theorem}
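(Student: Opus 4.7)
My plan is to apply the method of conditional expectations coordinate by coordinate. Writing $F(\vx) = \sum_{i=1}^m \prod_{j=1}^d \mathbf{1}[\va_{ij} \cdot \vx > b_{ij}]$, the hypothesis reads $\Exp_{\vx}[F(\vx)] \geq \alpha$. At step $t$, having fixed values $x_1^*, \ldots, x_{t-1}^*$, I would choose $x_t^* \in \R$ so as to approximately maximise the conditional expectation $\Exp[F(\vx) \mid x_1 = x_1^*, \ldots, x_t = x_t^*]$ by searching over a finite grid $\mathcal{G} \subset [-T, T]$. By the standard averaging argument, the sequence of conditional expectations never drops across rounds (up to the per-step discretisation error), so after all coordinates have been fixed, the resulting deterministic $\vx^*$ satisfies $F(\vx^*) \geq \alpha - \epsilon$, provided the total accumulated error stays below $\epsilon$.

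The key subroutine evaluates a single conditional expectation. Conditioning on a prefix of the coordinates of a standard Gaussian leaves the remaining entries independent and standard normal, and substituting the prefix into each event $\va_{ij} \cdot \vx > b_{ij}$ simply shifts the threshold $b_{ij}$ by a known constant. The conditional expectation then reduces to a sum of $m$ Gaussian probabilities, each of the form: what is the chance that a $d$-dimensional Gaussian with explicit mean and covariance (determined by the restrictions of the $\va_{ij}$ to the unfixed coordinates) lies in the intersection of the $d$ halfspaces $\{z_j > \tilde b_{ij}\}$? Because $d$ is a fixed constant, these $d$-dimensional Gaussian orthant-type probabilities can be evaluated to any accuracy $\eta$ in time polynomial in $1/\eta$ by standard multivariate Gaussian numerical integration.

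The main obstacle will be careful error accounting. I would take $T = \Theta(\sqrt{\log(nm/\epsilon)})$ so that the Gaussian tails outside $[-T, T]$ contribute at most $\epsilon / 3$ in total; choose grid spacing $\delta$ fine enough that the smooth (Gaussian-mollified) map $v \mapsto \Exp[F(\vx) \mid x_1 = x_1^*, \ldots, x_{t-1} = x_{t-1}^*, x_t = v]$ varies by at most $\epsilon / (3n)$ across one grid cell; and run the orthant subroutine at precision $\eta = \epsilon / (3 n m |\mathcal{G}|)$. Summing over the at most $O(n m |\mathcal{G}|)$ calls made throughout all rounds keeps the total error below $\epsilon$, and since $d$ is constant the overall runtime is polynomial in $n, m, 1/\epsilon$ as claimed.
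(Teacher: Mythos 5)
The paper does not actually prove Theorem~\ref{thm:derandGaussian}: it imports it verbatim from~\cite{nz24:arxiv-bipartite}, noting only that the proof there is based on the method of conditional expectations. Your plan is therefore in the same spirit as the cited argument (and your reading of the statement --- ambient Gaussian dimension $n$, with $d$ the constant number of halfspaces per event --- is the intended one, despite the $\mathcal{N}(\mathbf{0},\mathbf{I}_d)$ typo), but since you are reconstructing the proof from scratch, the execution matters, and there is a genuine gap in it.

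The gap is the step where you claim that $v \mapsto \Exp[F(\vx) \mid x_1 = x_1^*, \ldots, x_{t-1} = x_{t-1}^*, x_t = v]$ is a smooth ``Gaussian-mollified'' map whose variation across one grid cell can be driven below $\epsilon/(3n)$ with polynomially fine spacing. This is false in general, for two reasons. First, the map need not even be continuous: at the last coordinate nothing remains to average over, and already at intermediate steps any term $i$ whose vectors $\va_{ij}$ vanish on the not-yet-fixed coordinates contributes a step function of $v$; a uniform grid can then miss the (possibly very short) intervals on which the function exceeds its Gaussian mean, so the averaging argument does not transfer to a grid point. Second, even where mollification is present, the Lipschitz constant of the $i$-th term is of order $|a_{ij,t}|/\lVert \vu_{ij}\rVert$, where $\vu_{ij}$ is the restriction of $\va_{ij}$ to the unfixed coordinates; this ratio is governed by the input numbers themselves (and $\lVert\vu_{ij}\rVert$ can be exponentially small in the bit length), not by $n, m, 1/\epsilon$, so no grid of size polynomial in $n, m, 1/\epsilon$ guarantees your per-cell variation bound. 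This is precisely the known crux of derandomising Gaussian/hyperplane roundings --- it is why the Mahajan--Ramesh derandomisation~\cite{Mahajan99:sicomp} is so delicate --- and resolutions in the literature either discretise the Gaussian itself (replacing each coordinate by a normalised sum of Rademacher bits, cf.\ the distribution $\NBin(n)$ defined, and otherwise unused, in this paper's preliminaries), so that every conditioning step is a choice among finitely many explicitly enumerable options, or carry out a far more careful interval-refinement analysis; either way a new idea is needed beyond ``fine grid plus smoothness''. Your remaining ingredients --- truncation at $T = \Theta(\sqrt{\log(nm/\epsilon)})$, and evaluating constant-dimensional Gaussian polyhedron probabilities to additive accuracy $\eta$ in time polynomial in $1/\eta$ --- are fine in principle; the missing piece is a sound rule for fixing each coordinate.
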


\maindet*

\begin{proof}
    Let $G=(V,E)$, where $V = [n]$ and $m = |E|$. Assume that $n \geq k$ (otherwise simply check all possible colourings). By the analysis of our randomised algorithm from Theorem~\ref{thm:main}, using SDP we can find, in polynomial time with respect to $G$ and $\log(1 / \epsilon) + O(1)$, a set of vectors $\va_1, \ldots, \va_n$ such that $\va_i \cdot \va_i = 1$, $\va_i \cdot \va_j \geq -1/(k-1)$ for $i \neq j$ and, if $\vx_1, \ldots, \vx_\ell \sim \mathcal{N}(\mathbf{0}, \mathbf{I}_n)$ are normally distributed variables, then
    \[
    \frac{1}{m} \sum_{(i, j) \in E} \Pr_{\vx_1, \ldots, \vx_\ell} \left[ \arg \max_{c} \vx_c \cdot \va_i \neq 
    \arg \max_c \vx_c \cdot \va_j \right] \geq \alpha_{k\ell} \rho - \frac{\epsilon}{2}.
    \]
    Now, let $\vx = (\vx_1, \ldots, \vx_\ell) \sim \mathcal{N}(\mathbf{0}, \mathbf{I}_{\ell n})$, and define $\va_{ic}$ such that $\vx \cdot \va_{ic} = \vx_c \cdot \va_i$; in other words, pad out $\va_i$ with $(\ell - 1) n$ zeroes. We first claim that the event
    \[
    \arg \max_{c} \vx_c \cdot \va_i \neq \arg \max_c \vx_c \cdot \va_j
    \]
    can be seen as the disjoint union of $\ell(\ell-1)$ intersections of $2(\ell - 1)$ hyperplanes in the space of $\vx$. To express it in this way, first fix the value of the respective sides to $c_0 \neq c_1$, where $c_0, c_1 \in [\ell]$, in $\ell (\ell-1)$ ways. Observe that the event that $\arg \max_c \vx_c \cdot \va_i = c_0$ is the same as
    \[
    \bigwedge_{c \neq c_0} \vx_{c_0} \cdot \va_i > \vx_{c} \cdot \va_i.
    \]
    Now, using the notation from before, this is equivalent to 
    \[
    \bigwedge_{c \neq c_0} \vx \cdot (\va_{i c_0} - \va_{i c}) > 0.
    \]
    It follows that
    \begin{multline*}
    \alpha_{k\ell} \rho - \frac{\epsilon}{2} \leq 
    \frac{1}{m} \sum_{(i, j) \in E} \Pr_{\vx_1, \ldots, \vx_\ell} \left[ \arg \max_{c} \vx_c \cdot \va_i \neq 
    \arg \max_c \vx_c \cdot \va_j \right] \\
    =
    \frac{1}{m} \sum_{(i, j) \in E} \sum_{c_0 \neq c_1} \Pr_{\vx_1, \ldots, \vx_\ell} \left[
    \bigwedge_{c \neq c_0} \vx \cdot (\va_{ic_0} - \va_{ic}) > 0
    \land
    \bigwedge_{c \neq c_1} \vx \cdot (\va_{jc_1} - \va_{jc}) > 0
    \right].
    \end{multline*}
    By Theorem~\ref{thm:derandGaussian} for $d = 2(\ell - 1)$, in polynomial time with respect to $n, m, 1 / \epsilon$, we can find particular values $\vx^*$ such that 
    \begin{multline*}
    \frac{1}{m} \sum_{(i, j) \in E} \sum_{c_0 \neq c_1} \left[
    \bigwedge_{c \neq c_0} \vx^* \cdot (\va_{ic_0} - \va_{ic}) > 0
    \land
    \bigwedge_{c \neq c_1} \vx^* \cdot (\va_{jc_1} - \va_{jc}) > 0
    \right] \\
    \geq \alpha_{k\ell} \rho - \frac{\epsilon}{2} - \frac{\epsilon}{2} = \alpha_{k\ell} \rho - \epsilon.
    \end{multline*}
    Defining $(\vx_1^*, \ldots, \vx_\ell^*) = \vx^*$, this is equivalent to
    \[
    \frac{1}{m} \sum_{(i, j) \in E} [\arg \max_c \vx_c^* \cdot \va_i \neq
    \arg \max_c \vx_c^* \cdot \va_j] \geq \alpha_{k\ell} \rho - \epsilon.
    \]
    In other words, if we set the colour of vertex $i$ to $\arg \max_c \vx_c^* \cdot \va_i$, then the resulting $\ell$-colouring will have value $\alpha_{k\ell} \rho - \epsilon$, as required.
\end{proof}

\section{\texorpdfstring{Algorithm for fixed $\bm{k}$ and large $\bm{\ell}$}{Algorithm for fixed k and large l}}

We show the following theorem.

\thmbigk*

Note that Theorem~\ref{thm:bigk} together with Theorem~\ref{thm:hardness} do not contradict the \UGC~and $\P \neq \NP$, since Theorem~\ref{thm:hardness} only works for bounded $\ell$.
\begin{proof}
As always, we will only care about when $\ell$ grows large. We will first give a randomised algorithm, and then derandomise it. We solve the same semi-definite program as in Theorem~\ref{thm:main} (which is also the same as in~\cite{FJ97, Karger98:jacm}), i.e.
\begin{maxi*}|s|
{}{\frac{1}{m} \sum_{(i, j) \in E} \frac{k - 1}{k}(1 - \va_i \cdot \va_j)} 
{}{}
\addConstraint{\va_i \cdot \va_i = 1}{}
\addConstraint{\va_i \cdot \va_j \geq -\frac{1}{k-1}, i \neq j} 
\addConstraint{\va_i \in \mathbb{R}^n}. 
\end{maxi*}
As in Theorem~\ref{thm:main}, the value is at least $\rho$. We now randomly round in the following way: Sample $t = \lfloor \log_2(\ell) \rfloor$ random hyperplanes that pass through the origin $H_1, \ldots, H_t$ in $n$ dimensions; then, to colour node $i$ check on which side of $H_1, \ldots, H_t$ the vector $\va_i$ is, and depending on this assign a unique colour. Note that we use at most $2^t \leq \ell$ colours in this way. Fix an edge $(i, j)$ and consider $a = \va_i \cdot \va_j$; what is the probability that the colours assigned to $i$ and $j$ are different? Note that the probability that $\va_i, \va_j$ are separated by one hyperplane among $H_1, \ldots, H_t$ is just $\frac{1}{\pi} \arccos a$ (this observation is originally from~\cite{GW95}). So the probability that $\va_i, \va_j$ will be separated by at least one hyperplane is
\[
1 - \left(1 - \frac{1}{\pi} \arccos a\right)^t.
\]
Now, the approximation ratio is given by
\[
\alpha_{k\ell}' = \min_{-1 / (k-1) \leq a < 1} \frac{k (1 - (1 - \arccos a / \pi)^t)}{(k-1) (1 - a)}.
\]
We first deal with $a$ around a neighbourhood of 1, similarly to Lemma~\ref{lem:nonengativeBound}. We claim that there exists some $0 < a_k < 1$ such that
\[
\frac{\arccos a}{\pi} \geq \frac{(k-1)(1-a)}{k}
\]
for all $a \in (a_k, 1]$.
Indeed, consider $\arccos a / \pi - (k - 1)(1 - a) / k$. The derivative tends to $-\infty$ as $a \to 1$ (from below), so for some neighbourhood of $1$ the derivative is negative. Suppose $(a_k, 1]$ is this neighbourhood. Thus the function is decreasing on this interval. Since the function is equal to 0 at $1$, our conclusion follows.

Now, observe that $\arccos a / \pi \in [0, 1]$, hence
\[
1 - \left( 1 - \frac{\arccos a}{\pi}\right)^t \geq 1 - \left(1 - \frac{\arccos a}{\pi}\right) = \frac{\arccos a}{\pi} \geq \frac{(k - 1)(1 - a)}{k}
\]
when $a \in (a_k, 1]$. Hence the expression minimised in the definition of $\alpha_{k\ell}$ is at least 1 whenever $a \in (a_k, 1]$, and thus does not affect the value of $\alpha_{k\ell}'$. We now focus on the case $-1 / (k - 1) < a \leq a_k < 1$.

Define
\[
f(a) = \frac{1 - (1 - \arccos a / \pi)^t}{1 - a}.
\]
Observe that
\[
(a - 1)^2 f'(a) = 1 - \frac{t}{\pi} \sqrt{\frac{1 - a}{1+a}}\left( 1 - \frac{\arccos(a)}{\pi}\right)^{t-1} - \left(1 - \frac{\arccos(a)}{\pi}\right)^t.
\]
Note that for large enough $t$ (i.e.~large enough $\ell$), we have that $f'(a) > 0$ for $-1/(k-1) \leq a \leq a_k < 1$. (The size required of $\ell$ depends on $a_k$ and hence on $k$.) Hence, 
\[
\alpha_{k\ell}' = \frac{k}{k-1} \frac{1 - \left(1 - \frac{1}{\pi} \arccos \left(-\frac{1}{k-1}\right)\right)^t}{\left(1 + \frac{1}{k-1}\right)} = 1 - \left(1 - \frac{1}{\pi} \arccos\left(- \frac{1}{k-1}\right)\right)^t.
\]
(This value is indeed less than 1, so $a\in (a_k, 1]$ did not matter.) We now observe that for any fixed $k > 2$,
\[
X_k \coloneqq 1 - \frac{1}{\pi}\arccos\left(- \frac{1}{k-1}\right) \in \left(0, \frac{1}{2}\right).
\]
Define $u_k = -\log_2(X_k) > 1$. Hence
\[
\alpha_{k\ell}' = 1 - X_k^t = 1 - 2^{\log_2X_k \lfloor \log_2 \ell \rfloor}.
\]
Observe that
\begin{multline*}
\log_2 X_k \lfloor \log_2 \ell \rfloor \leq \log_2X_k (-1 +  \log_2 \ell) = -\log_2 X_k + \log_2 X_k \log_2 \ell \\
= -\log_2 X_k - u_k \log_2 \ell.
\end{multline*}
Thus
\[
\alpha_{k\ell}' \geq 1 - 1 / (X_k \ell^{u_k}) = 1 - O(1 / \ell^{u_k}).
\]

We now turn to derandomising this algorithm. It is sufficient to show that the event that two vectors $\va_i, \va_j$ are properly cut by one of the $H_1, \ldots, H_t$ hyperplanes is the disjoint union of the intersection of constantly many half-spaces in some multivariate normal probability distribution. Then, the derandomisation works precisely as for Theorem~\ref{thm:main}, using Theorem~\ref{thm:derandGaussian}. First, we must express our hyperplanes $H_1, \ldots, H_t$ in terms of normal variables. As was first observed by~\cite{FJ97}, a uniformly random hyperplane $H_i$ can be sampled by taking the set of points at equal distance between two vectors $\vx_i, \vy_i \sim \mathcal{N}(\mathbf{0}, \mathbf{I}_n)$, and the points to one side or the other of the hyperplane are those points closer (in terms of inner product) to $\vx_i$ or $\vy_i$ respectively. We label the two sides of $H_i$ with $0$ and $1$, with side $0$ containing $\vx_i$ and side 1 containing $\vy_i$, and take the convention that if a vector is on $H_i$ then it is on side $0$. Then the event that $\va_i$ is on side 0 of $H_j$ is
\[
\va_i \cdot \vx_j \geq \va_i \cdot \vy_j.
\]
Call this event $E(i, j, 0)$, and the complementary event $E(i, j, 1)$. If as before we write
\[
\vx = (\vx_1, \ldots, \vx_n, \vy_1, \ldots, \vy_n) \sim \mathcal{N}(\mathbf{0}, \mathbf{I}_{2tn}),
\]
then each event $E(i, j, x)$ is equivalent to $\vx$ belonging to a half-space. Now, the event that vertex $i$ is assigned colour $c$ (call it $E(i, c)$) is equivalent to a conjunction of $t$ of these events, one for each hyperplane $H_1, \ldots, H_t$. Furthermore, the event that $\va_i, \va_j$ are properly cut is equivalent to the disjoint union of at most $2^t\times 2^t$ of conjunctions of these events, namely
\[
\bigvee_{c \neq c'} E(i, c) \land E(j, c').
\]
Hence we can derandomise as before, and our conclusion follows.
\end{proof}

\section{Hardness}\label{sec:hardness}

In this section we use the approach of Khot, Kindler, Mossel and O'Donnell~\cite{KKMO07} and of Guruswami and Sinop~\cite{Guruswami13:toc} respectively to prove the following hardness result.

\hardness*

The conditional bound in the first bullet point matches the bound in Theorem~\ref{thm:main} (up to the
asymptotic error terms), \emph{but only when $\ell$ is bounded by some $M(k)$
that is strictly smaller, asymptotically, than the superpolynomial function
$e^{\sqrt[3]{k}}$}. Throughout this entire section, fix the function $M(k)$ and
$2 \leq k \leq \ell$. 

In the rest of this introduction to Section~\ref{sec:hardness}, we give a brief
overview of the proof of Theorem~\ref{thm:hardness} and how it differs from
existing work. All definitions not stated here explicitly can be found, together
with all details, in later subsections.
We first recall the definition of label cover.

\newcommand{\Va}{\ensuremath{V_A}}
\newcommand{\Vb}{\ensuremath{V_B}}
\newcommand{\Vbp}{\ensuremath{V_{B'}}}

\begin{definition}
An instance of \emph{label cover with $p$-to-1 constraints with domain size $r$}
  is a tuple $I = (V = \Va \cup \Vb, E, \pi)$, where $(\Va \cup \Vb, E)$ is
  a  bipartite graph, and for each edge $(a, b) \in E$ we have a 
  constraint\footnote{We note that alternatively we could have defined our constraints as $p$-to-1 relations $\pi \subseteq [pr] \times [r]$ i.e.~relations where for every $x \in [r]$ there exist exactly $p$ values $y \in [pr]$ such that $(y, x) \in \pi$, and furthermore every $y \in [pr]$ corresponds to exactly one $x \in [r]$ such that $(y, x) \in \pi$. To translate from this view to ours, map every $x \in [pr]$ to that $y \in [r]$ such that $(x, y) \in \pi$; to translate from our view to this one, take the graph of the function we use as a constraint.}
  $\pi_{a,b} : [pr] \to [r]$ that is $p$-to-1; i.e., for
  every $x\in [r]$ there are precisely $p$ values $y\in [pr]$ such that
  $x = \pi_{a, b}(y)$.
  We call the instance left-regular if every vertex $a \in \Va$ has the same degree. A solution to this instance is a mapping $c$ that
  takes $\Va$ to $[r]$ and $\Vb$ to $[pr]$. The value of the solution is the
  proportion of edges $(a, b) \in E$ with 
  $c(a) = \pi_{a,b}(c(b))$. 
  The value of the instance is the maximum value of any solution.

The problem of $(1 - \eta, \eta)$-approximating a label cover with $p$-to-1 constraints with domain size $r$ is the following: Given an instance $I$ of label cover with $p$-to-1 constraints and domain size $r$, decide if its value is at least $1 - \eta$ or at most $\eta$.
\end{definition}

The following two are not the original forms of the Unique Games Conjecture or the 2-to-1 Theorem, but they are equivalent to them due to the reductions in~\cite{KR08} --- the original forms considered weighted non-left-regular label cover instances.

\begin{conjecture}[Unique Games Conjecture (\UGC)~\cite{Khot02stoc}]
For every small $\eta > 0$, there exists an $r \in \mathbb{N}$ such that it is \NP-hard to $(1 - \eta, \eta)$-approximate a left-regular label cover with 1-to-1 constraints and domain size $r$.
\end{conjecture}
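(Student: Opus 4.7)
The statement above is the Unique Games Conjecture, a longstanding open problem that is not known to follow from $\P \neq \NP$. No proof is known, and I do not have one; what follows is the broad strategy that has had partial success on related (weaker) conjectures, together with an honest assessment of where it breaks down.

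The plan would be to construct a two-query PCP verifier with completeness $1 - \eta$ and soundness $\eta$ whose accepting configurations are parameterised by a bijection $\pi : [r] \to [r]$: on each queried edge, the verifier accepts iff $\sigma = \pi(\tau)$. Given such a verifier, standard transformations (edge replication and sparsification) enforce left-regularity without disturbing the gap, yielding hardness of $(1-\eta, \eta)$-approximation of left-regular 1-to-1 label cover with domain size $r$. Concretely, I would start from a strong-soundness Label Cover instance obtained from the PCP theorem followed by parallel repetition, and then compose it with an inner dictatorship test on the Long Code in the style of~\cite{KKMO07}. For the resulting constraints to be 1-to-1, the inner test must query pairs of Long Codes that are related by a permutation of coordinates, and the soundness analysis must certify that any strategy with all low influences on at least one side cannot exceed acceptance probability $\eta$ while respecting this bijective structure.

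The central obstacle, which prevents this plan from going through, is that every known soundness analysis with $\eta \to 0$ implicitly couples $p \geq 2$ answers on one side of the bipartite graph to a single answer on the other. Raz's parallel repetition yields Label Cover with arbitrarily small soundness but destroys the 1-to-1 structure. Fourier-analytic analyses that use noise operators on $\{0,1\}^r$, together with invariance or hypercontractivity, naturally give rise to 2-to-1 tests. The Grassmann-based hypercontractivity of Khot, Minzer and Safra~\cite{Khot23:annals}, which resolved the 2-to-1 Theorem, also appears inherently 2-to-1: soundness is proved by comparing two Grassmann subspaces that agree on a large common subspace, a step that seems to require at least a constant-to-one folding. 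A genuine proof of \UGC{} along these lines would require a new soundness framework --- for instance, a hypercontractive or pseudorandomness theorem on a graph whose edges correspond to permutations rather than to coincidences of subspaces or flips of bits --- and I do not see a candidate for such a tool. For this reason, \UGC{} remains open, and the above outline should be read as an honest roadmap of the obstacle rather than as a viable plan of attack.
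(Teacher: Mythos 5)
You are right that this statement is the Unique Games Conjecture itself: the paper states it as a conjecture and assumes it, offering no proof --- indeed none is known --- so there is no proof of the paper to compare your outline against, and declining to claim one is the correct call. The only substantive claim the paper attaches to this particular formulation is that the left-regular, unweighted version stated here is equivalent to Khot's original formulation~\cite{Khot02stoc} via the reductions of~\cite{KR08}; your remark about ``edge replication and sparsification'' is the informal counterpart of that step, and if anything were to be justified here it would be that equivalence (by citing or reproducing~\cite{KR08}), not \UGC{} itself. Your diagnosis of why parallel repetition, Long-Code tests, and the Grassmann-based machinery behind the 2-to-1 Theorem~\cite{Khot23:annals} stop short of genuinely 1-to-1 constraints is accurate and consistent with how the paper uses the 2-to-1 Theorem only for its unconditional (weaker) bound.
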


\begin{theorem}[2-to-1 Theorem~\cite{Khot23:annals}]
For every small $\eta > 0$, there exists an $r \in \mathbb{N}$ such that it is \NP-hard to $(1 - \eta, \eta)$-approximate a left-regular label cover with 2-to-1 constraints and domain size $r$.
\end{theorem}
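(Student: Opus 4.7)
The plan is to derive this form of the 2-to-1 Theorem from the original result of Khot, Minzer and Safra \cite{Khot23:annals} combined with the structural reductions of Khot and Regev \cite{KR08}. The statement from \cite{Khot23:annals} yields $(1-\eta,\eta)$-\NP-hardness of 2-to-1 label cover, but only on \emph{weighted} and possibly \emph{non-left-regular} bipartite instances. The task is therefore to transform, for every $\eta > 0$, such a weighted non-regular instance into an unweighted left-regular one while (i) preserving the 2-to-1 property of every constraint, (ii) preserving the domain size up to a function of $\eta$ alone, and (iii) losing at most an arbitrarily small additive slack in completeness and soundness.

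The unweighting reduction of \cite{KR08} begins by clearing rational edge weights: each weighted edge $(a,b)\in E$ is replaced by a number of parallel copies proportional to its weight, each inheriting the same constraint $\pi_{a,b}$. Since $\pi_{a,b}$ is never modified, the 2-to-1 property is trivially preserved. The left-regularization reduction then replaces each $a\in\Va$ by enough copies so that every left vertex attains a fixed common degree $D$; incident edges are distributed evenly among the copies, once again carrying the original constraints unchanged. The alphabet on $\Va$ and $\Vb$ is not touched at any stage, so the domain size $r$ in the output coincides with that of the starting instance, which itself depends only on the $\eta$ handed to \cite{Khot23:annals}.

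Completeness transfers by the obvious lift: given an optimal labelling $c$ of the original instance, assign each copy of $a\in\Va$ the label $c(a)$, and similarly for $\Vb$; every edge's satisfaction status is unchanged under the lift. Soundness transfers by the standard copy-averaging argument: given a labelling $c'$ of the regularized instance of value $v$, define a randomised labelling of the original instance by sampling, for each original vertex, the label of a uniformly chosen copy. The expected fraction of weighted edges satisfied by this labelling equals $v$, so any $c'$ of value $>\eta$ in the regularized instance yields a random labelling of the original of expected value $>\eta$, and hence a deterministic labelling of value $>\eta$ by averaging, contradicting the soundness guarantee of \cite{Khot23:annals}.

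The main technical obstacle is bookkeeping: one must ensure that degree-regularization does not inflate $r$ and that the number of copies introduced depends only on $\eta$ and the (fixed, finite) structure of the starting instance. This is standard, since the starting instance's parameters are themselves a function of $\eta$ given by \cite{Khot23:annals}, and the regularization procedure does not interact with the constraint alphabet at all; the alphabet blowup is a non-issue. Setting the final $\eta$ equal to the one supplied to \cite{Khot23:annals} (modulo a harmless small rescaling to absorb discretisation errors when approximating the rational weights) completes the reduction and yields the stated left-regular, unweighted form of the theorem.
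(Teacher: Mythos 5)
Your proposal matches the paper's treatment exactly: the paper does not prove this statement but simply notes that it follows from the original (weighted, non-left-regular) 2-to-1 Theorem of Khot, Minzer and Safra together with the standard unweighting and left-regularization reductions of Khot and Regev~\cite{KR08}, which is precisely the argument you spell out. Your elaboration of the completeness/soundness transfer under copying is correct and consistent with that cited reduction.
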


Theorem~\ref{thm:hardness} follows 
from Propositions~\ref{prop:bound} and~\ref{prop:reduction} stated below.
Proposition~\ref{prop:bound} serves the same role as~\cite[Proposition
12]{KKMO07} and~\cite[Proposition 3.21]{Guruswami13:toc}; we reprove it here
since the promise version does not immediately follow from the non-promise
version, and since Lemma~\ref{lem:finalPart} (needed in the proof of
Proposition~\ref{prop:bound}) fixes a small bug in the published
proofs. On the other hand, Proposition~\ref{prop:reduction} is a standard PCP
construction, analogous to~\cite[Section 11.4]{KKMO07} and~\cite[Section
3.4]{Guruswami13:toc}; we prove it here in a unified way (covering
simultaneously 2-to-1 and 1-to-1 constraints) rather than repeating most of the
proof twice. The way the unification works is similar to the proof in~\cite{GS20:icalp}.

\begin{restatable}{proposition}{propbound}\label{prop:bound}
    Fix $p \in \{1, 2\}$ and $2 \leq k \leq \ell$ such that $\ell \leq M(k)$.
    Suppose $T$ is a symmetric Markov operator on $[k^p]$ with spectral radius
    $0 < c / (k - 1) < 1$, where $c \leq 4$. Then there exists $\tau > 0$ and $d \in \mathbb{N}$ such that for any $f : {[k^p]}^{r} \to \Delta_\ell$ with $\Inf_i^{\leq d}(f) \leq \tau$ for all $i \in [r]$, we have
    \[
    \langle f, T^{\otimes n} f \rangle \geq \frac{1}{\ell} - \frac{2 c \ln \ell}{k \ell} - 
    o\left(\frac{\ln \ell}{k \ell}\right).
    \]
\end{restatable}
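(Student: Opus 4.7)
The plan is to follow the invariance-principle strategy of~\cite{KKMO07} and~\cite{Guruswami13:toc}, reducing the problem to Gaussian space and then applying the Taylor expansion of $N_\ell$ derived in the proof of Proposition~\ref{prop:boundskl}.

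First I would write the simplex-valued function as a tuple $f = (f_1, \ldots, f_\ell)$ with $f_i : {[k^p]}^r \to [0, 1]$ and $\sum_i f_i \equiv 1$. Linearity then gives
\[
\langle f, T^{\otimes r} f \rangle = \sum_{i=1}^\ell \langle f_i, T^{\otimes r} f_i \rangle,
\]
and each $f_i$ inherits the low-influence hypothesis with $\sum_i \Exp[f_i] = 1$. Next, apply the multidimensional invariance principle of Mossel--O'Donnell--Oleszkiewicz to replace each $f_i$ by a low-influence Gaussian-space function $g_i$ with matching low-order moments. Because the nontrivial spectrum of $T$ lies in $[-c/(k-1),\, c/(k-1)]$, the stability $\langle f_i, T^{\otimes r} f_i \rangle$ is approximated by a Gaussian noise stability at some correlation $\rho$ with $|\rho| \leq c/(k-1)$. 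Since $N_\ell$ is convex with positive first derivative at $0$, the infimum over such operators is attained at $\rho = -c/(k-1)$; this is the worst case for us.

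The third step is to invoke a Borell-style extremality result: for a measurable partition of Gaussian space into $\ell$ pieces with low influences and total mass $1$, the sum of noise stabilities at correlation $\rho$ is minimised by the \emph{simplex partition}, i.e.~the $\arg\max$ partition of $\ell$ i.i.d.~standard normals, whose stability equals $N_\ell(\rho)$. This is the content of what the excerpt calls Lemma~\ref{lem:finalPart}; the subtle point (the ``small bug'' alluded to) is that one must handle simplex-valued rather than indicator outputs and ensure the bound is uniform in $k, \ell$. Combining these steps and substituting the expansion $N_\ell(a) = 1/\ell + (1 + o(1))(2 \ln \ell/\ell)\, a + O(a^2)$ at $a = -c/(k-1)$, the $O(a^2) = O(1/k^2)$ remainder is absorbed into $o(\ln \ell / (k\ell))$ using $\ln \ell \to \infty$ together with $\ell \leq M(k)$, and we obtain
\[
\langle f, T^{\otimes r} f \rangle \geq \frac{1}{\ell} - \frac{2 c \ln \ell}{k \ell} - o\!\left(\frac{\ln \ell}{k\ell}\right).
\]

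The chief obstacle is quantitative: the error in the invariance principle depends on $\ell$ (through the number of output coordinates and the required Hermite degree $d$), so we must choose $\tau$ small and $d$ large enough that, uniformly over $k \leq \ell \leq M(k)$, the combined error from smoothing, from invariance, and from the simplex extremality bound is absorbed into $o(\ln \ell / (k \ell))$. The hypothesis $M(k) = o(e^{\sqrt[3]{k}})$ is precisely the regime in which this slack closes; carefully tracking the $\ell$-dependence of each error term is the main technical hurdle, and is what forces the superpolynomial but sub-exponential cap on $\ell$.
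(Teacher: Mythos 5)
There are two genuine gaps here, and both concern the core of the argument rather than bookkeeping. First, your third step appeals to a Borell-style extremality statement for $\ell$-part low-influence partitions of Gaussian space (that the $\arg\max$/simplex partition minimises the summed noise stability at negative correlation, with value $N_\ell(\rho)$). No such theorem is available — this is essentially the Standard Simplex / ``Plurality is Stablest'' type statement, which is open (and false in some parameter regimes), and the whole point of the KKMO-style argument is to avoid needing it. You have also misidentified Lemma~\ref{lem:finalPart}: it is not a Gaussian extremality result at all, but an elementary calculus lemma lower-bounding $\sum_i F_T(\mu_i)$ for $F_T(x)=x^2(1+T\ln x)$ on the simplex. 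The paper's proof works coordinate-wise with the $[0,1]$-valued functions $f^i$: Lemma~\ref{lem:Sanglerelation} lower-bounds $\langle f^i, T^{\otimes r} f^i\rangle$ by $\mu_i^2$-terms minus the Bonami--Beckner stability $\S_{c/(k-1)}(f^i)$ (note this is only a one-sided bound, since the eigenvalue products of $T^{\otimes r}$ need not equal $\rho^{|\vx|}$), then Theorem~\ref{thm:modifiedMOO} bounds $\S_{c/(k-1)}(f^i)$ by the two-set Gaussian quadrant probability $\Lambda_{c/(k-1)}(\mu_i)$, estimated via Theorem~\ref{thm:estimate} and Proposition~\ref{prop:bigmu}; only then does the elementary optimisation over $(\mu_1,\ldots,\mu_\ell)$ (Lemma~\ref{lem:finalPart}, after splitting indices by the size of $\mu_i$) enter. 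The hypothesis $\ell\le M(k)=o(e^{\sqrt[3]{k}})$ is used to make the exponentially small error terms $\ell e^{-k/3c}$ and $\ell e^{-\sqrt[3]{k-1}}$ negligible and to satisfy $\ell<e^{1/T}$ in Lemma~\ref{lem:finalPart}, not to control $\ell$-dependent invariance errors ($\tau,d$ are simply allowed to depend on $k,\ell$).

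Second, your final asymptotic step fails quantitatively. Substituting the Taylor expansion of $N_\ell$ at $a=-c/(k-1)$ necessarily incurs the even-part loss of order $a^2/2=\Theta(1/k^2)$, and $O(1/k^2)$ is \emph{not} $o(\ln\ell/(k\ell))$ throughout the allowed range: that absorption needs $\ell=o(k\ln\ell)$, whereas $\ell$ may be superpolynomial in $k$ (e.g.\ $\ell\sim e^{k^{1/4}}$ makes $\ln\ell/(k\ell)$ exponentially small while $1/k^2$ is only polynomially small). This quadratic loss is exactly the $-O(1/k^2)$ term present in the algorithmic bound of Theorem~\ref{thm:main}, so any route through $N_\ell(a)$ at $a=-c/(k-1)$ can only reproduce that weaker bound, not the stated hardness bound. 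The paper avoids the loss because $\Lambda_\rho(\mu)\le \mu^2+2\rho\mu^2\ln(1/\mu)(1+o(1))$ retains the $\mu^2$ term with coefficient exactly $1$ in the small-$\mu$ regime, with the remaining regimes handled separately (the $A$/$B$ split) at only exponentially small cost.
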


\begin{restatable}{proposition}{propreduction}\label{prop:reduction}
    Let $p \in \{1, 2\}$ and $2 \leq k \leq \ell$. Assume that there is a
    colourful symmetric Markov operator $T$ on $[k^p]$ and $\tau > 0, d \in
    \mathbb{N}$ such that for any $f : {[k^p]}^r \to \Delta_\ell$ with
    $\Inf_i^{\leq d}(f) \leq \tau$ for all $i \in [n]$, we have that $\langle f,
    T^{\otimes r} f \rangle \geq 1 - \beta$ for some $\beta \in (0, 1)$. Assume
    further that all the values of $T$ are nonnegative integer multiples of a
    rational number $L$. Then, assuming that $(1 - \eta, \eta)$-approximating a
    label cover instance with $p$-to-1 constraints is \NP-hard, for any small
    enough $\epsilon$, it is \NP-hard to decide whether a given graph $G$ has a $k$-colouring of value $1 - \epsilon$, or not even an $\ell$-colouring of value $\beta + \epsilon$. 
\end{restatable}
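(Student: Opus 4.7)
The plan is a standard long-code reduction from $p$-to-1 label cover to the promise $k$- vs. $\ell$-colouring problem. Given a left-regular label cover instance $I = (\Va \cup \Vb, E, \pi)$ which is $(1-\eta, \eta)$-hard to approximate, I build a graph $G$ as a disjoint union of ``long-code blocks'' $\{a\} \times [k^p]^r$, one per $a \in \Va$. For each $b \in \Vb$ and each ordered pair $(a, a')$ of neighbours of $b$, I insert edges between $(a, x)$ and $(a', x')$ with multiplicity proportional to the probability that $T^{\otimes r}$ produces a pair whose two halves are the lifts of $x$ and $x'$ along the projections $\pi_{a,b}$ and $\pi_{a',b}$. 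Since the entries of $T$ are nonnegative integer multiples of the rational $L$, this weighted multigraph is realisable as an unweighted (multi)graph of size polynomial in $|I|$, $k^p$, and $1/L$.

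For completeness, take a labelling $c$ of $I$ of value $\geq 1 - \eta$ and colour $(a, x)$ by the ``dictator'' that reads off coordinate $x_{c(a)} \in [k^p]$ via a fixed map $[k^p] \to [k]$ supplied by the colourful structure of $T$: colourfulness is exactly the promise that the support of $T$ consists of pairs that are distinct under this projection. On every edge of $G$ arising from a label-cover edge satisfied by $c$, the endpoints then receive different colours, which gives a $k$-colouring of $G$ of value $\geq 1 - O(\eta) \geq 1 - \epsilon$ once $\eta$ is chosen small enough. For soundness, given an $\ell$-colouring of $G$ of value $\geq \beta + \epsilon$, let $f_a : [k^p]^r \to \Delta_\ell$ record the colour-indicator of $(a, x)$. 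Expanding the value of the colouring through the construction of $G$, it can be rewritten as $\Exp_{b, a, a'}[1 - \langle g_{a,b}, T^{\otimes r} g_{a',b}\rangle]$, where $g_{a,b}$ denotes the lift of $f_a$ along $\pi_{a,b}$ and $a, a'$ are random neighbours of $b$. An averaging step isolates an $\Omega(\epsilon)$-fraction of $b \in \Vb$ on which this inner product is bounded away from $1-\beta$; on these, the contrapositive of the hypothesis on $\langle f, T^{\otimes r} f\rangle$ forces $g_{a,b}$ (hence $f_a$) to have some coordinate $i \in [r]$ with $\Inf_i^{\leq d}(\cdot) > \tau$. The standard influence-decoding recipe of~\cite{KKMO07} then produces a labelling of $I$ of value $\geq \eta$: assign each $a \in \Va$ a uniformly random coordinate from the (constant-bounded, by $d/\tau$) set of coordinates with $\Inf_i^{\leq d}(f_a) > \tau$, and each $b \in \Vb$ a compatible element of $[pr]$ via the $p$-to-1 constraint.

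The main obstacle is producing a single reduction that handles both $p = 1$ and $p = 2$. In the $p = 2$ case the long-code alphabet is $[k^2]$ and the projection $\pi_{a,b} : [2r] \to [r]$ must be lifted to a map between tensor powers compatibly with $T^{\otimes r}$, so that (i) the completeness dictators still project to valid $k$-colourings (which is exactly where the colourful structure of $T$, not just its spectral bound, is used), and (ii) high-influence coordinates of the lifted $g_{a,b}$ decode back to labels in the larger set $[pr]$. This is the unification trick used in~\cite{GS20:icalp}. Once this is arranged, completeness is an elementary counting argument and soundness is a routine low-influence application of Proposition~\ref{prop:bound}, with the integrality of $T$ (in multiples of $L$) ensuring that the entire reduction runs in polynomial time.
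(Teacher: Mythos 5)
Your high-level template (long-code blocks, dictator colourings plus colourfulness for completeness, low-influence decoding for soundness) is the right one, but your construction is set up on the wrong side of the bipartition relative to the $p$-to-1 structure, and for $p=2$ this is not a cosmetic difference. Concretely, your test distribution is not well defined: you place blocks $\{a\}\times[k^p]^r$ on the $\Va$ side and connect $(a,x)$ to $(a',x')$ through a common $b$ with multiplicity given by ``the probability that $T^{\otimes r}$ produces a pair whose two halves are the lifts of $x$ and $x'$ along $\pi_{a,b}$ and $\pi_{a',b}$''. The constraints map $[pr]\to[r]$, so the lift of $x\in[k^p]^r$ lives in $[k^p]^{pr}$, while $T^{\otimes r}$ produces pairs in $[k^p]^r\times[k^p]^r$: the dimensions do not match when $p=2$. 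The natural repair (sample coordinatewise $T$-transitions on the lifted domain and condition on both points being lifts) forces each coordinate of $x'$ to agree with two independent transitions originating from two different coordinates of $x$; this conditioning destroys the product structure on which both the colourful-support completeness argument and the soundness analysis rely. The paper avoids all of this by putting the long codes on the $\Vb$ side over $[k]^{pr}$, factoring each $p$-to-1 constraint as a permutation of $[pr]$ composed with the fixed projection $\sigma$, and letting the regrouped operator $T^{\otimes r}$ act blockwise in a common domain attached to $a$. (Also note the instances are only guaranteed left-regular, i.e.\ regular on $\Va$, whereas your accounting averages over pairs of neighbours of $b$.)

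The second, independent gap is in soundness. You rewrite the value as $\Exp_{b,a,a'}\bigl[1-\langle g_{a,b}, T^{\otimes r} g_{a',b}\rangle\bigr]$ with two \emph{different} functions and then invoke ``the contrapositive of the hypothesis''. But the hypothesis (and Proposition~\ref{prop:bound}) bounds only the quadratic form $\langle f, T^{\otimes r}f\rangle$ of a \emph{single} function all of whose low-degree influences are small; it says nothing about bilinear forms of two distinct functions. The paper's proof needs the symmetrisation $g_a=\Exp_{(a,b)\in E}[f_b^{\pi_{a,b}}]$ --- possible precisely because both queries are pulled back, along permutations, into one common domain attached to $a$ and the two neighbours are chosen independently --- after which the value is $1-\Exp_a[\langle \overline{g_a}, T^{\otimes r}\overline{g_a}\rangle]$, and Corollary~\ref{corr:relabelling} plus a convexity step transfer a heavy coordinate of $\overline{g_a}$ to coordinates in $[pr]$ of many individual $f_b$'s. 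In your orientation there is no analogous single averaged function on the domain where $T^{\otimes r}$ acts, and your decoding (``give each $b$ a compatible element of $[pr]$ via the constraint'') satisfies only the one edge used to define compatibility: since $b$ gets a single label but has unboundedly many neighbours, you must show that for a noticeable fraction of pairs $(a,b),(a',b)$ the decoded labels satisfy $\pi_{a,b}^{-1}(c(a))\cap\pi_{a',b}^{-1}(c(a'))\neq\emptyset$, and nothing in your sketch provides that consistency. These are missing ideas rather than omitted routine details, so the proposal as written does not establish the proposition.
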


The Markov operators mentioned above will be given in the following theorems.
The notion of a ``colourful'' Markov chain merely unifies two properties that we
are interested in for our PCP construction. This property appears without a name
also in~\cite[Lemma 10]{GS20:icalp}.\footnote{\cite[Lemma 3.8]{Guruswami13:toc}
constructs a Markov chain that is colourful in our terminology, and implicitly
uses this fact in their PCP construction. However their lemma states merely that the Markov chain has diagonal elements equal to zero, which is not by itself enough to make the PCP reduction work.}

\begin{restatable}{theorem}{thmnoiseone}\label{thm:noise1}
    Fix $k \geq 2$. The Bonami-Beckner operator $T_{-1 / (k - 1)}$ on $[k]$ is a
    symmetric Markov operator that is colourful with spectral radius $1 / (k -
    1)$. Furthermore, all the values in the matrix are nonnegative integer multiples of $1 / (k - 1)$.
\end{restatable}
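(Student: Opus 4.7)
The plan is to exhibit $T_{-1/(k-1)}$ explicitly as a $k \times k$ matrix and then read off each claimed property directly. Recall that the standard Bonami--Beckner noise operator on $[k]$ at parameter $\rho$ is the stochastic operator
\[
T_\rho(x,y) \;=\; \rho \cdot [x=y] + \frac{1-\rho}{k},
\]
corresponding to the noise process which with probability $\rho$ retains the input and with probability $1-\rho$ resamples uniformly from $[k]$. Plugging in $\rho=-1/(k-1)$ collapses the diagonal entirely: a short computation gives $T(x,x) = -\tfrac{1}{k-1} + \tfrac{1}{k-1} = 0$ and $T(x,y) = \tfrac{1}{k-1}$ for $x\neq y$. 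In one line, $T = (J-I)/(k-1)$, where $J$ is the $k\times k$ all-ones matrix and $I$ the identity. This is the Markov chain that, from any state, jumps uniformly to one of the other $k-1$ states.

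With this explicit form, the four claims fall out immediately. Symmetry of $T$ is clear from the fact that $J-I$ is symmetric. The Markov property holds because the entries are $0$ or $1/(k-1) \geq 0$ and each row sums to $(k-1)\cdot \tfrac{1}{k-1} = 1$. For the spectral radius, observe that $J$ has eigenvalues $k$ (on the all-ones vector) and $0$ (multiplicity $k-1$), so $J-I$ has eigenvalues $k-1$ and $-1$ (multiplicity $k-1$), giving $T$ eigenvalues $1$ and $-1/(k-1)$. Excluding the trivial eigenvalue (as is standard for Markov operators), the spectral radius is $1/(k-1)$ as claimed. Finally, every entry of $T$ equals either $0 = 0\cdot \tfrac{1}{k-1}$ or $1\cdot \tfrac{1}{k-1}$, so all entries are nonnegative integer multiples of $1/(k-1)$.

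The one step needing care is colourfulness. This property is used in Proposition~\ref{prop:reduction} to drive the PCP reduction to graph colouring, and the natural requirement on $[k]$ is that the operator assigns zero transition probability between identical labels, so that the resulting constraint semantically enforces a proper colouring. Since our $T$ has $T(x,x)=0$ for every $x\in[k]$, this condition is satisfied by inspection. The main obstacle, such as it is, is simply tracking the correct normalisation to confirm that $\rho = -1/(k-1)$ is precisely the parameter for which the diagonal vanishes; once that is done, all the other claims follow without any serious computation.
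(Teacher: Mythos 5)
Your proof is correct and follows essentially the same route as the paper: write out $T_{-1/(k-1)} = (J-I)/(k-1)$ explicitly, note the zero diagonal gives colourfulness (which for $p=1$ is exactly the condition $T(x\leftrightarrow x)=0$), read off eigenvalues $1$ and $k-1$ copies of $-1/(k-1)$ for the spectral radius, and observe the entries are $0$ or $1/(k-1)$. The only difference is that you derive the eigenvalues from those of $J$ while the paper simply states them, which is an immaterial elaboration.
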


\begin{restatable}[{\cite[Lemma 3.8]{Guruswami13:toc}}]{theorem}{thmnoisetwo}\label{thm:noise2}
    Fix $k \geq 6$. There exists a colourful symmetric Markov operator on $[k^2]$ whose spectral radius is at most $4 / (k - 1)$. Furthermore, all the values in the matrix are nonnegative integer multiples of $1 / (k-1)(k-2)(k-3)$.
\end{restatable}

The proof of Theorem~\ref{thm:hardness} thus follows by combining all the facts listed above, together with the observation that we can take $k$ to be large, since the fact that $\ell$ is bounded by a function of $k$ means that the asymptotic term can be increased for small $k$ to make the theorem hold for small $k$.

\subsection{Fourier-analytic notions}

We closely follow the exposition of Fourier analysis on discrete domains from~\cite{Dinur06:STOC}. We also include some results from~\cite{Guruswami13:toc}. We diverge from these only in that they number their colours $0, \ldots, k - 1$, whereas we number them $1, \ldots, k$; also we simplify the notation for our Fourier coefficients.

We will be looking in general at functions of the form $[D]^r \to [\ell]$ or
$[D]^r \to [0, 1]$. Note that $[\ell]$ can be naturally embedded in the set of
probability distributions over $[\ell]$, which can be seen as the set
$\Delta_\ell = \{ (x_1, \ldots, x_\ell) \mid \sum_i x_i = 1, x_i \geq 0 \}$.
Such functions form a vector space under point-wise addition and multiplication, and they have a natural inner product, namely
\[
\langle f, g \rangle = \Exp_{\vx \sim \mathcal{U}([D]^r)} \left[ f(\vx) \cdot g(\vx) \right].
\]
The inner product induces a norm $||\cdot||_2$. In general we will assume that a variable $\vx$ that is mentioned in an expectation will be taken uniformly at random from an appropriate set. Observe that if $f(\vx) = (f^1(\vx), \ldots, f^\ell(\vx))$ and $g(\vx) = (g^1(\vx), \ldots, g^\ell(\vx))$, then
\[
\langle f, g \rangle = \sum_{i = 1}^\ell \langle f^i, g^i \rangle.
\]
So in particular $|| f ||_2^2 = \sum_{i = 1}^\ell || f_i ||_2^2$.

For any two functions $f : A \to \R, g : B \to \R$ we define $f \otimes g : A
\times B \to R$ by $(f \otimes g)(a, b) = f(a) g(b)$; thus $\otimes$ is a tensor product.
For every $[D]$, fix some orthonormal (under $\langle \cdot, \cdot \rangle$\footnote{The only difference between orthonormality of function $[D] \to \R$ under $\langle \cdot,\cdot\rangle$ and of $\R^D$ under the normal inner product is a matter of normalisation.}) basis of the set of functions $[D] \to \R$, namely some functions $\valpha_1, \ldots, \valpha_D$, such that $\valpha_1(1) = \ldots = \valpha_1(D) = 1$.\footnote{In Lemma~\ref{lem:Sanglerelation}, and only there, we will need a particular choice of $\alpha_1, \ldots, \alpha_D$. However, the lemma does not mention any quantity dependant on this choice in its statement, so the choice is ``contained'' within that lemma.}
For $\vx \in [D]^r$ we define $\valpha_\vx : [D]^r \to \R$ by
\[
\valpha_\vx = \valpha_{x_1} \otimes \cdots \otimes \valpha_{x_r}
\]
or equivalently
\[
\valpha_\vx(y_1, \ldots, y_r) = \prod_{i = 1}^r \valpha_{x_i}(y_i).
\]
It can be seen that for $\vx \neq \vy$ we have that $\valpha_\vx \perp
\valpha_\vy$. Furthermore, since there are $[D]^r$ such function (i.e.~the same as the dimension of the set of functions from $[D]^r$ to $\R$), for any $f : [D]^r \to \R$ we have
\[
f = \sum_{\vx \in [D]^r} \langle f, \valpha_\vx \rangle \valpha_\vx = \sum_{\vx \in [D]^r} \hat{f}(\vx) \valpha_\vx,
\]
where $\hat{f}(\vx) = \langle f, \valpha_\vx \rangle$. This $\hat{f}(\vx)$ is known as a Fourier coefficient.
Clearly $\widehat{af + bg} = a \hat{f} + b \hat{g}$, hence $\hat{\cdot}$ is linear. 
We have a version of Parseval's identity now, since $\{ \valpha_\vx\}_\vx$ forms a basis:
\[
|| f ||_2^2 = \sum_{ \vx \in [D]^r } \hat{f}^2(\vx).
\]
Indeed, in general we have that
\[
\langle f, g \rangle = \sum_{\vx \in [D]^r} \hat{f}(\vx) \hat{g}(\vx).
\]
We also generalise our notion of Fourier coefficient to functions $f : [D]^r \to \R^\ell$. If $f(\vx) = (f_1(\vx), \ldots, f_\ell(\vx))$, then
\[
\hat{f}(\vx) = (\hat{f_1}(\vx), \ldots, \hat{f_\ell}(\vx)).
\]
With these, Parseval's inequality generalises to
\[
|| f ||_2^2 = \sum_{i = 1}^\ell || f_i ||_2^2 = \sum_{i = 1}^\ell \sum_{\vx \in [D]^r} \hat{f_i}^2(\vx)
= \sum_{\vx \in [D]^r} | \hat{f}(\vx) |^2.
\]

We now introduce the notion of \emph{low-degree influence}. First for $\vx \in [D]^r$, we let $|\vx|$ to be the number of coordinates $i \in [r]$ such that $\vx_i \neq 1$. With this in hand, for $f : [D]^r \to \R$, $i \in [r]$, $d \leq r$, we define
\[
\Inf_{i}^{\leq d}(f) = \sum_{\substack{\vx \in [D]^r \\ \vx_i \neq 1 \\ |\vx| \leq d}} \hat{f}^2(\vx).
\]
This definition does not depend on the basis $\valpha_1, \ldots, \valpha_D$ taken initially (only that $\valpha_1(x) = 1$); for details see~\cite[Definition 2.5]{Dinur06:STOC}.

We now generalise this definition to functions of the form $f : [D]^r \to \R^\ell$; for such a function suppose $f(\vx) = ( f_1(\vx), \ldots, f_d(\vx))$. Then, for $i \in [r]$, $d \leq r$, we define
\[
\Inf_i^{\leq d}(f) = \sum_{j = 1}^\ell \Inf_i^{\leq d}(f_j).
\]
Observe that
\[
\Inf_i^{\leq d}(f) =
\sum_{j = 1}^\ell
\sum_{\substack{\vx \in [D]^r \\ \vx_i \neq 1 \\ |\vx| \leq d}} \hat{f_j}^2(\vx)
=
\sum_{\substack{\vx \in [D]^r \\ \vx_i \neq 1 \\ |\vx| \leq d}} 
\sum_{j = 1}^\ell
\hat{f_j}^2(\vx)
=
\sum_{\substack{\vx \in [D]^r \\ \vx_i \neq 1 \\ |\vx| \leq d}} 
| \hat{f}(\vx) |^2.
\]

Next, we deduce some classic inequalities for sums of low-level influences. Consider $f : [D]^r \to \Delta_\ell$. Note that
\[
|| f||_2^2 = \Exp_\vx [ |f(\vx)|^2 ] \leq 1,
\]
since for any $\vy \in \Delta_\ell$ we have $|\vy|^2 \leq 1$. Thus, by Parseval's identity, we have
\[
1 \geq || f ||_2^2 = \sum_{\vx \in [D]} | \hat{f}(\vx) |^2.
\]
Observe that the formula giving $\Inf_i^{\leq d}(f)$ as a sum of square lengths of Fourier coefficients contains each term in the sum above at most $d$ times; since the sum is at most 1, we derive that $\Inf_i^{\leq d}(f) \leq d$.

\paragraph*{Minor operations}
Consider any vector $\vx \in [D]^r$ and a function $\pi : [s] \to [r]$. Then we define $\vx^\pi \in [D]^s$ by
\[
\vx^\pi = (\vx_{\pi(1)}, \ldots, \vx_{\pi(s)}).
\]
Furthermore, consider any function $f : [D]^r \to \R^d$, and let $\pi :[r] \to [s]$. Then define $f^\pi : [D]^s \to \R^d$ by
\[
f^\pi(\vx) = f(\vx^\pi) = f(x_{\pi(1)}, \ldots, x_{\pi(r)}).
\]
Observe that when the function $\pi : [r] \to [r]$ is a bijection, and for function $f, g : [D]^{r} \to \R^d$, we have that
\[
\langle f^\pi, g^\pi \rangle=
\Exp_\vx [ \langle f^\pi(\vx), g^\pi(\vx) \rangle ] =
\Exp_\vx [ \langle f(\vx^\pi), g(\vx^\pi) \rangle ] =
\Exp_\vx [ \langle f(\vx), g(\vx) \rangle ] =
\langle f, g \rangle.
\]
Next, observe that for such $\pi$,
\[
\valpha_\vx^{\pi}(y_1, \ldots, y_r) = \valpha_\vx(y_{\pi(1)}, \ldots, y_{\pi(r)})
=
\prod_{i = 1}^r \alpha_{x_i}(y_{\pi(i)})
=
\prod_{i = 1}^r \alpha_{x_{\pi^{-1}(i)}} (y_i)
= \valpha_{\vx^{\pi^{-1}}}(y_1, \ldots, y_r).
\]
Hence $\valpha_\vx^\pi = \valpha_{\vx^{\pi^{-1}}}$.
Furthermore this implies that for bijective $\pi$,
\[
\widehat{f^\pi}(\vx)
=
\langle f^\pi, \alpha_\vx \rangle
=
\langle f, \alpha_{\vx}^{\pi^{-1}} \rangle
=
\langle f, \alpha_{\vx^\pi} \rangle
= \hat{f}(\vx^\pi).
\]

\paragraph*{Coordinate regrouping lemmas}
We reuse and slightly modify the notation from~\cite[Definition
2.6]{Dinur06:STOC}, which reappears in~\cite[Definition 3.22]{Guruswami13:toc}.
The notation there does not include the $(2)$ we use --- this will be important for us since what we do is in greater generality.

We will implicitly use the fact that $[k^2] \cong [k]^2$; fix some arbitrary bijection between the two. For $\vx \in [k]^{2r}$, we define $\overline{\vx}^{(2)} \in [k^2]^r \cong ([k]^2)^r$ by
\[
\overline{\vx}^{(2)} = ((x_1, x_2), \ldots, (x_{2r-1}, x_{2r})).
\]
Conversely, for $\vx \in [k^2]^r \cong ([k]^2)^r$, where $\vx = ((x_1, x_1'), \ldots, (x_n, x_n'))$, we have
\[
\underline{\vx}_{(2)} = (x_1, x_1', \ldots, x_n, x_n').
\]
From the definitions, $\overline{\underline{\vx}}^{(2)}_{(2)} = \vx$. Now, we define these transformations on the inputs for functions, as follows. For $f : [k]^{2r} \to \R^d, g : [k^2]^{r} \to \R^d$, we have
\begin{align*}
    \overline{f}^{(2)}(\vx) & = f(\underline{\vx}_{(2)}), \\
    \underline{g}_{(2)}(\vx) & = g(\overline{\vx}^{(2)}).
\end{align*}
Again, $\overline{\underline{f}}^{(2)}_{(2)} = f$. Next, we recall the following result.

\begin{lemma}[{\cite[Claim 2.7]{Dinur06:STOC}}]
    For $f : [k]^{2r} \to \R$, $i \in [r]$, and $1 \leq d \leq r$, we have
    \[
    \Inf_i^{\leq d}(\overline{f}^{(2)}) \leq \Inf_{2i - 1}^{\leq d}(f) + \Inf_{2i}^{\leq d}(f).
    \]
\end{lemma}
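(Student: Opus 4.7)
The plan is a standard coordinate-regrouping calculation in three steps: choose a Fourier basis on $[k^2]$ compatible with the identification $[k^2] \cong [k]^2$, derive the key identity $\widehat{\overline{f}^{(2)}}(\vx) = \hat{f}(\underline{\vx}_{(2)})$, and then finish by a union bound on indicators together with a comparison of degree constraints.

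First, I would take the tensor-product basis $\{\alpha_x \otimes \alpha_{x'}\}_{(x,x')\in[k]^2}$ on $[k^2]$ built from the fixed basis $\alpha_1, \ldots, \alpha_k$, identifying $1 \in [k^2]$ with $(1,1) \in [k]^2$. This is orthonormal with respect to the uniform expectation and has $\alpha_{(1,1)} \equiv 1$, so by the basis-independence of low-degree influence it is a permissible choice. Under it, the basis element $\valpha_\vx$ on $[k^2]^r$ satisfies $\valpha_\vx(\vy) = \valpha_{\underline{\vx}_{(2)}}(\underline{\vy}_{(2)})$. Since $\vy \mapsto \underline{\vy}_{(2)}$ is a uniform-measure-preserving bijection from $[k^2]^r$ to $[k]^{2r}$, this immediately yields
\[
\widehat{\overline{f}^{(2)}}(\vx) = \Exp_\vy[f(\underline{\vy}_{(2)}) \valpha_\vx(\vy)] = \Exp_\vw[f(\vw) \valpha_{\underline{\vx}_{(2)}}(\vw)] = \hat{f}(\underline{\vx}_{(2)}).
\]

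Second, I would substitute $\vz = \underline{\vx}_{(2)}$ in the definition of $\Inf_i^{\leq d}(\overline{f}^{(2)})$. The condition $\vx_i \neq 1$ (with $1$ corresponding to $(1,1)$) becomes $z_{2i-1} \neq 1 \vee z_{2i} \neq 1$, which via the union bound $[z_{2i-1} \neq 1 \vee z_{2i} \neq 1] \leq [z_{2i-1} \neq 1] + [z_{2i} \neq 1]$ splits the sum into two pieces, one naturally paired with $\Inf_{2i-1}^{\leq d}(f)$ and the other with $\Inf_{2i}^{\leq d}(f)$. Combining this with the Fourier identity from the first step produces the desired upper bound in the form of two sums of squared Fourier coefficients of $f$, each restricted by a single-coordinate non-triviality condition and by a low-degree condition.

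The main obstacle is matching the degree constraints. After the substitution the condition $|\vx| \leq d$ becomes $|\overline{\vz}^{(2)}| \leq d$, which is weaker than the $|\vz| \leq d$ appearing in $\Inf_{2i-1}^{\leq d}(f)$ and $\Inf_{2i}^{\leq d}(f)$; since each non-trivial pair contributes $1$ or $2$ to $|\vz|$, one only has $|\overline{\vz}^{(2)}| \leq |\vz| \leq 2|\overline{\vz}^{(2)}|$, so the naive route produces $\Inf^{\leq 2d}$ on the right rather than $\Inf^{\leq d}$. I would therefore consult~\cite[Claim~2.7]{Dinur06:STOC} carefully to see how the tighter $\Inf^{\leq d}$ bound is obtained and adapt that degree-counting argument; the Fourier identity and the union bound above will otherwise be sufficient to conclude.
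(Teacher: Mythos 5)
Your first two steps are correct, and they are in fact the entire content of the argument: choosing the tensor basis on $[k^2]$ (with the bijection $[k^2]\cong[k]^2$ fixed so that $1$ corresponds to $(1,1)$), the identity $\widehat{\overline{f}^{(2)}}(\vx)=\hat f(\underline{\vx}_{(2)})$, and the union bound on the two indicator conditions. (The paper gives no proof of this lemma at all; it is imported from~\cite[Claim 2.7]{Dinur06:STOC}.) The obstacle you flag at the end, however, is not something a cleverer degree count can remove: with degree $d$ on both sides the inequality is false as stated. Take $k=2$, $r=d=i=1$ and $f=\alpha_2\otimes\alpha_2$, i.e.\ $f(y_1,y_2)=\alpha_2(y_1)\alpha_2(y_2)$. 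Its only nonzero Fourier coefficient sits at $(2,2)$, which has two nontrivial coordinates, so $\Inf_1^{\leq 1}(f)=\Inf_2^{\leq 1}(f)=0$; on the other hand $\overline{f}^{(2)}:[k^2]^1\to\R$ is a mean-zero function of unit norm, so all of its Fourier weight lies on nonconstant basis functions of block-degree $1$, giving $\Inf_1^{\leq 1}(\overline{f}^{(2)})=1$. This is exactly the phenomenon you identified: a vector with $|\overline{\vx}^{(2)}|\leq d$ blocks can have up to $2d$ nontrivial $[k]$-coordinates, and that is tight.

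What your computation does prove is $\Inf_i^{\leq d}(\overline{f}^{(2)})\leq \Inf_{2i-1}^{\leq 2d}(f)+\Inf_{2i}^{\leq 2d}(f)$, which is the form carrying the degree $2d$ in which the claim of Dinur, Mossel and Regev is actually established; the restatement in this paper has dropped the factor $2$ in the degree. So the right conclusion is not to hunt for a sharper degree-counting argument but to record the $2d$ (more generally $pd$) version, which is all that is needed downstream: in Corollary~\ref{corr:relabelling} and in the soundness part of Proposition~\ref{prop:reduction} the degree parameter enters only through fixed constants (the threshold $\tau/2p$ and the list size $C$, via $\sum_i \Inf_i^{\leq d}(f_b)\leq d$), so replacing $d$ by $pd$ on the right merely rescales $C$ to $\max(\lceil 2p^2 d/\tau\rceil,1)$ and leaves Proposition~\ref{prop:reduction} and Theorem~\ref{thm:hardness} intact. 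In short, your argument is complete and correct for the true statement; the residual discrepancy lies in the statement as transcribed, not in your proof, and deferring to the reference in the hope of obtaining the literal $d$-to-$d$ bound would not succeed.
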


The following is an obvious corollary.

\begin{corollary}
    For $f : [k]^{2r} \to \R^\ell$, $i \in [r]$, and $1 \leq d \leq r$, we have
    \[
    \Inf_i^{\leq d}(\overline{f}^{(2)}) \leq \Inf_{2i - 1}^{\leq d}(f) + \Inf_{2i}^{\leq d}(f).
    \]
\end{corollary}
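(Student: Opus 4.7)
The plan is to reduce the vector-valued statement to the scalar-valued Lemma stated immediately above, component by component. Write $f = (f_1, \ldots, f_\ell)$ with each $f_j : [k]^{2r} \to \R$. The key observation is that the coordinate-regrouping operator $\overline{\cdot}^{(2)}$ acts on the \emph{input} only, so it commutes with taking components; hence
\[
\overline{f}^{(2)} = \bigl(\overline{f_1}^{(2)}, \ldots, \overline{f_\ell}^{(2)}\bigr).
\]
This follows directly from the definition $\overline{f}^{(2)}(\vx) = f(\underline{\vx}_{(2)})$ applied coordinate-wise in the output.

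Next, I would use the definition of low-degree influence for vector-valued functions, namely
\[
\Inf_i^{\leq d}(\overline{f}^{(2)}) = \sum_{j=1}^{\ell} \Inf_i^{\leq d}\bigl(\overline{f_j}^{(2)}\bigr),
\]
and then apply the preceding scalar Lemma to each $f_j$ to bound each summand by $\Inf_{2i-1}^{\leq d}(f_j) + \Inf_{2i}^{\leq d}(f_j)$. Summing over $j \in [\ell]$ and using the same defining identity again to collapse the sums back into vector-valued influences of $f$ yields
\[
\Inf_i^{\leq d}(\overline{f}^{(2)}) \leq \sum_{j=1}^{\ell}\bigl(\Inf_{2i-1}^{\leq d}(f_j) + \Inf_{2i}^{\leq d}(f_j)\bigr) = \Inf_{2i-1}^{\leq d}(f) + \Inf_{2i}^{\leq d}(f),
\]
which is exactly the required bound.

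There is essentially no obstacle here: once one observes that $\overline{\cdot}^{(2)}$ only rearranges inputs and that vector-valued influence is defined additively over output components, the corollary is immediate from the scalar lemma. The only thing worth being careful about is that the definition of $\Inf_i^{\leq d}$ used for vectors is precisely the sum of component scalar influences, which is exactly how the earlier exposition sets it up.
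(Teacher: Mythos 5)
Your proof is correct and follows exactly the route the paper takes: decompose $f$ into its $\ell$ scalar components (noting that $\overline{\cdot}^{(2)}$ acts only on inputs and so commutes with taking components), apply the scalar lemma to each, and sum using the additive definition of vector-valued influence. Your write-up is simply a more detailed version of the paper's one-line argument.
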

\begin{proof}
    Apply the previous lemma to each term in the sum that defines $\Inf_i^{\leq d}$ for functions to $\R^\ell$.
\end{proof}

Now, we define $\overline{\vx}^{(1)} = \underline{\vx}_{(1)} = \vx$ and likewise $\overline{f}^{(1)} = \underline{f}_{(1)} = f$, and thus immediately
\[
\Inf_i^{\leq d}(\overline{f}^{(1)}) \leq \Inf_i^{\leq d}(f).
\]
Therefore we find that
\begin{corollary}\label{corr:relabelling}
    Fix $p \in \{1, 2\}$. For $f : [k]^{pr} \to \R^\ell$, $i \in [r]$ and $d \in [r]$, we have
    \[
    \Inf_i^{\leq d}(\overline{f}^{(p)}) \leq \sum_{j = p(i - 1) + 1}^{pi} \Inf_{j}^{\leq d}(f).
    \]
\end{corollary}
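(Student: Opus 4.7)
The plan is to simply split into cases based on $p \in \{1, 2\}$, since the corollary is a uniform statement covering two essentially different scenarios and in both cases the needed inequality has already been established immediately above.

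For the case $p = 1$, I would note that by definition $\overline{f}^{(1)} = f$, so the left-hand side $\Inf_i^{\leq d}(\overline{f}^{(1)})$ equals $\Inf_i^{\leq d}(f)$. The right-hand side is the single-term sum $\sum_{j = i}^{i} \Inf_j^{\leq d}(f) = \Inf_i^{\leq d}(f)$. Hence the inequality holds (with equality). This is essentially just unwinding the definitions $\overline{f}^{(1)} = f$ and $\underline{\vx}_{(1)} = \vx$ that were set up just before the statement.

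For the case $p = 2$, I would invoke the immediately preceding corollary (the $\R^\ell$-valued version of \cite[Claim 2.7]{Dinur06:STOC}), which gives
\[
\Inf_i^{\leq d}(\overline{f}^{(2)}) \leq \Inf_{2i - 1}^{\leq d}(f) + \Inf_{2i}^{\leq d}(f).
\]
The right-hand side of our corollary is $\sum_{j = 2(i-1)+1}^{2i} \Inf_j^{\leq d}(f) = \Inf_{2i-1}^{\leq d}(f) + \Inf_{2i}^{\leq d}(f)$, matching exactly.

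There is no real obstacle here; the whole point of the corollary is to package the two cases into a single notation-uniform statement that can then be used downstream in the PCP analysis (so that 1-to-1 and 2-to-1 constraints can be treated in parallel, in the spirit of \cite{GS20:icalp}). The only thing to be careful about is making sure the index range $j = p(i-1)+1,\ldots,pi$ specialises correctly in both cases, which is immediate from substitution.
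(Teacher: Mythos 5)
Your proposal is correct and matches the paper's own argument exactly: the $p=1$ case is immediate from the definition $\overline{f}^{(1)} = f$, and the $p=2$ case is precisely the preceding $\R^\ell$-valued corollary of \cite[Claim 2.7]{Dinur06:STOC}, with the index range specialising as you describe.
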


This setup will allow us to painlessly unify the proofs of~\cite{KKMO07}
and~\cite{Guruswami13:toc}. We note that Corollary~\ref{corr:relabelling} is a special case (i.e.~for $p \in \{1, 2 \}$) of~\cite[Lemma 6]{GS20:icalp}.

\subsection{Markov operators}

We recount some definitions for Markov operators, following~\cite{Dinur06:STOC,Guruswami13:toc}, as well as an important result from~\cite{Guruswami13:toc}.

A Markov operator $T$ on $[D]$ is a $D \times D$ stochastic matrix; we say that the operator is symmetric if $T$ is symmetric. For such an operator, we say that the spectral radius $r(T)$ is the second-largest absolute value of any eigenvalue (such an operator has an eigenvalue equal to 1, and all its eigenvalues must be at most 1 in absolute value). We will usually let $T(x \leftrightarrow y)$ denote the probability that $x$ goes to $y$ i.e.~the element at position $(x, y)$ in $T$.

Such an operator operates on the space of functions $f : [D] \to \R^d$, in the following way. For $x \in [D]$, we let $T(x)$ be the distribution associated with row (or equivalently column) $x$ in $T$. Then we have
\[
(Tf)(x) = \Exp_{y \sim T(x)} [f(y)].
\]
Note that $T$ acts linearly on $f$, since $T(af + bg) = a (Tf) + b(Tg)$ by linearity of expectation. Observe that if $f(x) = (f^1(x), \ldots, f^d(x))$, then
\[
(Tf)(x) = ((Tf^1)(x), \ldots, (Tf^d)(x)).
\]

For any two operators $T, T'$ on $[D], [D']$ respectively, we define the operator $T \otimes T'$ on $[D] \times [D']$ as being the matrix which, at position $((x, x'), (y, y'))$ for $x, y \in [D], x', y' \in [D']$ has value $T(x \leftrightarrow y) T(x' \leftrightarrow y')$. (In other words, $T \otimes T'$ is the Kronecker product of $T$ and $T'$.)

Furthermore, define $T^{\otimes r} = T \otimes \ldots \otimes T$ where $T$ is multiplied $r$ times. Observe that $T^{\otimes r}$ acts on a function $f : [D]^r \to \R^d$ in the following way. Let $T^{\otimes r}(\vx)$ be the product distribution over $[D]^r$ which gives $\vy$ probability $\prod_{i = 1}^r T(x_i \leftrightarrow y_i)$. Then
\[
(Tf)(\vx) = \Exp_{\vy \sim T^{\otimes r}(\vx)}[ f(\vy) ].
\]
Furthermore, we see immediately that $(T \otimes T')(f \otimes f') = (T f)
\otimes (T g)$. Thus, in particular,
\[
T^{\otimes r} \valpha_\vx = \bigotimes_{i = 1}^r T \valpha_{x_i}.
\]

We observe that for a symmetric Markov operator $T$, we have $\langle T f, g \rangle = \langle f, T g \rangle$.

The following unifies concepts from~\cite{KKMO07,Guruswami13:toc}. The idea also appears without a name in~\cite{GS20:icalp}.

\begin{definition}
    Consider a symmetric Markov chain $T$ on $[k^p] \cong [k]^p$. We say that the Markov chain is colourful if, for any $x_1, \ldots, x_p, y_1,\ldots  y_p \in [k]$ such that $T((x_1, \ldots, x_p) \leftrightarrow (y_1, \ldots, y_p)) > 0$, we have $\{x_1, \ldots, x_p\} \cap \{y_1, \ldots, y_p\} = \emptyset$.
\end{definition}

We now introduce two very important colourful operators. The first is (a special case of) the \emph{Bonami-Beckner operator}.

\begin{definition}
    Fix $D \geq 2$. For each $- 1 / (D - 1) \leq \rho \leq 1$ we define the Bonami-Beckner operator by
    \[
    T_{\rho} = \rho \mathbf{I}_D + \frac{1 - \rho}{D} \mathbf{J}_D.
    \]
    The matrix $\mathbf{J}_D$ is the all-ones matrix of size $D \times D$. This operator is clearly symmetric and doubly stochastic i.e.~it is a
    symmetric Markov operator. Its eigenvalues are $1$ and $(k - 1)$ copies of
    $\rho$. Furthermore, any vector $\vx$ whose sum is zero (i.e.~is
    perpendicular to the all-ones vector) is an eigenvector of $T_\rho$ with eigenvalue $\rho$, as
    \[
    T_\rho \vx = \rho \vx + \frac{1 - \rho}{D} \mathbf{J}_D \vx = \rho \vx.
    \]
\end{definition}

\thmnoiseone*

\begin{proof}
    The operator is given by the symmetric matrix whose diagonal elements are zero (this is sufficient for colourfulness), and whose off-diagonal elements are $1 / (k - 1)$. The eigenvalues of this operator are $1$ and $(k - 1)$ copies of $- 1 / (k - 1)$, hence the spectral radius is $|- 1 / (k - 1) | = 1 / (k - 1)$.
\end{proof}

\thmnoisetwo*

(We note that~\cite[Lemma 3.8]{Guruswami13:toc} does not explicitly state the fact that the operator is colourful, or that its elements are multiples of $ 1/ (k-1)(k-2)(k-3)$, but these are easy to observe.)

Given these definitions, we also define the notion of noise stability.

\begin{definition}
    Let $f : [D]^r \to \R$. Then we define, for $-1 / (D - 1) \leq \rho \leq 1$, the noise stability of $f$ as
    \[
    \S_{\rho}(f) = \langle f, T_\rho^{\otimes r} f \rangle.
    \]
    Equivalently,
    \[
    \S_{\rho}(f) = \sum_{\vx \in [D]^r} \rho^{|\vx|} \hat{f}^2(\vx),
    \]
    regardless of the choice of $\alpha_1, \ldots, \alpha_D$ provided $\alpha_1(x) = 1$.
\end{definition}

\subsection{MOO theorem, bounds}

We introduce some simple definitions from~\cite{KKMO07}. (In fact, the exact definition of the quantity $\Lambda_\rho(\mu)$ is not needed, only the bound in Theorem~\ref{thm:estimate}.)

\begin{definition}
    Fix $-1 \leq \rho \leq 1$ and $0 \leq \mu \leq 1$. Let $u$ be some value such that if $x \sim \mathcal{N}(0, 1)$, then $\Pr[x \leq u] = \mu$. Let $\mathbf{\Sigma} = \begin{pmatrix} 1 & \rho \\ \rho & 1 \end{pmatrix}$, and suppose $(x, y) \sim \mathcal{N}(\mathbf{0}, \mathbf{\Sigma})$. Then, we define
    \[
    \Lambda_\rho(\mu) = \Pr[ x \leq u, y \leq u ].
    \]
\end{definition}

We will need the following theorem to prove the main technical lemma. It follows
from what is called the MOO Theorem (i.e.~Mossell, O'Donnell, Oleszkiewicz)
in~\cite{KKMO07}, proved originally by Mossel, O'Donnel and
Oleszkiewicz~\cite{Mossel10:ann}, together with~\cite[Proposition 13]{KKMO07}.

\begin{theorem}\label{thm:modifiedMOO}
    For any $k \geq 2$, $0 \leq \rho < 1$ and $\epsilon > 0$, there exists $\tau
    > 0, d \in \mathbb{N}$ such that the following holds. Suppose $f : {[k]}^r
    \to [0, 1]$ is such that $\Inf_i^{\leq d}(f) \leq \tau$ for all $i \in [n]$, and let $\mu = \Exp[f]$. Then
    \[
    \S_\rho(f) \leq \Lambda_\rho(\mu) + \epsilon.
    \]
\end{theorem}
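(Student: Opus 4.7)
The plan is to derive Theorem~\ref{thm:modifiedMOO} by chaining together two results cited in the paper: the MOO invariance principle of~\cite{Mossel10:ann} and the Gaussian noise-stability bound~\cite[Proposition~13]{KKMO07}. The first transfers the problem from the discrete cube $[k]^r$ to Gaussian space; the second is the $\Lambda_\rho$-bound for $[0,1]$-valued functions on Gaussian space, which is essentially Borell's noise-stability inequality. So the proof is really just an orchestrated quotation of these two black boxes.

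First I would apply the MOO theorem. Write $f$ as a multilinear polynomial in the orthonormal Fourier basis on $[k]^r$, and consider the ``Gaussian analogue'' $\tilde f$ obtained by substituting i.i.d.~standard Gaussians for the non-constant basis elements (truncating so that the output still lies in $[0,1]$). The MOO invariance principle, applied in the form used throughout~\cite{KKMO07}, says that for any $\epsilon'>0$ there exist $\tau>0$ and $d\in\mathbb{N}$, depending only on $k,\rho,\epsilon'$, such that whenever $\Inf_i^{\leq d}(f)\leq\tau$ for all $i$, the noise stability transfers with small loss:
\[
\S_\rho(f)\ \leq\ \S^G_\rho(\tilde f)+\epsilon'/2,
\]
where $\S^G_\rho$ denotes Gaussian noise stability at correlation $\rho$ and $\tilde f$ has the same mean $\mu$ as $f$. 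The fact that we work with $[0,1]$-valued (rather than Boolean) $f$ is exactly what makes this step clean: no Boolean rounding is needed, so there is no auxiliary anti-concentration hypothesis to check.

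Next I would invoke~\cite[Proposition~13]{KKMO07}. That statement asserts that among all $[0,1]$-valued functions on Gaussian space with fixed mean $\mu$, the noise stability at correlation $\rho$ is maximized by a halfspace indicator of measure $\mu$. Since the noise stability of such a halfspace is precisely $\Lambda_\rho(\mu)$ by definition (the joint probability that two $\rho$-correlated standard Gaussians both lie in $(-\infty,u]$, where $u$ is the $\mu$-quantile), we obtain
\[
\S^G_\rho(\tilde f)\ \leq\ \Lambda_\rho(\mu).
\]
Combining the two inequalities with $\epsilon'=\epsilon$ yields $\S_\rho(f)\leq\Lambda_\rho(\mu)+\epsilon$, which is the statement of the theorem.

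The potentially tricky point is making sure we invoke a version of MOO that is valid on the product space $[k]^r$ with uniform measure rather than the Boolean cube, and in the right ``truncation to $[0,1]$'' form; both are explicit in~\cite{Mossel10:ann} and have been used in the exact same way in~\cite{KKMO07}, so this presents no real obstacle. The remaining bookkeeping — tracking how $\tau,d$ depend on $k,\rho,\epsilon$, and verifying that the normalization conventions for our Fourier basis (with $\valpha_1\equiv 1$) match those used in the cited invariance principle — is routine.
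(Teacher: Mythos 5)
Your proposal is correct and takes essentially the same route as the paper: the paper gives no independent proof of Theorem~\ref{thm:modifiedMOO}, stating only that it follows from the MOO Theorem of~\cite{Mossel10:ann} (in the form used in~\cite{KKMO07}) together with~\cite[Proposition 13]{KKMO07}, which is exactly the two-step chain (invariance transfer to Gaussian space, then the Borell-type bound yielding $\Lambda_\rho(\mu)$) that you spell out.
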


In order to link this theorem with our setting, we will need the following relation between $\langle f, T^{\otimes r} f \rangle$ for some symmetric Markov operator $T$ with spectral radius $\rho$ and $\S_\rho(f)$. The following bound generalises the first step in proving~\cite[Proposition 12]{KKMO07}, following the first step in the proof of~\cite[Propositio 3.21]{Guruswami13:toc}.

\begin{lemma}\label{lem:Sanglerelation}
Let $T$ be a symmetric Markov operator on $[D]$ with spectral radius $0 < \rho < 1$. For any $f : [D]^r \to [0, 1]$, where $\mu = \Exp[f]$, we have that
\[
\langle f, T^{\otimes r} f \rangle \geq \mu^2 - \S_\rho(f).
\]
\end{lemma}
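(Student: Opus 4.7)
The plan is to exploit the freedom in choosing the basis $\alpha_1, \ldots, \alpha_D$ for the Fourier expansion (this is the one place flagged in an earlier footnote where a specific choice matters) and then read off the result by expanding $\langle f, T^{\otimes r} f \rangle$ spectrally. Concretely, I would start by choosing $\alpha_1, \ldots, \alpha_D$ to be an orthonormal (under $\langle \cdot, \cdot \rangle$) eigenbasis of $T$, taking $\alpha_1 \equiv 1$. This is possible: $T$ is symmetric, and since it is a symmetric stochastic matrix it is doubly stochastic, so $\mathbf{1}$ is an eigenvector with eigenvalue $1$; pick the remaining $\alpha_i$ as an orthogonal completion in $\mathbf{1}^\perp$. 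Let $\lambda_1 = 1$ and $\lambda_2, \ldots, \lambda_D$ be the eigenvalues, so that by the definition of spectral radius, $|\lambda_i| \leq \rho$ for $i \geq 2$.

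Next I would observe that since $\alpha_\vx = \bigotimes_{i=1}^{r} \alpha_{x_i}$, the operator $T^{\otimes r}$ acts as $T^{\otimes r} \alpha_\vx = \bigl( \prod_{i=1}^{r} \lambda_{x_i} \bigr)\, \alpha_\vx$. Expanding $f$ in the basis $\{\alpha_\vx\}_\vx$ and using Parseval-like identities then gives
\[
\langle f, T^{\otimes r} f \rangle \;=\; \sum_{\vx \in [D]^r} \hat{f}^{\,2}(\vx)\, \prod_{i=1}^{r} \lambda_{x_i}.
\]
The term $\vx = \vec{1}$ contributes exactly $\hat{f}^{\,2}(\vec{1}) = \langle f, \mathbf{1}\rangle^2 = \mu^2$, because $\alpha_{\vec{1}} \equiv 1$. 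For each $\vx \neq \vec{1}$, the product $\prod_i \lambda_{x_i} = \prod_{i : x_i \neq 1} \lambda_{x_i}$ has absolute value at most $\rho^{|\vx|}$, so it is bounded below by $-\rho^{|\vx|}$.

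Finally I would combine these observations:
\[
\langle f, T^{\otimes r} f \rangle \;\geq\; \mu^2 \;-\; \sum_{\vx \neq \vec{1}} \rho^{|\vx|}\, \hat{f}^{\,2}(\vx) \;=\; \mu^2 \;-\; \bigl(\S_\rho(f) - \mu^2\bigr) \;\geq\; \mu^2 - \S_\rho(f),
\]
where in the middle step I used the alternative formula $\S_\rho(f) = \sum_\vx \rho^{|\vx|} \hat{f}^{\,2}(\vx)$ (valid since $\alpha_1 \equiv 1$) and isolated the $\vx = \vec{1}$ contribution, and in the last step I dropped the extra nonnegative $\mu^2$ term. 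There is no real obstacle here: the only point requiring care is recording that the basis may be chosen as an eigenbasis of $T$ containing the constant function, so that the Fourier expansion of $\langle f, T^{\otimes r} f\rangle$ is literally diagonal in $\vx$ and we may compare each term with the one appearing in $\S_\rho(f)$ coordinatewise.
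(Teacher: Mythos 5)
Your proposal is correct and follows essentially the same route as the paper: diagonalise $T^{\otimes r}$ in an orthonormal eigenbasis of $T$ containing the constant function, bound each product of eigenvalues below by $-\rho^{|\vx|}$, and compare with the Fourier expression for $\S_\rho(f)$. Note that your intermediate quantity $\mu^2 - (\S_\rho(f) - \mu^2) = 2\mu^2 - \S_\rho(f)$ is exactly what the paper's proof ends with (and later uses), so you have in fact reproduced the stronger bound before discarding the extra $\mu^2$ to match the stated inequality.
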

\begin{proof}
Suppose that $\valpha_1, \ldots, \valpha_D : [D] \to \R$ are orthonormal (with respect to $\langle \cdot, \cdot \rangle$) eigenvectors of $T$, seen as functions, whose eigenvalues are $\lambda_1, \ldots, \lambda_D$. Suppose $\valpha_1 = \mathbf{1}$ is the constant one function, with eigenvalue 1. Then we find that $\valpha_2, \ldots, \valpha_D$ are all perpendicular to the constant-ones function, and have eigenvalue at most $\rho$ in absolute value. Furthermore, we find that $\valpha_1, \ldots, \valpha_D$ are also eigenvectors of $T_\rho$, with eigenvalues $1, \rho, \ldots, \rho$.

Recall that $T^{\otimes r} \valpha_\vx = \bigotimes_{i = 1}^r T \valpha_{x_i}$; now $T \valpha_{x_i} = \lambda_{x_i} \valpha_{x_i}$, so
\[
T^{\otimes r} \valpha_\vx = \bigotimes_{i = 1}^r \lambda_{x_i} \valpha_{x_i} =
\left(\prod_{i = 1}^r \lambda_{x_i}\right) \left(\bigotimes_{i = 1}^r \valpha_{x_i} \right) =
\left(\prod_{i = 1}^r \lambda_{x_i}\right) \valpha_{\vx}.
\]

So, recalling that $T$ is symmetric and hence $T^{\otimes r}$ is also thus,
\begin{multline*}
\langle f, T^{\otimes r} f \rangle
=
\sum_{\vx \in [D]^r} \hat{f}(\vx) \widehat{T^{\otimes r} f}(\vx)
=
\sum_{\vx \in [D]^r} \hat{f}(\vx) \langle T^{\otimes r} f, \valpha_\vx\rangle \\
=
\sum_{\vx \in [D]^r} \hat{f}(\vx) \langle f, T^{\otimes r} \valpha_\vx\rangle 
=
\sum_{\vx \in [D]^r} \hat{f}(\vx) \left\langle f, \left(\prod_{i = 1}^r \lambda_{x_i} \right) \valpha_\vx\right\rangle  \\
=
\sum_{\vx \in [D]^r} \left(\prod_{i = 1}^r \lambda_{x_i} \right) \hat{f}(\vx) \left\langle f, \valpha_\vx\right\rangle 
=
\sum_{\vx \in [D]^r} \left(\prod_{i = 1}^r \lambda_{x_i} \right)
  \hat{f}^2(\vx).
\end{multline*}
Observe now that $\prod_{i = 1}^r \lambda_1 = 1$ and that $\prod_{i = 1}^r \lambda_{x_i} \geq -\rho^{|\vx|}$. Noting that $\hat{f}(\mathbf{1}) = \langle \mathbf{1}, f \rangle = \Exp_{\vx}f(\vx) = \mu$, we have that
\[
\langle f, T^{\otimes r} f \rangle \geq 2\mu^2 - \sum_{\vx \in [D]^r} \rho^{|\vx|} \hat{f}^2(\vx) = 2\mu^2 - \S_\rho(f).\qedhere
\]
\end{proof}

We will also need an estimate for $\Lambda_\rho(\mu)$. The following appears in~\cite{KKMO07}.

\begin{theorem}[{\cite[Proposition 11]{KKMO07}}]\label{thm:estimate}
    For all small enough $\mu$ and $0 < \rho \leq 1 / \ln^3(1 / \mu)$, we have
    \[
    \Lambda_\rho(\mu) \leq \mu \left(\mu + 2\rho \mu \ln (1 / \mu) \left(1 + O\left(\frac{\ln \ln (1 / \mu) + \ln \ln (1 / \rho)}{\ln(1/\mu)}\right)\right)\right).
    \]
\end{theorem}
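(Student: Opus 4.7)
The plan is to reduce $\Lambda_\rho(\mu)$ to a bivariate Gaussian upper-tail probability, then extract the sharp exponent $e^{-t^{2}/(1+\rho)}$ via an orthogonal decomposition combined with Mills' ratio, rewrite this as $\mu^{2/(1+\rho)}$ up to polylogarithmic factors, and finally Taylor-expand at $\rho=0$ to isolate the linear-in-$\rho$ correction. By reflection symmetry of the standard normal, if $u=\Phi^{-1}(\mu)$ then $u=-t$ with $\bar\Phi(t)=\mu$, and so $\Lambda_\rho(\mu)=\Pr[X\geq t,\,Y\geq t]$, where $(X,Y)$ is bivariate normal with unit variances and correlation $\rho$. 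Mills' ratio $\bar\Phi(t)=\frac{\phi(t)}{t}(1+O(t^{-2}))$ inverts to $t=\sqrt{2\ln(1/\mu)}(1+O(\ln\ln(1/\mu)/\ln(1/\mu)))$, which will serve as the dictionary between $t$-scale and $\mu$-scale asymptotics throughout.

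\textbf{Sharp tail bound.} Next I would rotate to $U=(X+Y)/\sqrt{2(1+\rho)}$, $V=(X-Y)/\sqrt{2(1-\rho)}$, which are independent standard normal. The event $\{X\geq t,\,Y\geq t\}$ is equivalent to $\{U\geq t_{\ast}\}\cap\bigl\{|V|\leq\sqrt{(1+\rho)/(1-\rho)}\,U-t\sqrt{2/(1-\rho)}\bigr\}$ with $t_{\ast}:=t\sqrt{2/(1+\rho)}$, so
\[
\Lambda_\rho(\mu)=\int_{t_{\ast}}^{\infty}\phi(u)\bigl(1-2\bar\Phi\bigl(\sqrt{(1+\rho)/(1-\rho)}\,u-t\sqrt{2/(1-\rho)}\bigr)\bigr)\,du.
\]
The integrand concentrates around $u=t_{\ast}$, so the substitution $u=t_{\ast}+s/t_{\ast}$ followed by Mills' ratio on the $\bar\Phi$ factor yields an upper bound of the form $C(\rho)\,e^{-t^{2}/(1+\rho)}/t^{2}$, where $C(\rho)$ is polynomial in $(1-\rho)^{-1}$ and independent of $\mu$.

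\textbf{Back to $\mu$ and Taylor expansion.} Using $\mu=\phi(t)/t\cdot(1+O(t^{-2}))$ twice, I would convert $e^{-t^{2}/(1+\rho)}/t^{2}$ into $\mu^{2/(1+\rho)}$ multiplied by a prefactor $\exp(O(\ln\ln(1/\mu)+\ln\ln(1/\rho)))$, where the second logarithm comes from absorbing $\ln C(\rho)$ into the exponent. Finally factor $\mu^{2/(1+\rho)}=\mu^{2}\exp(2\rho\ln(1/\mu)/(1+\rho))$; the hypothesis $\rho\leq 1/\ln^{3}(1/\mu)$ forces the argument of $\exp$ to be $O(1/\ln^{2}(1/\mu))=o(1)$, so $e^{y}=1+y+O(y^{2})$ applies and, after combining all polylog errors into the announced form $(\ln\ln(1/\mu)+\ln\ln(1/\rho))/\ln(1/\mu)$, produces exactly the claimed inequality.

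\textbf{Main obstacle.} The delicate part will be tracking the polylog factors uniformly in $(\mu,\rho)$, and in particular identifying the provenance of the $\ln\ln(1/\rho)$ contribution to the error: it arises from the $(1-\rho)^{-1}$-growth of $C(\rho)$ once folded into the exponent via logarithms. A secondary subtlety is that a naive Chernoff bound such as $\Pr[X+Y\geq 2t]\leq e^{-t^{2}/(1+\rho)}$ is too weak by polylog factors; the refined integration over $V$ above recovers the missing $1/t^{2}$ prefactor, and upon matching $\mu\sim\phi(t)/t$ \emph{twice} this prefactor supplies exactly the $t$-powers needed to reconstruct $\mu^{2/(1+\rho)}$ rather than $(\mu t)^{2/(1+\rho)}$ as would be produced by the crude estimate.
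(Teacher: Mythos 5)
The paper does not actually prove this statement; it is imported verbatim as \cite[Proposition 11]{KKMO07}, so there is no in-paper proof to match your argument against. Judged on its own terms, your sketch has a genuine gap in its final two steps. In the regime $0<\rho\leq 1/\ln^{3}(1/\mu)$ the claimed inequality says $\Lambda_\rho(\mu)\leq \mu^{2}\bigl(1+2\rho\ln(1/\mu)(1+o(1))\bigr)$, and $2\rho\ln(1/\mu)\leq 2/\ln^{2}(1/\mu)$: the entire content of the bound is that the multiplicative excess over $\mu^{2}$ is at most $O(1/\ln^{2}(1/\mu))$, with the coefficient of $\rho\ln(1/\mu)$ pinned to exactly $2$. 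Your intermediate bound $C(\rho)\,e^{-t^{2}/(1+\rho)}/t^{2}$ with an unspecified $C(\rho)$ ``polynomial in $(1-\rho)^{-1}$'', and then a conversion to $\mu^{2/(1+\rho)}$ ``multiplied by a prefactor $\exp(O(\ln\ln(1/\mu)+\ln\ln(1/\rho)))$'', loses a multiplicative factor that is at best an absolute constant and at worst polylogarithmic in $1/\mu$. No such factor can be ``combined into'' the announced error term, which is only a \emph{relative} correction to the already tiny second-order term $2\rho\mu^{2}\ln(1/\mu)$: even a prefactor of $1.01$ exceeds the permitted slack once $\mu$ is small. Relatedly, your attribution of the $\ln\ln(1/\rho)$ term to the $(1-\rho)^{-1}$ growth of $C(\rho)$ cannot be right here, since the hypothesis forces $\rho\to 0$, where $(1-\rho)^{-1}\to 1$ and $\ln C(\rho)$ is bounded --- and a bounded multiplicative loss is exactly what the statement cannot tolerate.

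The structural reason the direct route is delicate is that at $\rho=0$ the statement degenerates to the identity $\Lambda_0(\mu)=\mu^{2}$, so every approximation you make must come with an error that vanishes \emph{proportionally to $\rho$}; Mills-ratio corrections of size $1+O(1/t^{2})$ applied separately to $\Lambda_\rho(\mu)$ and to $\mu$ do not obviously have this property unless you track how they cancel against the corresponding corrections in $\mu^{2}$ uniformly near $\rho=0$. A clean repair is to differentiate in $\rho$ rather than estimate $\Lambda_\rho(\mu)$ outright: by Plackett's identity $\frac{\partial}{\partial r}\Lambda_r(\mu)=\phi_2(u,u;r)=\frac{1}{2\pi\sqrt{1-r^{2}}}e^{-t^{2}/(1+r)}$ (with $u=-t$, $\bar\Phi(t)=\mu$), so $\Lambda_\rho(\mu)=\mu^{2}+\int_{0}^{\rho}\phi_2(u,u;r)\,dr$. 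Bounding $\phi_2(u,u;r)\leq t^{2}\mu^{2}\bigl(1+O(t^{-2})+O(\rho t^{2})\bigr)$ via Mills' ratio and $t^{2}=2\ln(1/\mu)\bigl(1+O(\ln\ln(1/\mu)/\ln(1/\mu))\bigr)$ then yields the claimed inequality, because every error now carries the prefactor $\rho$. If you insist on your direct tail-integral route, you must compute $C(\rho)$ exactly (it is $\frac{(1+\rho)^{3/2}}{2\pi\sqrt{1-\rho}}$ to leading order, and the $2\pi$ must cancel against the $2\pi$ hidden in $\mu^{2}$) and show all residual corrections are $1+O(\rho\cdot\mathrm{polylog})+$ terms cancelling identically at $\rho=0$; as written, the sketch does not do this.
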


We will also need an estimate for when $\mu$ is not small. A fact similar to the following appears in the proof of~\cite[Proposition 12]{KKMO07}; we offer a derivation from the literature for completeness.

\begin{proposition}\label{prop:bigmu}
    Suppose $\rho \geq 0$ is small and $0 < \mu < 1$. Then
    \[
    \Lambda_{\rho}(\mu) \leq \mu^2 + 3\rho.
    \]
\end{proposition}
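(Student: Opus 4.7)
The plan is to use Plackett's identity, which expresses the derivative of the bivariate Gaussian CDF with respect to the correlation as the bivariate density at the threshold. Concretely, observe that $u$ is a function of $\mu$ only and does not depend on $\rho$, so $\Lambda_\rho(\mu)$ is just the bivariate Gaussian CDF $\Phi_2(u,u;\rho)$ at the fixed point $(u,u)$. Plackett's identity then gives
\[
\frac{d}{d\rho}\Lambda_\rho(\mu) \;=\; \phi_2(u,u;\rho) \;=\; \frac{1}{2\pi\sqrt{1-\rho^2}}\exp\!\left(-\frac{u^2}{1+\rho}\right),
\]
where the exponent comes from simplifying $(u^2 - 2\rho u^2 + u^2)/(2(1-\rho^2)) = u^2/(1+\rho)$ in the bivariate Gaussian density.

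Next I would bound this derivative uniformly. Since the exponential factor is at most $1$, the derivative is at most $1/(2\pi\sqrt{1-\rho^2})$, and for all sufficiently small $\rho$ (say $\rho \leq 1/2$, or indeed any $\rho$ bounded away from $1$) this is at most some absolute constant; in particular, it is at most $3$. Starting from the base case $\Lambda_0(\mu) = \mu^2$, which follows because $\rho=0$ makes the two Gaussian coordinates independent so the event factors as $\Pr[x \leq u]\Pr[y \leq u] = \mu^2$, the fundamental theorem of calculus then yields
\[
\Lambda_\rho(\mu) \;=\; \mu^2 + \int_0^\rho \phi_2(u,u;t)\, dt \;\leq\; \mu^2 + 3\rho,
\]
as required.

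The only subtle point is rigorously justifying Plackett's identity, but this is a classical fact about the bivariate normal distribution that can be taken as known (it is verified by differentiating the density under the integral sign defining $\Phi_2$, a computation that is standard in the literature). No substantive obstacle arises: the exponential factor in the density is what gives the sharp bound of Theorem~\ref{thm:estimate} for small $\mu$, while here we use the trivial bound of $1$ on the exponential to obtain a looser estimate that is valid across all $\mu \in (0,1)$.
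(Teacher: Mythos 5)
Your proof is correct, and it takes a genuinely different route from the paper. The paper proves the bound via Willink's inequality $L(h,h,\rho)\leq \Phi(-h)\Phi(-\theta h)(1+\rho)$ with $\theta=\sqrt{(1-\rho)/(1+\rho)}$, which forces a case split on $\mu<1/2$ versus $\mu>1/2$ (handling $\mu=1/2$ by continuity), plus auxiliary estimates (the Mills-ratio bound $\Phi(u)\leq -\phi(u)/u$ for $u<0$, Lipschitz continuity of $\Phi$, and the inequality $\theta\geq 1-\rho$). Your argument instead fixes $u$ (which depends only on $\mu$), invokes Plackett's identity $\tfrac{\partial}{\partial\rho}\Phi_2(u,u;\rho)=\phi_2(u,u;\rho)$, bounds the density uniformly by $\tfrac{1}{2\pi\sqrt{1-\rho^2}}$, and integrates from the independent case $\Lambda_0(\mu)=\mu^2$. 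This is cleaner: it needs no case analysis in $\mu$, and it actually yields the sharper bound $\Lambda_\rho(\mu)\leq \mu^2+\rho/(2\pi\sqrt{1-\rho^2})$, i.e.\ a constant around $0.2$ rather than $3$ for $\rho\leq 1/2$, which is more than enough for how the proposition is used. The only external ingredient is Plackett's identity itself, which is classical and citable (Plackett, Biometrika 1954), whereas the paper's route leans on Willink's published bounds; either way the proposition rests on one standard fact from the bivariate-normal literature, and your choice arguably makes the dependence simpler and the constants better.
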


We use the following bound due to Willink~\cite{WillinkBounds}. Their results are expressed using the following functions:
\begin{align*}
L(u, v, \rho) &= \Pr[ x \geq u, y \geq v ] \\
\Phi(u) &= \Pr[ x \leq u ] \\
\Phi(u, v, \rho) &= \Pr[ x \leq u, y\leq u],
\end{align*}
where $(x, y) \sim \N(\mathbf{0}, \mathbf{\Sigma})$, $\mathbf{\Sigma} = \begin{pmatrix}
    1 & \rho \\ \rho & 1
\end{pmatrix}$. Observe that $(-x, -y)$ and $(x, y)$ have the same distribution; furthermore $\Pr[x \geq u, y \geq v] = \Pr[-x \leq -u, -y \leq -v]$. Hence we see that if $u$ is selected so that $\Phi(u) = \Pr[x \leq u] = \mu$, then $\Lambda_\rho(\mu) = \Pr[x \leq u, y \leq u] = \Pr[-x \geq -u, -y \geq -u] = L(-u, -u, \rho)$.

\begin{theorem}[{\cite[Equation (1.2)]{WillinkBounds}}]\label{thm:wil}
Define $\theta = \sqrt{\frac{1 - \rho}{1 + \rho}}$. For $h > 0, \rho \geq 0$, we have
\[
 L(h, h, \rho) \leq \Phi(-h) \Phi(-\theta h) (1 + \rho).
\]
\end{theorem}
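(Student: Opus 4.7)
The bound is a sharp analytic inequality for the upper tail of a bivariate normal with nonnegative correlation, and the natural plan is to reduce it to a one-dimensional integral bound via a principal-axes change of variables and then estimate that integral.

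The first step is to diagonalise the covariance matrix. Setting $u = (x+y)/\sqrt{2(1+\rho)}$ and $v = (x-y)/\sqrt{2(1-\rho)}$ produces i.i.d.\ standard normals with $x = \sqrt{(1+\rho)/2}\,u + \sqrt{(1-\rho)/2}\,v$ and $y = \sqrt{(1+\rho)/2}\,u - \sqrt{(1-\rho)/2}\,v$. The event $\{x \geq h, y \geq h\}$ becomes $\sqrt{(1+\rho)/2}\,u \geq h + \sqrt{(1-\rho)/2}\,|v|$, i.e.\ $u \geq h\sqrt{2/(1+\rho)} + \theta|v|$. Integrating out $v$ using symmetry about zero yields
\[
L(h, h, \rho) = 2\int_0^\infty \phi(v)\,\Phi\!\left(-h\sqrt{2/(1+\rho)} - \theta v\right)dv.
\]

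The second step is a further linear change of variables that puts this in a cleaner form. Letting $w = u - \theta v$ and normalising by $\sqrt{1+\theta^2} = \sqrt{2/(1+\rho)}$, one obtains i.i.d.\ $(\tilde w, s)$ with $L(h, h, \rho) = 2\Pr[\tilde w \geq h,\ s \geq \theta \tilde w]$. Conditioning on $\tilde w = t \geq h$ then produces the identity
\[
L(h, h, \rho) = 2\int_h^\infty \phi(t)\,\Phi(-\theta t)\,dt.
\]
This identity reduces the desired inequality to the one-dimensional bound $\int_h^\infty \phi(t)\,\Phi(-\theta t)\,dt \leq \tfrac{1+\rho}{2}\,\Phi(-h)\,\Phi(-\theta h)$, which is sanity-checked by tightness at the endpoints: at $\rho = 0$ both sides equal $\Phi(-h)^2/2$ (via the antiderivative $-\Phi(-t)^2/2$), and at $\rho = 1$ both equal $\Phi(-h)/2$.

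The third step is the analytic core. Apply Mills' upper bound $\Phi(-\theta t) \leq \phi(\theta t)/(\theta t)$ inside the integral; using the identity $1 + \theta^2 = 2/(1+\rho)$, the product $\phi(t)\phi(\theta t)$ collapses to $\phi(t\sqrt{2/(1+\rho)})/\sqrt{2\pi}$, whose tail integral can be evaluated in closed form. Tracking the exponents carefully, the pre-factor $\phi(h)$ appears precisely, and after reintroducing $\Phi(-h)$ using Gordon's lower Mills bound $\Phi(-h) \geq h\phi(h)/(h^2+1)$, one recovers exactly the factor $(1+\rho)/2$ together with $\Phi(-h)\Phi(-\theta h)$.

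The main obstacle is that the naive bound $\Phi(-\theta t) \leq \Phi(-\theta h)$ only yields constant $1$ in front of $\Phi(-h)\Phi(-\theta h)$; to obtain the sharper $(1+\rho)$, one must exploit the Gaussian decay of $\Phi(-\theta t)$ for $t > h$. The bookkeeping is delicate because Mills-type approximations degrade for small $h$, so the argument in \cite{WillinkBounds} needs an intermediate estimate that handles small $h$ and intermediate $\rho$ uniformly, which is where the work of the cited paper lies.
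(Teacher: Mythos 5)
You should know at the outset that the paper offers no proof of this theorem: it is imported verbatim as Equation~(1.2) of Willink, so the only question is whether your argument stands on its own. Your first two steps do. The reduction to the exact identity $L(h,h,\rho)=2\int_h^\infty \phi(t)\Phi(-\theta t)\,dt$ is correct (and can be had in one line: by exchangeability $L(h,h,\rho)=2\Pr[x\ge h,\ y\ge x]$, and conditionally on $x=t$ one has $y\sim\mathcal{N}(\rho t,\,1-\rho^2)$, so $\Pr[y\ge t\mid x=t]=\Phi(-\theta t)$). Since $1+\theta^2=2/(1+\rho)$, the theorem is thereby reduced to $\int_h^\infty\phi(t)\Phi(-\theta t)\,dt\le\frac{1}{1+\theta^2}\,\Phi(-h)\,\Phi(-\theta h)$, and your endpoint checks at $\rho=0$ and $\rho=1$ are right --- but they are precisely what dooms your third step.

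The gap is step 3, which is where all the content of the inequality lies, and the route you sketch provably cannot close it. The target inequality is an \emph{equality} at $\rho=0$ (i.e.\ $\theta=1$) for every $h$, so no chain that inserts the strictly lossy Mills bound $\Phi(-\theta t)\le\phi(\theta t)/(\theta t)$ under the integral and then a further lossy Gordon estimate can terminate at the exact right-hand side. Concretely, at $\theta=1$ your intermediate quantity is $\int_h^\infty\phi(t)^2t^{-1}\,dt$ (note the $1/(\theta t)$ factor survives, so this is an exponential integral, not an elementary closed form): it diverges as $h\to0$ while the target $\tfrac12\Phi(-h)^2$ never exceeds $1/8$, and even for large $h$ it exceeds the target, since $\int_h^\infty\phi(t)^2t^{-1}\,dt\sim\frac{e^{-h^2}}{4\pi h^2}\bigl(1-h^{-2}+\cdots\bigr)$ while $\tfrac12\Phi(-h)^2\sim\frac{e^{-h^2}}{4\pi h^2}\bigl(1-2h^{-2}+\cdots\bigr)$. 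Reintroducing $\Phi(-h)$ via Gordon's bound $\phi(h)\le\frac{h^2+1}{h}\Phi(-h)$ only multiplies by factors larger than $1$, so the claimed exact recovery of the factor $(1+\rho)/2$ cannot be correct; your closing sentence in effect concedes this by deferring the uniform treatment of small $h$ and intermediate $\rho$ to the cited paper. As written, therefore, the inequality is not established. A workable finish would have to exploit the equality cases rather than fight them --- for example, a monotonicity or sign-of-derivative argument in $\theta$ for $\frac{\Phi(-h)\Phi(-\theta h)}{1+\theta^2}-\int_h^\infty\phi(t)\Phi(-\theta t)\,dt$ on $[0,1]$ --- or one simply cites Willink, as the paper does.
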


Also recall the following intuitive fact.

\begin{theorem}[{\cite[Equation (1.1)]{WillinkBounds}}]\label{thm:transform}
$\displaystyle L(h, k, \rho) = 1 - \Phi(h) - \Phi(k) + \Phi(h, k, \rho)$.
\end{theorem}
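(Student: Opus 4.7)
The plan is to derive this as a straightforward inclusion-exclusion identity on the joint distribution of $(x,y) \sim \mathcal{N}(\mathbf{0},\mathbf{\Sigma})$. Define the events $A = \{x \leq h\}$ and $B = \{y \leq k\}$. By definition we have $\Pr[A] = \Phi(h)$, $\Pr[B] = \Phi(k)$, and $\Pr[A \cap B] = \Phi(h,k,\rho)$. The complementary event $A^c \cap B^c$ is $\{x > h, y > k\}$; since the bivariate normal has a density with respect to Lebesgue measure, the sets $\{x = h\}$ and $\{y = k\}$ have probability zero, so $\Pr[A^c \cap B^c] = \Pr[x \geq h, y \geq k] = L(h,k,\rho)$.

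Now I would apply the standard two-event inclusion-exclusion principle:
\[
\Pr[A \cup B] = \Pr[A] + \Pr[B] - \Pr[A \cap B] = \Phi(h) + \Phi(k) - \Phi(h,k,\rho).
\]
Taking complements, $L(h,k,\rho) = \Pr[A^c \cap B^c] = 1 - \Pr[A \cup B]$, which rearranges to exactly the identity in the statement.

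There is essentially no obstacle here: the result is a completely general measure-theoretic identity relating the upper-orthant and lower-orthant probabilities of any absolutely continuous bivariate distribution, and the only feature of the bivariate normal being used is the continuity of its distribution function (so that weak and strict inequalities give the same probability). The identity would fail (up to boundary corrections) only for distributions with atoms on the lines $x=h$ or $y=k$, which is not the case here.
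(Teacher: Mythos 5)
Your proof is correct. The paper does not prove this statement at all --- it is quoted directly from Willink (Equation (1.1) of \cite{WillinkBounds}) and used as a black box --- so there is nothing to compare against; your inclusion--exclusion argument is the natural self-contained derivation, and the only care needed (passing between strict and non-strict inequalities, which is harmless since the bivariate normal is absolutely continuous) is handled correctly. One cosmetic remark: the paper's displayed definition of $\Phi(u,v,\rho)$ contains a typo ($\Pr[x\leq u, y\leq u]$ rather than $\Pr[x\leq u, y\leq v]$); your reading of it as the joint lower-orthant probability $\Pr[x\leq h, y\leq k]$ is the intended one, and with that reading the identity follows exactly as you state.
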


\begin{proof}[Proof of Proposition~\ref{prop:bigmu}]
First suppose $\mu < 1/2$. Take $u < 0$ such that $\Phi(u) = \mu$. Note that as $\Lambda_\rho(\mu) = L(-u, -u, \rho)$, we have
\[
\Lambda_\rho(\mu) \leq \Phi(u) \Phi(\theta u)(1 + \rho).
\]

Let $\phi(x) = \Phi'(x)$ be the density function of the normal distribution. It is well known that $\Phi(u) \leq -\phi(u) / u$ for $u < 0$. So, as $\max_x \phi(x) = 1 / \sqrt{2\pi} \leq 1$, we have that  $- 1 / u \geq - \phi(u) / u \geq \Phi(u) = \mu$, hence $-u \leq 1 / \mu$. Observe now that
\[
\Phi(\theta u) \leq \Phi(u) - (1 - \theta)u \leq \mu + \frac{1 - \theta}{\mu}
\]
as $\Phi$ is Lipschitz with constant $ 1/ \sqrt{2\pi} \leq 1$. So,
\[
\Lambda_\rho(\mu) \leq \Phi(u) \Phi(\theta u) (1 + \rho) \leq \mu \left( \mu + \frac{1 - \theta}{\mu}\right) (1 + \rho).
\]
Now note that $\theta = \sqrt{(1 - \rho) / (1 + \rho)} \geq 1 - \rho$ for $0 \leq \rho \leq 1$, hence
\[
\Lambda_\rho(\mu) \leq (1 + \rho)(\mu^2 + \rho)  = \mu^2 + \rho(\mu^2 + 1) + \rho^2 \leq \mu^2 + 3\rho,
\]
for small enough $\rho$.

Now we deal with the case $\mu > 1 / 2$ i.e.~$u > 0$. By Theorem~\ref{thm:transform}, and since $\Phi(-u) = 1 - \Phi(u) = 1 - \mu$, we have that
\[
\Lambda_\rho(\mu) = L(-u, -u, \rho) = 1 - 2 \Phi(-u) + \Phi(-u, -u, \rho) = 1 - 2(1 - \mu) + \Lambda_\rho(1 - \mu).
\]
Now apply the result we have proved above to $1 - \mu < 1 / 2$, to find that
\[
\Lambda_\rho(\mu) \leq 1 - 2(1 - \mu) + (1 - \mu)^2 + 3 \rho = \mu^2 + 3\rho.
\]
The case $\mu = 1 / 2$ follows by continuity.
\end{proof}


\subsection{Proof of Proposition~\ref{prop:bound}}

In this section, we prove Proposition~\ref{prop:bound}, which we restate here.

\propbound*

\noindent 
We first prove the following technical lemma.\footnote{A variant of this lemma
exists implicitly within~\cite{KKMO07}, in the proof of~\cite[Proposition
12]{KKMO07}, but this assumes that $F_T$ is convex on an interval of the form $(0, c)$ where $c$ does not depend on $T$. This seems to not be the case, so we give a different proof here. The proof in~\cite{Guruswami13:toc} claims their bound follows precisely as in~\cite{KKMO07}, so it also implicitly makes this claim about $F_T$}

\begin{lemma}\label{lem:finalPart}
    Let $F_T(x) = x^2(1 + T \ln x)$, where $F_T(0) = 0$. Fix $\ell$ and take some $T > 0$ smaller than an absolute constant, such that $\ell < e^{1 / T}$. Suppose $x_1 + \cdots + x_\ell = 1$, $x_i \geq 0$. Then, $\sum_{i = 1}^\ell F_T(x) \geq 1 / \ell - T \ln \ell / \ell - 4\ell e^{-1 / T}$.
\end{lemma}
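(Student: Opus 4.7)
The plan is to prove the (strictly stronger) bound
\[
\sum_{i=1}^{\ell} F_T(x_i)\ \geq\ \frac{1}{\ell} - \frac{T\ln\ell}{\ell},
\]
by a tangent-line argument at $x=1/\ell$; this obviously implies the stated bound with the slack term $-4\ell e^{-1/T}$. The key input is the sign of $F_T''(x) = 2 + 3T + 2T\ln x$, from which $F_T$ is concave on $[0,x_c]$ and convex on $[x_c,1]$, where $x_c = e^{-1/T - 3/2}$. The hypothesis $\ell < e^{1/T}$ gives $1/\ell > e^{-1/T} > x_c$, so $1/\ell$ lies safely inside the convex region.

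Let $L(x) = F_T(1/\ell) + F_T'(1/\ell)(x - 1/\ell)$ be the tangent to $F_T$ at $1/\ell$. Since $\sum_i(x_i - 1/\ell) = 0$, we have $\sum_i L(x_i) = \ell F_T(1/\ell) = (1 - T\ln\ell)/\ell$, so it suffices to show $F_T(x) \geq L(x)$ for every $x \in [0,1]$. On $[x_c, 1]$ this is immediate from convexity of $F_T$. On $[0,x_c]$ the function $F_T - L$ is concave (since $L$ is affine). At the right endpoint $F_T(x_c) - L(x_c) \geq 0$ by the previous step. At the left endpoint $F_T(0) - L(0) = -L(0) = (1 + T - T\ln\ell)/\ell^2 > 0$, where positivity uses $T\ln\ell < 1$ (which follows directly from $\ell < e^{1/T}$). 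Since a concave function that is nonnegative at both endpoints of an interval is nonnegative throughout (it lies above the chord, which is itself nonnegative), we get $F_T \geq L$ on $[0,x_c]$ as well.

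Summing over $i$ yields $\sum_i F_T(x_i) \geq \ell F_T(1/\ell) = 1/\ell - T\ln\ell/\ell$, which is stronger than the target inequality. There is no serious obstacle: the only subtlety is bookkeeping, namely confirming (i) that $1/\ell$ lies in the convex region so that the tangent inequality is available globally on the convex side, and (ii) that $L(0)$ is negative so that the concave-side endpoint check succeeds; both are direct consequences of $\ell < e^{1/T}$ and smallness of $T$.
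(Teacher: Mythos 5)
Your proof is correct, and it takes a genuinely different route from the paper's. The paper partitions the indices according to whether $x_i$ lies in the convex region $x_i>e^{-1/T-3/2}$ or not, handles the convex part by Jensen plus a Lipschitz perturbation bound (with Lipschitz constant $3$, which is where the hypothesis that $T$ is below an absolute constant is used) and a monotonicity argument in $|A|$, and crudely bounds the concave part from below by $-\ell e^{-1/T}$; this bookkeeping is what produces the slack term $-4\ell e^{-1/T}$. Your argument instead uses the supporting tangent line $L$ at $x=1/\ell$: since $\ell<e^{1/T}$ puts $1/\ell$ inside the convex region, convexity gives $F_T\geq L$ on $[e^{-1/T-3/2},1]$, and on $[0,e^{-1/T-3/2}]$ the concave function $F_T-L$ is nonnegative at both endpoints (the left endpoint check $-L(0)=(1+T-T\ln\ell)/\ell^2>0$ again uses only $T\ln\ell<1$), hence nonnegative throughout; summing the tangent inequality and using $\sum_i x_i=1$ gives $\sum_i F_T(x_i)\geq \ell F_T(1/\ell)=1/\ell-T\ln\ell/\ell$. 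This is strictly stronger than the stated bound (no $4\ell e^{-1/T}$ error term) and does not need $T$ smaller than an absolute constant, only $\ell<e^{1/T}$; in particular it suffices, with room to spare, for the use of the lemma in the proof of Proposition~\ref{prop:bound}, where the discarded term would only simplify the expression \eqref{eq:conclusion}.
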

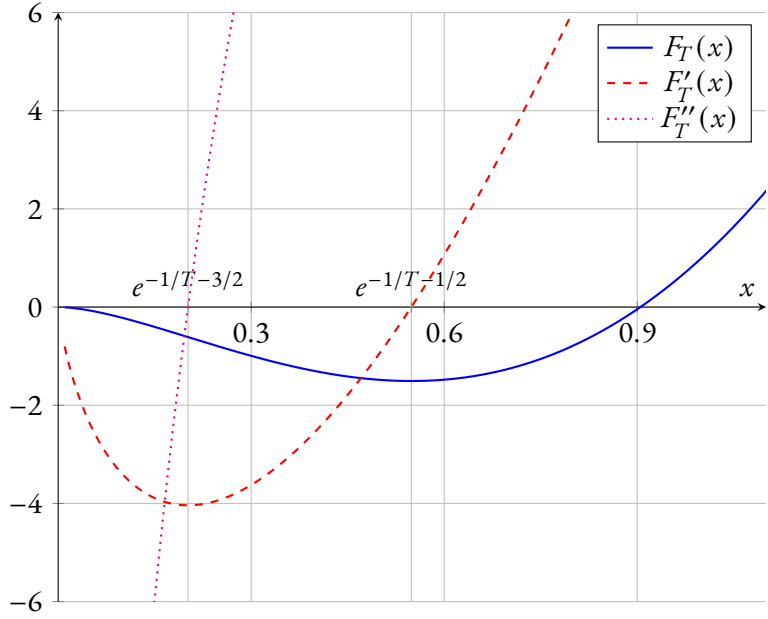
\begin{figure}
\begin{center}
\begin{tikzpicture}[>=stealth]
    \begin{axis}[
        xmin=0,xmax=1.1,
        ymin=-6,ymax=6,
        xtick distance=0.3,
        xlabel={$x$},
        grid=both,
        axis x line=middle,
        axis y line=left, 
        major tick style={grid = major},
        width=\textwidth*0.7,
        extra x ticks={0.201897,0.548812},
        extra x tick labels={$e^{-1/T - 3/2}$, $e^{-1/T - 1/2}$},
        extra x tick style={
            xticklabel style={yshift=0.5ex, anchor=south},
        }
        ]
        \addplot[no marks,blue,thick] expression[domain=0.01:2,samples=300]{x * x * (1 + 10 * ln(x))};
        \addlegendentry{$F_T(x)$}
        \addplot[no marks,red,thick,dashed] expression[domain=0.01:2,samples=300]{2 * 10 * x * ln(x) + 10 * x + 2 * x};
        \addlegendentry{$F_T'(x)$}
        \addplot[no marks,magenta,thick, dotted] expression[domain=0.01:2,samples=300]{2 * 10 * ln(x) + 3 * 10 + 2};
        \addlegendentry{$F_T''(x)$}
    \end{axis}

\end{tikzpicture}
\end{center}
\caption{Plot of $F_T(x), F_T'(x), F_T''(x)$ for $T = 10 $.}\label{fig:plots2}
\end{figure}
\begin{proof}
    Note that $F_T'(x) = 2Tx \ln x + Tx + 2x$ and $F_T''(x) = 2T \ln(x) + 3T + 2$. The second derivative is negative for $x < e^{- 1 / T - 3 / 2}$, positive for $x > e^{-1/ T - 3 / 2}$ and zero for $x = e^{-1 / T - 3 / 2}$. The first derivative is zero at $e^{-1 / T - 1 / 2}$, negative for smaller $x \geq 0$, and positive for greater $x$.
    Hence the function decreases below $e^{-1 / T - 1 / 2}$, then increases above it, and is convex whenever $x > e^{- 1 / T - 3 / 2}$. These functions can be seen for $T = 10$ in Figure~\ref{fig:plots2}.
    
    We wish first to prove that the function is Lipschitz continuous on $[0, 1]$. Observe that the derivative is minimised at $e^{-1 / T - 3/2}$, and is maximised at $1$. At $e^{-1 / T - 3 / 2}$ the derivative is
    \[
    2 T e^{-1 / T - 3 / 2} \left(- \frac{1}{T} - \frac{3}{2}\right) + T e^{- 1 / T - 3 / 2} + 2T,
    \]
    which for small $T$ is at least $-1$; furthermore at 1 the derivative is just $T + 2$, which for small $T$ is at most $3$. So we find that $F_T$ is Lipschitz continuous with constant 3 for all small enough $T$.
    
   Split $[\ell]$ into two sets $A, B$:
    let $i \in A$ if $x_i > e^{- 1 / T - 3 / 2}$, and let $i \in B$ otherwise. Since $\ell e^{-1 / T - 3 / 2} < 1$, we have that $|A| \geq 1$. Now consider two cases.
    \begin{description}
    \item[Summing over $\bm{A}$] Observe that $\sum_{i \in A} x_i = 1 - \sum_{i \in B} x_i$. Note that since $F_T$ is convex above $e^{-1 / T - 3 / 2}$, the minimum value of $\sum_{i \in A} F_T(x_i)$ when $\sum_{i \in A} x_i$ is fixed is attained when all $x_i$ for $i \in A$ are equal. Hence
    \begin{multline*}
    \sum_{i \in A} F_T(x_i) \geq 
    |A| F_T\left(\frac{1 - \sum_{i \in B}x_i}{|A|} \right)
    \geq |A| F_T(1 / |A|) - 3 \frac{\sum_{i \in B} x_i}{|A|} \\
    \geq |A| F_T(1 / |A|) - 3 \ell e^{-1 / T - 3 / 2}
    \end{multline*}
    by Lipschitz continuity, and since $x_i \leq e^{-1 / T - 3 /2 }$ when $i \in B$. Observe that this quantity is just
    \[
    \frac{1 - T \ln |A|}{|A|} - 3 \ell e^{- 1 / T - 3 / 2} \geq \frac{1 - T \ln |A|}{|A|} - 3 \ell e^{-1/T}.
    \]
    Now consider the function
    \[
    a \mapsto \frac{1 - T \ln a}{a}.
    \]
    The first derivative of this is
    \[
    \frac{T \ln a - T - 1}{a^2},
    \]
    For $a \leq \ell \leq e^{1 / T}$, we have that this derivative is negative. Thus the function is minimised when $a$ is as large as possible i.e.~$|A| = \ell$. Hence, we find that.
    \[
    \sum_{i \in A} F_T(x_i) \geq \frac{1}{\ell} - \frac{T \ln \ell}{\ell} - 3 \ell e^{-1 / T }.
    \]
    \item[Summing over $\bm{B}$] For $x_i \leq e^{-1/T - 3/2}$ we have that that $F_T(x_i)$ is decreasing, with its minimum at $e^{-1/T - 3/2}$. Note that this minimum is
    \[
    F_T(e^{-1/T-3/2}) = -\frac{3T e^{-2/T - 3}}{2} \geq -e^{-1/T},
    \]
    for small enough $T > 0$. Hence, summing over $B$, we find that
    \[
    \sum_{i \in B} F_T(x_i) \geq -\ell e^{-1 / T}.
    \]
    \end{description}
    Thus we conclude that
    \[
    \sum_{i = 1}^\ell F_T(x_i) \geq \frac{1}{\ell} - \frac{T \ln \ell}{\ell} - 4 \ell e^{- 1 / T}.\qedhere
    \]
\end{proof}

The proof of Proposition~\ref{prop:bound} given below follows the proof of~\cite[Proposition 12]{KKMO07} and of~\cite[Propositoin 3.21]{Guruswami13:toc}.

We also use the following notation from~\cite{KKMO07}: let ${[x]}^+ = \max(x, 0)$.

\begin{proof}[Proof of Proposition~\ref{prop:bound}]
  Observe that, as in the proof of Proposition~\ref{prop:boundskl}, we need only prove this result for large enough $k$. Thus assume $k$ is large. Fix $\tau, d$ to be those numbers given by Theorem~\ref{thm:modifiedMOO} for $k, \rho = c / (k - 1), \epsilon = 1 / \ell^3$.

We will actually prove that
\begin{equation}\label{eq:conclusion}
\langle f, T^{\otimes r} f \rangle
    \geq
    \frac{1}{\ell} - \frac{2 c \ln \ell}{(k-1) \ell} - C \frac{2c\ln \ell}{\ell} \frac{\ln \ln k}{(k - 1) \ln k} - 4\ell e^{-k / 3c} - D \ell e^{-\sqrt[3]{k - 1}} - \frac{1}{\ell^2}
\end{equation}
for some absolute constants $C, D > 0$.
Since $\ell \leq M(k) = o(e^{\sqrt[3]{k}})$, $2 c \ln \ell / (k - 1) \ell = 2c\ln \ell / k\ell + 2 c\ln \ell / k(k-1)\ell = 2 c\ln \ell / k\ell + o(\ln \ell / k \ell)$, and also $1 / \ell^2 \leq 1 / k\ell = o(\ln \ell / k\ell)$, our conclusion that
\[
\langle f, T^{\otimes r} f \rangle
    \geq \frac{1}{\ell} - \frac{2c \ln \ell}{k\ell} - 
    o\left(\frac{\ln \ell}{k \ell}\right).
\]
follows immediately.

Define $f^1, \ldots, f^\ell : {[k]}^r \to \Delta_\ell$ by $f(\vx) = (f^1(\vx), \ldots, f^\ell(\vx))$, and define $\mu_i = \Exp_\vx [f^i(\vx)]$. Since $\sum_i f^i(\vx) = 1$, by linearity we have that $\sum_i \mu_i = 1$.
By Lemma~\ref{lem:Sanglerelation},
\[
\langle f^i, T^{\otimes r} f^i \rangle \geq \mu_i^2- \S_{c / (k - 1)}(f^i).
\]
Furthermore, since the codomain of $f^i$ contains only nonnegative numbers we have that $\langle f^i, T^{\otimes r} f^i \rangle \geq 0$. Hence
\[
\langle f^i, T^{\otimes r} f^i \rangle \geq {\left[2\mu_i^2 - \S_{c / (k-1)}(f^i)\right]}^+,
\]
and summing over $\ell$, we have
\begin{equation}\label{eq:first}
\langle f, T^{\otimes r} f \rangle =  \sum_{i = 1}^\ell \langle f^i, T^{\otimes r} f^i \rangle \geq \sum_{i = 1}^\ell {\left[2\mu_i^2 - \S_{c / (k-1)}(f^i)\right]}^+.
\end{equation}
Our goal will now be to prove the inequality
\begin{equation}\label{eq:ineq}
\sum_{i = 1}^\ell {\left[2\mu_i^2 - \S_{c / (k-1)}(f^i)\right]}^+ \geq
\left(\sum_{i = 1}^\ell \mu_i^2 - \frac{2c\mu_i^2 \ln(1 / \mu_i)}{k - 1} \left(1 + C \frac{\ln \ln k}{\ln k}\right)\right) - D\ell e^{-\sqrt[3]{k-1}} - \frac{1}{\ell^2}
\end{equation}
for some absolute constants $C, D > 0$. We take the convention that $\mu_i^2 \ln(1 / \mu_i) = 0$ when $\mu_i = 0$.

We now split the integers $[\ell]$ into two sets $A, B$. 
Let $i \in A$ if $c / (k - 1) \leq 1 / \ln^3(1 / \mu_i)$ i.e.~$e^{-\sqrt[3]{(k - 1) / c}} \leq \mu_i$, and let $i \in B$ otherwise. We will prove~\eqref{eq:ineq} first on $A$, then on $B$, then sum. We will fix $C$ when looking at $A$, then fix $D$ depending on $C$ when looking at $B$.
\begin{description}
\item[Summing over $\bm{A}$.] We wish to prove that
\[
\sum_{i \in A} 
{\left[ 2\mu_i^2 - \S_{c / (k-1)}(f^i)\right]}^+ \geq
\left( \sum_{i \in A} \mu_i^2 - \frac{2c\mu_i^2 \ln(1 / \mu_i)}{k - 1} \left(1 + C \frac{\ln \ln k}{\ln k}\right) \right) - \frac{1}{\ell^2}
\]
for some value of $C$ that does not depend on $c, \mu_i, k$. 
First, by Theorem~\ref{thm:modifiedMOO} with $\epsilon = 1 / \ell^3$, we find that
\[
    \S_{c / (k - 1)}(f^i) \leq \Lambda_{c / (k - 1)}(\mu_i) + \frac{1}{\ell^3}\\
\]
If any $\mu_i = 1$, then the bound we want holds immediately (for large enough $k$), as then $\Lambda_{c / (k - 1)}(\mu_i) = 1$, and all other $\mu_j = 0$. Thus suppose $\mu_i < 1$. Note that by Proposition~\ref{prop:bigmu}, each term for $i \in A$ within the sum from the right-hand side of~\eqref{eq:first} contributes (for large $k$) at least $\mu_i^2 - 3c / (k - 1) - 1 / \ell^3$ to the sum. If there exists some $\mu_i > (1 / k)^{1 / 10}$, say, then these values are large enough (for large $k$) to make~\eqref{eq:conclusion} hold automatically (since for large enough $k$ we have that $(1 / k)^{1 / 5}$ is larger than $O(1 / k)$). Thus we can assume that $\mu_i \leq (1 / k)^{1 / 10}$ for all $i \in A$ i.e.~all $\mu_i$ are small for $i \in A$.

Since $c / (k - 1) \leq 1 / \ln^3(1 / \mu_i)$ and all $\mu_i$ are small when $i \in A$, we apply Theorem~\ref{thm:estimate} to find that
\begin{multline*}
    \S_{c / (k - 1)}(f^i) \leq \Lambda_{c / (k - 1)}(\mu_i) + \frac{1}{\ell^3}\\
    \leq \mu_i \left(
    \mu_i + \frac{2 c \mu_i \ln(1 / \mu_i)}{k - 1}\left(1 + O\left(\frac{\ln \ln(1 / \mu_i) + \ln \ln (k - 1) / c}{\ln(1 / \mu_i)} \right)\right)\right) + \frac{1}{\ell^3}.
\end{multline*}
We observe that as $e^{- \sqrt[3]{(k - 1) / c}} \leq \mu_i \leq (1 / k)^{1 / 10}$ for $i \in A$ there exists some constant $C$ such that for large $k$ this quantity is bounded by
\[
    \mu_i \left(
    \mu_i + \frac{2 c \mu_i \ln(1 / \mu_i)}{k - 1}\left(1 + C \frac{\ln \ln k}{\ln k} \right)\right) + \frac{1}{\ell^3}.
\]
Hence by rearranging the sum, we get that
\begin{multline*}
\sum_{i \in A} {\left[ 2\mu_i^2 - \S_{c / (k -1 )}(f^i)\right]}^+ \geq
\sum_{i \in A} \mu_i^2 - \frac{2c\mu_i^2 \ln(1 / \mu_i)}{k - 1} \left(1 + C \frac{\ln \ln k}{\ln k}\right) - \frac{1}{\ell^3} \\
\geq \left( \sum_{i \in A} \mu_i^2 - \frac{2c \mu_i^2 \ln(1 / \mu_i)}{k - 1} \left(1 + C \frac{\ln \ln k}{\ln k}\right)\right) - \frac{1}{\ell^2}.
\end{multline*}
\item[Summing over $\bm{B}$.]
We wish to prove that
\[
\sum_{i \in B} \left[2\mu_i^2 - \S_{c / (k-1)}(f^i)\right]^+ \geq
\underbrace{\left(\sum_{i \in B} \mu_i^2 - \frac{2 c \mu_i^2 \ln(1 / \mu_i)}{k - 1} \left(1 + C \frac{\ln \ln k}{\ln k}\right)\right)}_S - D\ell e^{-\sqrt[3]{k-1}}.
\]
Note that the left-hand side is nonnegative; furthermore, every term in $S$ is, for large $k$, at most some universal constant times $e^{-2\sqrt[3]{(k-1) / c}}$. By assumption $c \leq 4$, so $e^{-2 \sqrt[3]{(k-1)/c}} \leq e^{-\sqrt[3]{k - 1}}$. Thus setting $D$ large enough makes the inequality true.
\end{description}
Now, by adding the bound when summing over $A$ and $B$, we get that~\eqref{eq:ineq} is true. Combined with~\eqref{eq:ineq} with~\eqref{eq:first}, what we must now show to prove~\eqref{eq:conclusion} is:
\begin{equation}\label{eq:goal}
\sum_{i = 1}^\ell \mu_i^2 - \frac{2c \mu_i^2 \ln(1 / \mu_i)}{k - 1} \left(1 + C \frac{\ln \ln k}{\ln k}\right)
\geq
\frac{1}{\ell} - \frac{2 c \ln \ell}{(k-1) \ell} - C \frac{2 c \ln \ell}{\ell} \frac{\ln \ln k}{(k - 1) \ln k} - 4 \ell e^{- k / 3c}.
\end{equation}
Recall the function $F_T(x)$ from Lemma~\ref{lem:finalPart}.
We observe that~\eqref{eq:goal} is equivalent to 
\[
\sum_{i = 1}^\ell F_T(\mu_i) \geq
\frac{1}{\ell} - \frac{2 c \ln \ell}{(k-1) \ell} - C \frac{2 c \ln \ell}{\ell} \frac{\ln \ln k}{(k-1) \ln k},
\]
where
\[
T = \frac{2c}{k-1}\left(1 + C \frac{\ln \ln k}{\ln k}\right) \geq 0.
\]
For large $k$, we have that $T \leq 3c / k$. Hence for large $k$ we have $T$ arbitrarily small; furthermore, by assumption $\ell < M(k) = o(e^{\sqrt[3]{k}}) = o(e^{k / 3c}) = o(e^{1 / T})$. Thus, for large $k$, we have that $\ell < e^{1 / T}$. Thus, applying Lemma~\ref{lem:finalPart}, we have simply that
\[
\sum_{i = 1}^\ell F_T(\mu_i) \geq \frac{1}{\ell} - \frac{\ln \ell}{\ell}
\left(\frac{2c}{k-1}\left(1 + C \frac{\ln \ln k}{\ln k}\right) \right) - 4 \ell e^{- k / 3c},
\]
which by rearranging yields~\eqref{eq:goal}.
\end{proof}

\subsection{Proof of Proposition~\ref{prop:reduction}}

We will prove the following hardness fact.

\propreduction*

The construction here is very standard, and essentially identical to that in~\cite[Section 11.4]{KKMO07} or~\cite[Section 3.4]{Guruswami13:toc}. We will express our results in a more algebraic way, though, rather than using the language of PCP verifiers.

\begin{proof}[Proof of Proposition~\ref{prop:reduction}]

Fix $p \in \{1, 2\}$. Henceforth let $\overline{\cdot}, \underline{\cdot}$ be $\overline{\cdot}^{(p)}$ and $\underline{\cdot}_{(p)}$ respectively.
We will consider some value $\eta$ that depends on $p, \epsilon$; at the end of the
proof we will fix $\eta$ small enough for everything to follow. By assumption,
there exists an $r \in \mathbb{N}$ such that, given a
left-regular label cover instance $I = (V = \Va \cup \Vb, E, \pi_*)$, with
$p$-to-1 constraints,
it is \NP-hard to decide whether there exists a solution with value at least $1 -
\eta$, or all solutions have value at most $\eta$.

We observe that a constraint is a $p$-to-1 function from $[pr]$ to $[r]$; such a function can be written as a composite between a permutation $\pi : [pr] \to [pr]$ and the function $\sigma : [pr] \to [r]$, given by $\sigma(1) = \ldots = \sigma(p) = 1, \sigma(p + 1) =  \ldots = \sigma(2p) = 2$  i.e.~$\sigma(x) = \lceil x / p \rceil$. Thus we assume that the constraint that corresponds to the edge $(a, b)$ is given by $\sigma \circ \pi_{a, b}$.

We will reduce
this instance in polynomial time to an instance of maximum $k$- vs.
$\ell$-colouring, namely a graph $G = (V', E')$, and then prove the completeness
  and soundness of the reduction. 

\paragraph*{Reduction} 
For every variable $b \in \Vb$, introduce a set of variables in our graph in the following way. For every vector $\vx \in {[k]}^{pr}$, we introduce a vertex $v_b(\vx)$. Thus our graph $G$ will have the vertex set
\[
V' = \{ v_b(\vx) \mid b \in \Vb, \vx \in {[k]}^{pr} \},
\]
with $k^{pr} |\Vb|$ vertices. As for the edges, consider every pair of edges $(a,
  b), (a, b') \in E$. For such a pair, for every $\vx, \vy \in [k]^{pr}$, we add in an edge between $(\vx^{\pi_{a, b}})$ and $v_{b'}(\vx^{\pi_{a, b'}})$ exactly
  $L^{r} T^{\otimes r}(\overline{\vx} \leftrightarrow \overline{\vy}) \leq L^r$ times. This is well defined since $\overline{\vx}, \overline{\vy} \in [k^p]^r$, and $T^{\otimes r}$ can be seen as a Markov chain over $[k^p]^r$. Furthermore each probability in $T$ is a nonnegative integer multiple of $L$, hence we never add non-integer or negative numbers of edges between vertices.
  
Throughout the following, we will let $f_a(\vx)$ denote the colour of $v_a(\vx)$. Observe that this reduction works in polynomial time, since $L, k, p, r$ are constants.

\paragraph*{Completeness}
Suppose there exists a solution to $I$, say $c : V \to [r]$, that satisfies a $1 - \eta$ fraction of constraints. Consider the following $k$-colouring of $G$: let $f_v(x_1, \ldots, x_r) = x_{c(v)}$. Now, we must compute the value of this colouring. Suppose $\val(a)$ is the proportion of edges incident to $a \in \Va$ solved by $c$. Since the instance $I$ is regular on $\Va$, we have that the value of the instance is $\Exp[ \val(a)] \geq 1 - \eta$, where $a$ is drawn uniformly at random from $\Va$.

What fraction of the edges in $E'$ were constructed due to a pair of edges $(a,
  b), (a, b') \in E$ which are both satisfied by $c$? Since the degree of all
  vertices in $\Va$ is equal, and thus the same number of edges is added for
  each $a$, this is equivalent to asking ``what is the probability that, if we
  select $a \in \Va$ and $b, b' \in \Vb$ incident to $a$ uniformly and
  independently at random, then the edges $(a, b), (a, b')$ are solved by $c$''.
  But observe that this probability, for fixed $a$, is
  at least $1 - 2(1 - \val(a)) = 2\val(a) - 1$. By linearity of expectation, the required probability is thus at least $\Exp[2 \val(a) - 1] \geq 2(1 - \eta) - 1 = 1 - 2\eta$.

Now, note that every edge $(v_b(\vx^{\pi_{a, b}}), v_{b'}(\vy^{\pi_{a, b'}})) \in E'$ where $(a, b), (a, b')$ are solved by $c$ will also be properly coloured by $c'$. To see why, note that $f_b(\vx^{\pi_{a, b}}) = f_b(x_{\pi_{a, b}(1)}, \ldots, x_{\pi_{a, b}(pr)}) = x_{\pi_{a, b}(c(b))}$, and likewise $f_{b'}(\vy^{\pi_{a, b'}}) = y_{\pi_{a, b'}(c(b'))}$. Defining $i = \pi_{a, b}(c(b)), j = \pi_{a, b'}(c(b'))$, these are $x_i$ and $y_j$. But since the edges $(a, b), (a, b')$ are solved, we have  $\sigma(i) = c(a) = \sigma(j)$,
or equivalently
\[
i, j \in \{ p c(a), p c(a) + 1, \ldots, p c(a) + p - 1 \}.
\]

Since we have added in the edge $(v_b(\vx^{\pi_{a, b}}), v_{b'}(\vy^{\pi_{a, b'}}))$, the transition between $\overline{\vx}$ and $\overline{\vy}$ in $T^{\otimes r}$ has nonzero probability. Thus, the transition between $(x_{p c(a)}, \ldots, x_{p c(a) + p - 1})$ and $(y_{p c(a)}, \ldots, y_{p c(a) + p - 1})$ has nonzero probability in $T$; by the colourfulness of $T$,
\[
\{ x_{pc(a)}, \ldots, x_{pc(a) + p - 1} \} \cap \{ y_{pc(a)}, \ldots, y_{pc(a) + p - 1} \} = \emptyset.
\]
Thus $x_i \neq y_i$. Hence the edge is properly coloured, as
\[
f_b(\vx^{\pi_{a, b}}) 
=
x_{\pi_{a, b}(c(b))}
= x_{i}
\neq y_j
= y_{\pi_{a, b'}(c(b'))}
= f_{b'}(\vy^{\pi_{a, b'}}).
\]
Thus the resulting graph has a $k$-colouring of value $1 - 2\eta$; taking $2\eta < \epsilon$ thus implies completeness.

\paragraph*{Soundness}
Suppose that the graph $G$ has an $\ell$-colouring of value at least $\beta +
  \epsilon$; call it $c$. As opposed to the completeness case, let $\val(a)$
  denote the proportion of edges in $G$ added due to edges $(a, b), (a, b') \in
  E$ that are properly coloured. Having fixed such an $a$, note that every choice of $\vx \in [k^p]^r$ induces the same number of edges, and for this $\vx$ every choice of $\vy$ induces a number of edges proportional to $T^{\otimes}(\vx \leftrightarrow \vy)$. Hence
  \[
  \val(a) = \Exp_{\substack{\overline{\vx} \in [k^p]^r \\ \overline{\vy} \sim T^{\otimes r}(\overline{\vx})}} [ f_a(\vx^{\pi_{a, b}}) \neq f_b(\vy^{\pi_{a, b}}) ].
  \]

  Observe that since all vertices in $\Va$ have
  the same degree, similarly the same number of edges of the form $(a, b), (a,
  b') \in E$ exist for every $a \in \Va$; hence the value of $c$ can be
  expressed as $\Exp[ \val(a) ] \geq \beta + \epsilon$, where $a$ is drawn
  uniformly at random from $\Va$. Now apply Markov's inequality to $1 - \val(a)$
  to find that the probability that $1 - \val(a) \geq 1 - \beta$,
  i.e., $\val(a) \leq \beta$ is at most $(1 - \beta - \epsilon) / (1 - \beta) = 1 - \epsilon / (1 - \beta) \leq 1 - \epsilon$. Hence the probability that $\val(a) > \beta$ is at least $\epsilon$. Let $S \subseteq \Va$ be the set of $a \in \Va$ for which $\val(a) \geq \beta$; we have that $|S| \geq \epsilon |\Va|$.

Our goal will be to assign, to each vertex in $S \cup \Vb$, at most $C = \max (\lceil 2pd / \tau \rceil, 1)$ possible values; if $c(a) \subseteq [r]$ for $a \in S$ and $c(b) \subseteq [pr]$ for $b \in \Vb$ is the set of values, we will want
\begin{equation}\label{eq:intersections}
    c(a) \cap \sigma(\pi_{a, b}(c(b))) \neq \emptyset
\end{equation}
for at least a $\tau / 2p$ fraction of the edges $(a, b) \in E$ \emph{for any fixed $a \in S$}. Then, by randomly selecting a value for $v$ from $c(v)$, we find that each of these $\tau / 2p$ fraction of edges of the form $(a, b) \in E$ for any fixed $a \in S$ are satisfied with probability at least $1 / C^2$. Since $I$ is regular on $\Va$ and $|S| \geq \epsilon |\Va|$, this implies that this solution (if extended arbitrarily to $\Va \setminus S$) has value at least $\epsilon \tau / 2pC^2$.
Taking $\eta$ small enough so that $\epsilon \tau / 2pC^2 \geq \eta$ is then enough to prove soundness.

Note that $[ f_b(\vx) \neq f_{b'}(\vy) ] = 1 - f_b(\vx)\cdot f_{b'}(\vy)$, if we see the codomain $[\ell]$ of $f_b$ as being embedded within $\Delta_\ell$; fixing some $a \in S$, we then observe that
\begin{multline*}
\beta < \val(a) = \Exp_{\substack{(a, b), (a, b') \in E\\ \overline{\vx} \in {[k^p]}^{r} \\ \overline{\vy} \sim T^{\otimes r}(\overline{\vx})}} \left[ 1 - f_b(\vx^{\pi_{a, b}}) \cdot f_{b'}(\vy^{\pi_{a, b'}})\right] \\
=
1 - \Exp_{\substack{\overline{\vx} \in [k^p]^r \\ \overline{\vy} \sim T^{\otimes r}(\overline{\vx})}} \left[
\Exp_{(a, b), (a, b') \in E} 
\left[ f_b(\vx^{\pi_{a, b}}) \cdot  f_{b'}(\vy^{\pi_{a, b'}}) \right]\right].
\end{multline*}
Observe that the inner product above is between two independent variables, hence the expression is equal to
\[
1 - \Exp_{\substack{\overline{\vx} \in [k^p]^r \\ \overline{\vy} \sim T^{\otimes r}(\overline{\vx})}} \left[
{\Exp_{(a, b) \in E} 
\left[f_b(\vx^{\pi_{a, b}})\right]} \cdot \Exp_{(a, b) \in E} \left[ f_{b}(\vy^{\pi_{a, b}})\right]\right].
\]
Now, define $g_a(\vx) = \Exp_{(a, b) \in E} \left[f_b(\vx^{\pi_{a, b}}) \right]$ i.e.~$g_a = \Exp_{(a, b) \in E} [ f_b^{\pi_{a, b}} ]$. Hence, by substituting $\vx, \vy$ with $\underline{\vx}, \underline{\vy}$, we get that the expression is
\[
1 - \Exp_{\substack{\overline{\vx} \in [k^p]^r \\ \overline{\vy} \sim T^{\otimes r}(\overline{\vx})}} \left[
g_a(\vx) \cdot g_a(\vy) \right]
=
1 - \Exp_{\substack{\vx \in [k^p]^r \\ \vy \sim T^{\otimes r}(\vx)}} \left[
\overline{g_a}(\vx) \cdot \overline{g_a}(\vy) \right].
\]
Now, separating the choice of $\vx$ from $\vy$, and by linearity, this is
\[
1 - \Exp_{\vx \in [k^p]^r} \left[
\overline{g_a}(\vx) \cdot \Exp_{\vy \sim T^{\otimes r}(\vx)} \overline{g_a}(\vy) \right]
=
1 - \Exp_{\vx \in [k^p]^r} \left[
 \overline{g_a}(\vx)\cdot \left((T^{\otimes r} \overline{g_a})(\vx) \right ) \right]
=
1 - \langle \overline{g_a},T^{\otimes r} \overline{g_a} \rangle.
\]

Hence, $\langle \overline{g_a}, T^{\otimes r} \overline{g_a} \rangle < 1 -
\beta$ for $a \in S$. By assumption we must have $\Inf_i^{\leq
d}(\overline{g_a}) > \tau$ for at least one $i \in [r]$. By
Corollary~\ref{corr:relabelling}, for at least one $i \in [pr]$ we must have
$\Inf_i^{\leq d}(g_a) > \tau / p$. Let $c(a) = \{ i \}$; thus $|c(a)| \leq C$. Thus we have labelled $S$. For $b \in \Vb$, define
\[
c(b) = \{i \in [r] \mid \Inf_i^{\leq d} (f_b) \geq \tau / 2p\}.
\]
Note that $\sum_{i = 1}^r \Inf_i^{\leq d}(f_b) \leq d$, and furthermore $\Inf_i^{\leq d}(f_b)\geq 0$ as $f_b$ takes nonnegative values, so $|c(b)| \leq 2pd / \tau\leq C$.

Now, we must prove that this mapping $c$ satisfies~\eqref{eq:intersections} for
any fixed $a \in S$, and for a $\tau / 2p$ fraction of edges of the form $(a, b) \in E$. Fixing $a \in S$, letting $g = g_a$, $\{ i \} = c(a)$, we observe first that
\[
\hat{g} = \widehat{\Exp_{(a, b) \in E}[ f_b^{\pi_{a, b}} ]} = \Exp_{(a, b) \in E} [ \widehat{f_b^{\pi_{a, b}}} ],
\]
by linearity of $\hat{\cdot}$ and hence $\hat{g}(\vx) = \Exp_{(a, b) \in E} [ \widehat{f_b^{\pi_{a, b}}}(\vx)]$. Thus
\begin{multline*}
\tau/p
< \Inf_{i}^{\leq d}(g)
= \sum_{\substack{\vx \in [k]^r \\ x_i \neq 1 \\ |\vx| \leq d}} \left|\hat{g}(\vx)\right|^2
= \sum_{\substack{\vx \in [k]^r \\ x_i \neq 1 \\ |\vx| \leq d}} \left|\Exp_{(a, b) \in E} [\widehat{ f_b^{\pi_{a, b}}}(\vx)] \right|^2
\leq
\sum_{\substack{\vx \in [k]^r \\ x_i \neq 1 \\ |\vx| \leq d}} \Exp_{(a, b) \in E} \left[\left|\widehat{ f_b^{\pi_{a, b}}}(\vx) \right|^2\right] \\
= \Exp_{(a, b) \in E} \left[ \sum_{\substack{\vx \in [k]^r \\ x_i \neq 1 \\ |\vx| \leq d}} \left|\widehat{ f_b^{\pi_{a, b}}}(\vx) \right|^2\right]
= \Exp_{(a, b) \in E}\left[\Inf_{i}^{\leq d}(f_b^{\pi_{a, b}})\right],
\end{multline*}
where all the inequalities follow by linearity or convexity. We now apply
Markov's inequality to $\max(\Inf_i^{\leq d}(f_b^{\pi_{a, b}})) - \Inf_i^{\leq
d}(f_b^{\pi_{a, b}})$. For a $\tau / 2p$ fraction of the edges $(a, b) \in E$ we have
\[
\Inf_i^{\leq d}(f_b^{\pi_{a, b}}) \geq \tau / 2p.
\]
Now note that for such $b$,
\begin{multline*}
\tau / 2p \leq \Inf_i^{\leq d}(f_b^{\pi_{a, b}})
=
\sum_{\substack{\vx \in [k]^r \\ x_i \neq 1 \\ |\vx| \leq d}} \left|\widehat{ f_b^{\pi_{a, b}}}(\vx) \right|^2\
=
\sum_{\substack{\vx \in [k]^r \\ x_i \neq 1 \\ |\vx| \leq d}} \left|\hat{f_b}({\vx^{\pi_{a, b}}}) \right|^2 \\
=
\sum_{\substack{\vy \in [k]^r \\ y_{\pi^{-1}_{a, b}(i)} \neq 1 \\ |\vy| \leq d}} \left|\hat{f_b}({\vy}) \right|^2
= \Inf_{\pi^{-1}_{a, b}(i)}(f_b).
\end{multline*}
where $\vy = \vx^{\pi_{a, b}}$. Hence $\pi^{-1}_{a, b}(i) \in c(b)$, so in particular~\eqref{eq:intersections} holds for this $a$ and for at least a $\tau / 2p$ fraction of the edges $(a, b)$. This concludes the proof.
\end{proof}

\section{AGC-hardness of 1-approximation}\label{app:alpha1}

We show a reduction from AGC to $1$-approximation of $\rho_k(G)$ via $\rho_\ell(G)$. 

\alphaone*
\begin{proof}
Let $\rho = p / q$, $p > 0$. Suppose we are given a graph $G$; we are then asked to decide if it is $k$-colourable or not even $\ell$-colourable. Let $G$ have $m$ edges and let $1$ denote the graph with one vertex and an edge from that vertex to itself. (This notation is justified, since this graph is a unit with respect to the direct product of graphs.) Let $+$ denote disjoint union of graphs. We also allow multiplication of a graph by a scalar in the obvious way. (For example, $3G = G + G + G$.) Then our reduction takes the graph $G$ to the graph $pG + (q - p)m1$.

Note first that the reduction can be done in logarithmic space. For completeness, note that if $G$ is $k$-colourable, then $pG + (q-p)m1$ has a $k$-colouring of value $\rho$, namely the one that colours each of the $p$ disjoint copies of $G$ as in the $k$-colouring of $G$. This colouring correctly colours $pm$ of the $pm + (q-p)m = qm$ edges i.e.~it has value $\rho = pm/qm = p/q$. For soundness, suppose that $pG + (q-p)m1$ has an $\ell$-colouring of value $\rho$. This colouring must correctly colour a $p/q$ fraction of the edges of $pG + (q-p)m1$. Since this graph has $qm$ edges, it must correctly colour $pm$ edges. But the only edges that could possibly be correctly coloured are the ones in $pG$ (since the remaining edges in $(q-p)m1$ are all loops). Furthermore, there are $pm$ edges in $pG$, thus all the edges in $pG$ must be correctly coloured. But this implies that $G$ has an $\ell$-colouring, as required. 
\end{proof}

\bibliographystyle{plainurl}
\bibliography{nz}

\end{document}